\definecolor{equationcolor}{RGB}{222,94,100}
\definecolor{alecolor}{RGB}{238,33,80}
\pgfplotsset{compat=1.18} 
\DeclareFontFamily{U}{mathb}{\hyphenchar\font45}
\DeclareFontShape{U}{mathb}{m}{n}{
	<-6> mathb5 <6-7> mathb6 <7-8> mathb7
	<8-9> mathb8 <9-10> mathb9
	<10-12> mathb10 <12-> mathb12
}{}
\DeclareSymbolFont{mathb}{U}{mathb}{m}{n}
\DeclareMathSymbol{\ggcurly}{\mathrel}{mathb}{"CF}
\def\E{ {\cal E} }
\def\T{ {\cal T} }
\def\P{ {\cal P} } 
\newcommand{\iden}{\mathbbm{1}}
\renewcommand{\v}[1]{\ensuremath{\boldsymbol #1}}
\def\blfootnote{\gdef\@thefnmark{}\@footnotetext}
\theoremstyle{plain}
\newtheorem{thm}{Theorem}
\newtheorem{conj}{Conjecture}
\newtheorem{lem}[thm]{Lemma}
\newtheorem{cor}[thm]{Corollary}
\newtheorem{defn}{Definition}
\newlength\myindent
\begin{document}
	
	\title{Thermal recall: Memory-assisted Markovian thermal processes}
	\date{\today}
	
	\author{Jakub Czartowski$^*$}
	\affiliation{Doctoral School of Exact and Natural Sciences, Jagiellonian University, 30-348 Kraków, Poland}
	\affiliation{Faculty of Physics, Astronomy and Applied Computer Science, Jagiellonian University, 30-348 Kraków, Poland.}
	
	\author{A. de Oliveira Junior$^*$}
	\author{Kamil Korzekwa}
	\affiliation{Faculty of Physics, Astronomy and Applied Computer Science, Jagiellonian University, 30-348 Kraków, Poland.}
	
	\begin{abstract}
		
		We develop a resource-theoretic framework that allows one to bridge the gap between two approaches to quantum thermodynamics based on Markovian thermal processes (which model memoryless dynamics) and thermal operations (which model arbitrarily non-Markovian dynamics). Our approach is built on the notion of memory-assisted Markovian thermal processes, where memoryless thermodynamic processes are promoted to non-Markovianity by explicitly modelling ancillary memory systems initialised in thermal equilibrium states. Within this setting, we propose a family of protocols composed of sequences of elementary two-level thermalisations that approximate all transitions between energy-incoherent states accessible via thermal operations. We prove that, as the size of the memory increases, these approximations become arbitrarily good for all transitions in the infinite temperature limit, and for a subset of transitions in the finite temperature regime. Furthermore, we present solid numerical evidence for the convergence of our protocol to any transition at finite temperatures. We also explain how our framework can be used to quantify the role played by memory effects in thermodynamic protocols such as work extraction. Finally, our results show that elementary control over two energy levels at a given time is sufficient to generate all energy-incoherent transitions accessible via thermal operations if one allows for ancillary thermal systems.
	\end{abstract}
	
	\maketitle
	
	
	\section{Introduction}
	\blfootnote{$ ^*$\hspace{1pt}These authors contributed equally to this work.}
	
	Information has become ubiquitous in thermodynamics. It all started with Maxwell's seminal inquiry~\cite{maxwell1872theory}: \emph{what would happen if we had knowledge of a system's state?} The ramifications of this hypothesis led to potential violations of the second law of thermodynamics and a century-long puzzle~\cite{Szilard1929, brillouin1951maxwell}. Ultimately, it was found that thermodynamics imposes physical restrictions on information processing~\cite{Landauer1961, Bennett1982}, resulting in the development of frameworks devoted to incorporating information into thermodynamics~\cite{maruyama2009colloquium,seifert2012stochastic,sagawa2012thermodynamics,Goold2016,binder2018thermodynamics, Deffner2019}. A crucial concept at the intersection between these two fields is \emph{memory}, a thermodynamic resource for storing, processing, and erasing information. In particular, memory effects can bring numerous advantages, including enhanced cooling~\cite{taranto2020exponential}, generation of entanglement~\cite{mirkin2019entangling,mirkin2019information} or improved performance of heat engines and refrigerators~\cite{PhysRevA.99.052106,PhysRevA.102.012217,PhysRevE.106.014114}. However, realistic quantum mechanical systems are open and governed by non-unitary time evolution, which encompasses the irreversible phenomena such as energy dissipation, relaxation to thermal equilibrium or stationary non-equilibrium states, and the decay of correlations~\cite{breuer2002theory,rivas2012open}. Hence, assumptions like weak coupling, large bath size, and fast decaying correlations are commonly made in modelling such systems, thus neglecting memory effects. This raises the question of how memoryless processes get modified when system-bath memory effects become non-negligible, i.e., how to assess and quantify the role of memory in thermodynamic processes~\cite{rivas2014quantum}. 
	
	The resource theory of thermodynamics~\cite{Janzing2000,horodecki2013fundamental, brandao2015second,Lostaglio2019} is 
	a relatively recent framework allowing one to address foundational questions in thermodynamics. By relying on the notion of thermal operations~\cite{Janzing2000,horodecki2013fundamental}, a set of transformations that can be carried out without an external source of work or coherence, it offers a complete set of laws for characterising general state transformations under thermodynamic constraints. The downsides of this formalism are twofold. Firstly, it focuses only on snapshots of the evolution, making it hard to discuss how the processes are realised in time. Secondly, it may require precise control over the system and the bath. The first problem was addressed by developing a hybrid framework that reconciles resource theory and master equation approaches~\cite{lostaglio2021continuous,korzekwa2022}, where the concept of a Markovian thermal process was introduced. This new set of operations refines the notion of thermal operations by encoding all relevant constraints of a Markovian evolution. The second problem was partially addressed in Ref.~\cite{Lostaglio2018elementarythermal} by introducing the concept of elementary thermal operations, i.e., a subset of transformations that can be decomposed into a series of thermal operations, each acting only on two energy levels of the system. Such decompositions offer a method to bypass the need for a complete control over interactions between the system and the environment, and the approach was recently generalised to also include catalytic transformations~\cite{Jeongrak2022}. While elementary operations require only a limited control, they still rely on non-Markovian effects, and so the question of quantifying memory effects in the resource theory of thermodynamics remains open.
	
\begin{figure*}
\centering
\includegraphics{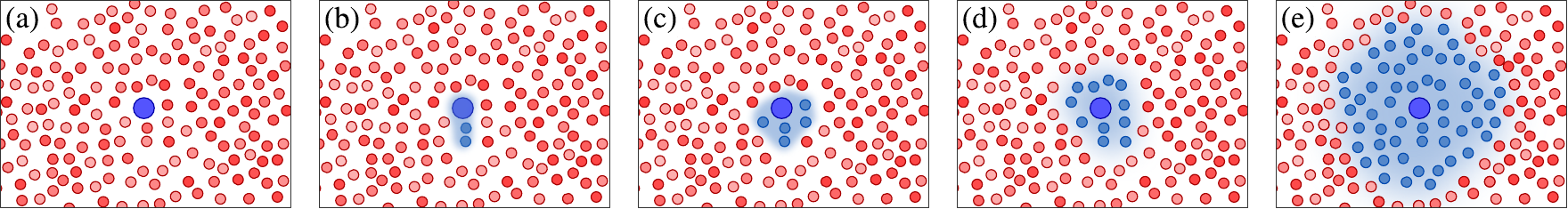}
\caption{\textbf{Memory-assisted Markovian thermal processes}. Schematic representation of the general setting. (a) Initially, the main system (large blue circle) is coupled to a heat bath at inverse temperature $\beta$ (small red circles) and their interaction is Markovian, so that the bath is in thermal equilibrium at each moment in time. (b)-(e) Then, the control is extended to parts of the environment (small blue circles with blue background) that do not instantaneously thermalise to equilibrium after interactions with the system, and can thus lead to non-Markovian dynamics of the main system.}
\label{Fig:schematic_representation.pdf}
\end{figure*}
	
	In this work, we make a step towards bridging the gap between thermal operations and Markovian thermal processes for energy-incoherent states by introducing and investigating \emph{memory-assisted Markovian thermal processes} (MeMTPs). These are defined by extending the Markovian thermal processes framework~\cite{lostaglio2021continuous,korzekwa2022} with ancillary memory systems, allowing one to interpolate between memoryless dynamics and the one with full control. More specifically, we demonstrate that energy-incoherent states achievable from a given initial state via thermal operations can be approached arbitrarily well by repeatedly interacting the main system with a memory that is initialised in a thermal equilibrium state (and therefore is thermodynamically resourceless) via an algorithmic procedure composed of Markovian thermal processes. Physically, this can be seen as a partial control over the bath degrees of freedom, where the bath can be thought of as a large, discrete, collection of smaller thermal units, and one can control the interactions of the main system with a small number of thermal subsystems~(see Fig.~\ref{Fig:schematic_representation.pdf}). 
	
	Following this idea, we introduce a family of memory-extended Markovian thermodynamic protocols that require minimal control and are valid for any temperature regime. In the infinite temperature limit, we prove that our protocol can arbitrarily well simulate any state transition that can be achieved via thermal operations. More precisely, our first main result states that when memory grows and the number of interactions goes to infinity, the full set of states achievable by thermal operations can be reached by MeMTPs. We also provide analytic expressions for the convergence rates, which scale either polynomially with the number of degrees of freedom or exponentially with the number of memory subsystems. Moreover, based on strong numerical evidence, we propose a conjecture for a more accurate approximation of arbitrary state transformation (i.e., converging faster with the growing size of the memory) through sequences of truncated versions of our protocol. Our second main result extends these considerations to the finite-temperature regime. Here, we first show analytic convergence of our MeMTP protocol to a particular subset of state transformations that can be achieved by thermal operations. These include all the so-called $\beta$-swaps, as well as $\beta$-cycles, which form a thermodynamic equivalent of cyclic permutations. Then, based on numerical simulations, we conjecture that actually an arbitrary state reachable via thermal operations can be obtained using a proper sequence of truncated protocols.
	
	With these results at hand, we then proceed to discussing their applicability. First, we explain how to assess the role played by memory in the performance of thermodynamic protocols by investigating work extraction in the intermediate regime of limited memory. We thus interpolate between the two extremes of no memory and complete control, and quantify environmental memory effects with the amount of extractable work from a given non-equilibrium state. Second, we consider the task of cooling a two-level system using a two-dimensional memory characterised by a non-trivial Hamiltonian. This example represents a minimal model requiring the manipulation of two two-level systems. Various experimental proposals are available across distinct platforms suitable for realising this specific model. Such platforms encompass quantum dots~\cite{PhysRevLett.110.256801,PhysRevB.98.045433}, superconducting circuits~\cite{PhysRevB.94.235420,chen2012quantum}, and atom-cavity systems~\cite{mitchison2016realising}. We then clarify that all transitions achievable via thermal operations can be performed using a subset of thermal operations that only affect two energy levels at the same time. This may seem to contradict the results of Refs.~\cite{Lostaglio2018elementarythermal,mazurek2018decomposability}, where it was proven that thermal operations constrained to just two energy levels of the system are not able to generate all thermodynamically allowed transitions. We resolve this apparent contradiction by noting that in our case we require the control over two levels of the joint system-memory state, and not just the system state. Finally, we discuss the behaviour of free energy during a non-Markovian evolution, explaining the role of the memory as a free energy storage. 
	
	The paper is structured as follows. First, in Sec.~\ref{sec:framework} we recall the frameworks of thermal operations and Markovian thermal processes. Next, in Sec.~\ref{sec:bridging}, we introduce the central notion of this paper, the memory-assisted Markovian thermal processes, and then describe the protocol that employs thermal memory states to approximate non-Markovian thermodynamic state transitions with Markovian thermal processes. We then explain how this approximation convergences to the full set of transitions achievable via thermal operations as the size of the memory grows. Sec.~\ref{sec:discussion} contains discussion and application of our results. Finally, in Sec.~\ref{sec:outlook}, we conclude and provide outlook for future research.
	
	
	\section{Framework}
	\label{sec:framework}
	
	In this work our aim is to investigate how memory effects affect thermodynamics of a finite-dimensional quantum system coupled to a heat bath at inverse temperature $\beta = 1/k_B T$ with $k_B$ denoting the Boltzmann constant. The investigated system of dimension $d$ is described by a Hamiltonian \mbox{$H = \sum_i E_i \ketbra{E_i}{E_i}$} and is prepared in an initial state $\rho$. The thermal environment, with a Hamiltonian $H_E$, is assumed to be in a thermal equilibrium state,  
	\begin{equation}
		\label{Eq:thermal_state}
		\gamma_E = \frac{e^{-\beta H_E}}{\tr(e^{-\beta H_E})},
	\end{equation}
	and the system together with the heat bath start initially in an uncorrelated state $\rho\otimes\gamma_E$. We will limit our considerations to \emph{energy-incoherent} states $\rho$, i.e., the ones that commute with the Hamiltonian, $[\rho,H]=0$. In that case, the state of the system can be equivalently described by a probability vector $\v p$ of eigenvalues of $\rho$, corresponding to populations in the energy eigenbasis. The thermal equilibrium state of the system, given by Eq.~\eqref{Eq:thermal_state} with $H_E$ replaced by $H$, is then represented by a vector of thermal populations $\v{\gamma}$. 
	We will denote the populations of generic input and output states by bold lowercase letters, $\v p$ and $\v q$, respectively. When referring to the components of these vectors, we use non-bold symbols with a lower index numbering the components, e.g., the $i$-th component of $\v{p}$ is denoted as $p_i$. The crucial point now is how the dynamics of the system interacting with the thermal bath is modelled. In what follows, we will review two frameworks, which can be seen as extreme cases with or without any memory effects involved.
	
	
	\emph{Thermal operations} (TOs) framework~\cite{Janzing2000,horodecki2013fundamental,Lostaglio2019} uses minimal assumptions on the joint system-bath dynamics by only assuming that the joint system is closed and thus evolves unitarily, and that this unitary evolution is energy-preserving. Formally, a set of thermal operations consists of completely positive trace-preserving maps that transform the state $\rho$ in the following way:
	\begin{equation}
		\label{Eq:thermal_operations}
		\E(\rho)=\Tr_E\left[U\left(\rho\otimes\gamma_E\right)U^{\dagger}\right],
	\end{equation}
	where $U$ is a joint unitary that commutes with the total Hamiltonian of the system and the bath
	\begin{equation}
		[U, H\otimes \iden_E+ \iden\otimes H_E] = 0,
	\end{equation}
	and the environmental Hamiltonian $H_E$ is arbitrary. Since there are no further constraints on $U$, arbitrarily strong correlations can build up between the system and the bath, and one can expect non-Markovian memory effects to come into play. At the same time, from the perspective of control theory, generating an arbitrary TO may require very complex and fine-tuned control over system-bath interactions~\cite{Lostaglio2018elementarythermal}. 
	
	\emph{Markovian thermal processes} (MTPs) framework~\cite{lostaglio2021continuous,korzekwa2022}, on the other hand, uses typical assumptions of the theory of open quantum systems (weak coupling, large bath size, quickly decaying correlations, etc.)~\cite{breuer2002theory}, to argue that the system undergoes an open dynamics described by a Lindblad master equation~\cite{kossakowski1972quantum,gorini1976completely,lindblad1976generators},
	\begin{equation}
		\label{Eq:master_equation}
		\frac{d \rho(t)}{dt}  = -i \qty[H, \rho(t)] + \mathcal{L}_t\qty(\rho(t)).
	\end{equation}
	In the above, $[\cdot,\cdot]$ denotes the commutator and $\mathcal{L}_t$ is the Lindbladian with the following general form:
	\begin{equation}
		\label{eq:lindbladian}
		\mathcal{L}_t(\rho) = \sum_{i} r_i(t) \left[ L_i(t) \rho L_i(t)^\dag - \frac{1}{2}\Bigl\{L_i(t)^\dag L_i(t), \rho\Bigl\}\, \!\right]\!,\!
	\end{equation} 
	with $\{\cdot,\cdot\}$ denoting the anticommutator, $L_i(t)$ being time-dependent jump operators, and $r_i(t)\geq 0$ being time-dependent non-negative jump rates. Moreover, the thermal state of the system is a stationary solution of the dynamics, $\mathcal L_t(\gamma) = 0$, and the Lindbladian $\mathcal L_t$ commutes with the generator of the Hamiltonian dynamics $-i[H,\cdot]$ for all times $t$. Formally, an MTP is then any quantum channel $\E$ that results from integrating Eq.~\eqref{Eq:master_equation} between 0 and $\tau\geq 0$. Since the dynamics generated by an MTP arises explicitly from a Markovian model, there are no memory effects. Also, as shown in Ref.~\cite{lostaglio2021continuous}, the universal set of controls that allows one to generate any incoherent state transformation achievable via MTPs consists only of two-level partial thermalisations. These transform the populations of two energy levels, $i$ and $j$, in the following way 
	\begin{subequations}
		\begin{align}
			\label{eq:partial1}
			p_i &\rightarrow (1-\lambda) p_i + \lambda \frac{p_i+p_j}{\gamma_i+\gamma_j} \gamma_i,\\
			\label{eq:partial2}
			p_j &\rightarrow (1-\lambda) p_j + \lambda \frac{p_i+p_j}{\gamma_i+\gamma_j} \gamma_j ,
		\end{align}
	\end{subequations}
	where $\lambda\in[0,1]$.
	
	\begin{figure}[t]
		\centering
		\includegraphics{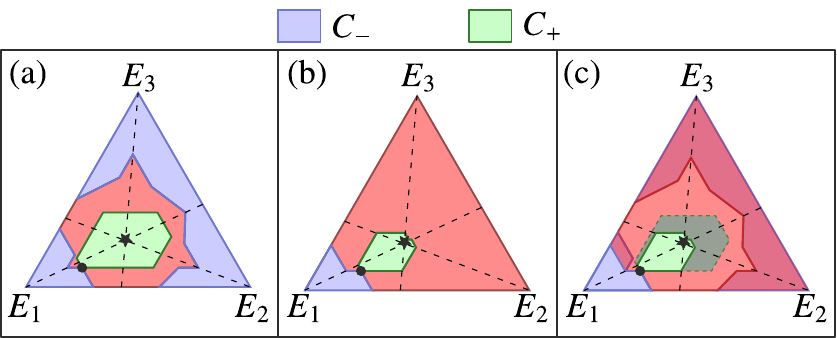}
		\caption{\textbf{Thermal operations vs Markovian thermal processes}. Sets of states that a three-level system with an equidistant energy spectrum $\v{E} = (0,1,2)$ and prepared in an energy-incoherent state $\v p = (0.7,0.2,0.1)$ (depicted by a black dot $\bullet$) can be transformed to [green region $C_+(\v p)$] or transformed from [blue region $C_-(\v p)$] by (a) thermal operations and (b) Markovian thermal processes with respect to inverse temperature $\beta = 0.3$. In (c) we show the overlap of sets of achievable states via TOs (dark green) and MTPs (light green). The thermal state of the system is depicted by a black star at the intersection of the dashed lines.}
		\label{Fig:thermalconevsmarkovianthermalcone}
	\end{figure}
	
	The set of states $C^{\textrm{TO}}_{+}(\v p)$ achievable via thermal operations from a given incoherent initial state $\v{p}$ can be fully characterised using the notion of thermomajorisation~\cite{horodecki2013fundamental} ~(see Appendix~\hyperref[app:thermomajorisation]{A-2} for more details). The so-called future thermal cone $C^{\textrm{TO}}_{+}(\v p)$~\cite{deoliveirajunior2022} is a convex set that consists of at most $d!$ extreme points, the construction of which was given in Lemma~12 of Ref.~\cite{Lostaglio2018elementarythermal} (see Fig.~\hyperref[Fig:thermalconevsmarkovianthermalcone]{\ref{Fig:thermalconevsmarkovianthermalcone}a} for an example with a three-level system). On the other hand, the set of states $C^{\textrm{MTP}}_{+}(\v p)$ achievable via Markovian thermal processes from a state $\v{p}$ was recently characterised using the notion of continuous thermomajorisation~\cite{lostaglio2021continuous}~(see Appendix~\hyperref[app:continuousmajorisation]{A-3} for an extended discussion). The future Markovian thermal cone $C^{\textrm{MTP}}_{+}(\v p)$ is not convex (as illustrated in Fig.~\hyperref[Fig:thermalconevsmarkovianthermalcone]{\ref{Fig:thermalconevsmarkovianthermalcone}b} for a three-level system), but Theorem~4 of Ref.~\cite{lostaglio2021continuous} provides a construction of its extreme points using sequences of two-level full thermalisations (i.e., transformations from Eqs.~\eqref{eq:partial1}-\eqref{eq:partial2} with $\lambda=1$). As can be seen in Fig.~\hyperref[Fig:thermalconevsmarkovianthermalcone]{\ref{Fig:thermalconevsmarkovianthermalcone}c}, $C^{\textrm{MTP}}_{+}(\v p)\subset C^{\textrm{TO}}_{+}(\v p)$ and the difference between these two sets of thermodynamically accessible states arises purely from memory effects.
	
	A recap of \emph{majorisation}, \emph{thermomajorisation}, and \emph{continuous thermomajorisation}, three notions needed to fully understand the convertibility of states under TOs and MTPs, is presented in Appendix~\ref{App:partial-order}. We will refrain from restating it here. Instead, we provide a summary in Fig.~\ref{Fig:summary_TP_and_MTP}, illustrating the existence of a given thermodynamic transformation and its conditions as expressed by these partial order relations.
	
	
	\section{Bridging the gap with memory}
	\label{sec:bridging}
	
	We begin this section by explaining the main building block of this work, namely the notion of memory-assisted Markovian thermal processes. Then, we demonstrate how energy-incoherent states achievable from a given initial state $\v{p}$ via thermal operations [i.e., any $\v{q}\in C_+^{\mathrm{TO}}(\v{p})$] can be approached arbitrarily well using memory-assisted Markovian thermal processes with large enough memory. We will start by simplifying the problem and showing that it is sufficient to only consider the achievability of the extreme points of $C_+^{\mathrm{TO}}(\v{p})$. Next, we will introduce MeMTP protocols that will serve us to approach extreme points of $C^{\textrm{TO}}_{+}(\v p)$ using MTPs acting on the system and memory state, $\v{p}\otimes \v{\gamma}_M$. Finally, we will analyse the performance of these protocols, i.e., we will show how well they approximate the desired transformations as the size of the memory $N$ grows. Due to structural differences, we will do this separately for the infinite temperature limit and the case of finite temperatures.
	
	
	\subsection{Memory-assisted Markovian thermal processes}
	
	In this work we want to interpolate between the two extreme regimes of arbitrarily strong and no memory effects described by TO and MTP frameworks. We will achieve this by focusing on the more restrictive MTP framework and extending it by explicitly modelling memory effects by bringing ancillary systems in thermal states, that will be discarded at the end. More precisely, we consider MTPs acting on a composite system consisting of the main $d$-dimensional system in a state $\rho$ and an $N$-dimensional memory system prepared in its thermal state $\gamma_M$ (i.e., given by Eq.~\eqref{Eq:thermal_state} with $H_E$ replaced by the Hamiltonian $H_M$ of the memory system, which can be arbitrary). The thermality of the ancillary system $M$ is crucial, as this way we ensure that no extra thermodynamic resources are brought in unaccounted, and the only role played by $M$ is to bring extra dimensions that can act as a memory. As already explained in the introduction, this can also be viewed as having control over the small $N$-dimensional part of the bath. Formally, we define the following set of quantum channels.
	\begin{defn}[Memory-assisted MTPs]
		A quantum channel $\E$ is called a memory-assisted Markovian thermal process (MeMTP) with memory of size $N$, if it can be written as
		\begin{equation}
			\E(\rho) = \Tr_{M}[\E_{\mathrm{MTP}}(\rho\otimes \gamma_M)],
		\end{equation}    
		where $\E_{\mathrm{MTP}}$ is a Markovian thermal process acting on the original system extended by an $N$-dimensional ancillary system $M$ prepared in a thermal state $\gamma_M$.
	\end{defn}
	
	\begin{figure}[t]
		\centering    \includegraphics{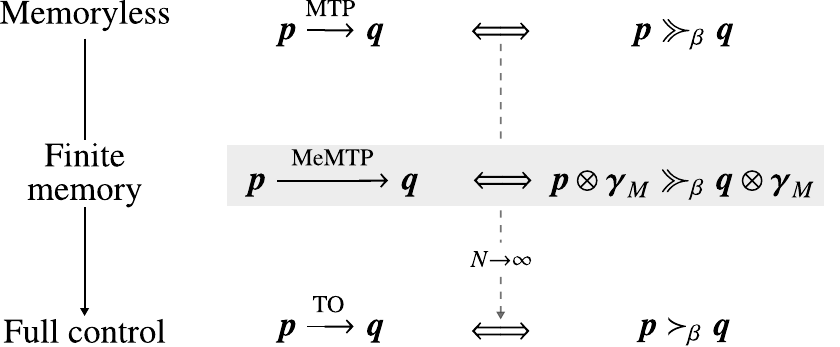}
		\caption{\textbf{Interpolating between extreme regimes}. Arrows between distributions represent the existence of specific thermodynamic transformations, whose existence is determined by partial-order relations: thermomajorisation $\succ_\beta$ for thermal operations and continuous thermomajorisation $\ggcurly_\beta$ for Markovian thermal processes. Our results demonstrate that the gap between these two frameworks can be bridged with the use memory-assisted MTPs employing ancillary memory systems of growing dimension $N$ and prepared in thermal states.}
		\label{Fig:summary_TP_and_MTP}
	\end{figure}
	
	As already mentioned, in this work we will focus on transformations between energy-incoherent states. Our aim is to show that the sets of states achievable from a given $\v{p}$ via memory-assisted MTPs interpolate between $C^{\textrm{MTP}}_{+}(\v p)$ (for $N=1$) and $C^{\textrm{TO}}_{+}(\v p)$ (for $N\to\infty$). The illustration of the growing strength of memory effects captured by our framework is presented in Fig.~\ref{Fig:summary_TP_and_MTP}. As a final note, observe that this framework can be formally related to a particular kind of catalytic transformations~\cite{Jonathan_1999,brandao2015second}. This is because the ancillary memory system can always be thermalised at the end of the process and this way be brought to the initial state. 
	
	
	\subsection{Simplification to extreme points}
	
	We start by recalling the following notion that is crucial for our analysis.
	
	\begin{defn}[$\beta$-ordering]\label{Def-beta-ordering}
		Let $\v p$ be an arbitrary energy-incoherent state of a $d$-dimensional system, and $\v \gamma$ denote the corresponding thermal Gibbs state. The $\beta$-ordering of $\v p$ is defined as the permutation $\pi_{\v p}$ that arranges the vector \mbox{$\qty(p_1/\gamma_1, ..., p_d/\gamma_d)$} in a non-increasing order, i.e., 
		\begin{equation}\label{Eq:beta-ordering}
			\v p^{\, \beta} = \qty(p_{\v \pi^{-1}_{\v p}(1)}, ...,p_{\v \pi^{-1}_{\v p}(d)}).    
		\end{equation}
		The $d$-dimensional matrix representation of $\pi_{\v p}$ will be denoted by $\Pi_{\v p}$, i.e., $\Pi_{\v p}\v{p}=\v{p}^\beta$.
	\end{defn}
	
	The future thermal cone $C_+^{\mathrm{TO}}(\v{p})$ is a polytope with at most~$d!$ extreme points, one for each possible $\beta$-order $ \pi$~\cite{Lostaglio2018elementarythermal}. We will denote them by $\v{p}^{\pi}$ (in particular it means that \mbox{$\v{p}^{\pi_{\v p}}=\v{p}$}). Now, we will use two crucial observations. First, in Ref.~\cite{Lostaglio2018elementarythermal} it was shown that
	\begin{equation}
		\label{eq:extremal_enough}
		\v{q} \in C_+^{\mathrm{TO}}(\v{p}) \Rightarrow \v{q} \in C_+^{\mathrm{TO}}\qty(\v{p}^{\pi_{\v{q}}}),
	\end{equation}
	meaning that all states with a $\beta$-order $\pi$ that can be achieved from $\v{p}$ via thermal operations can also be achieved starting from $\v{p}^\pi$. And second, it was shown in Ref.~\cite{lostaglio2021continuous} that
	\begin{equation}
		\label{eq:same_beta_order}
		\left[\v{q} \in C_+^{\mathrm{TO}}(\v{p})~~\mathrm{and}~~\pi_{\v{q}}=\pi_{\v{p}}\right] \Rightarrow \v{q} \in C_+^{\mathrm{MTP}}(\v{p}),
	\end{equation}
	meaning that within the same $\beta$-order as the initial state, the subsets of states achievable via TOs and via MTPs do coincide. As a result, if one can construct memory-assisted MTPs that reach all the extreme points of $C_+^{\mathrm{TO}}(\v{p})$, then one can also get to every state in $C_+^{\mathrm{TO}}(\v{p})$ via MeMTPs. This is done by simply first transforming $\v{p}$ to a given extreme point $\v{p}^\pi$ of $C_+^{\mathrm{TO}}(\v{p})$, and then using MTPs to get from $\v{p}^\pi$ to every state with a $\beta$-order $\pi$ in $C_+^{\mathrm{TO}}(\v{p})$.
	
	Note that the upper index on probability distributions serves a dual purpose. First, it is used to denote the $\beta$-ordered version of the state. For example, $\v p^\beta$ represents the $\beta$-ordered counterpart of $\v p$. Second, an upper index $\pi$ is used to denote particular extreme points of the state's future. More precisely, $\v p^\pi$ represents the extreme point of the future of $\v p$, with $\beta$-order given by $\pi$. Notably, $\pi_{\v p}$ represents the $\beta$-order of the state $\v p$.
	
	In order to quantify how well a given state in $C_+^{\mathrm{TO}}(\v{p})$ can be approximated, we will use the total variation distance defined by
	\begin{equation}
		\delta\qty
		(\v{p},\v{q}):=\frac{1}{2}\sum_{i=1}^d \abs{p_i-q_i}.
	\end{equation}
	From the discussion above, it should be clear that if we can construct MeMTP protocols approximating every extreme point with an error at most $\epsilon$, then
	\begin{equation}
		\forall \v{q} \in C_+^{\mathrm{TO}}\qty(\v{p}):\quad \min_{\P\in \mathrm{MeMTP}} \delta\qty(\P(\v{p}),\v{q})\leq \epsilon.
	\end{equation}
	
	A particular subset of extreme points of $C_+^{\mathrm{TO}}(\v{p})$ that we will investigate in more detail is given by those extreme states that can be achieved via sequences of $\beta$-swaps. A $\beta$-swap $\Pi_{ij}^\beta$ can be seen as a thermodynamic analogue of a population swap between levels~$i$ and~$j$~\cite{Lostaglio2018elementarythermal}:
	\begin{equation}
		\label{Eq:beta-swap}
		\Pi^{\, \beta}_{ij} := \begin{bmatrix}
			1-e^{-\beta {(E_j - E_i)}}  &1 \\ 
			e^{-\beta {(E_j - E_i)}} &0 
		\end{bmatrix}\oplus \mathbbm{1}_{{\backslash}(ij)},
	\end{equation}
	with $E_i \leq E_j$ and $\mathbbm{1}_{{\backslash}(ij)}$ denoting the $(d -2) \times (d-2)$ identity matrix on the subspace of all energy levels except $i, j$. Note that in the infinite temperature limit ($\beta=0$), the above recovers a transposition on levels $i$ and $j$, which we will simply denote by~$\Pi_{ij}$. In this limiting case, all extreme points of $C_+^{\mathrm{TO}}(\v{p})$ can be obtained by sequences of transpositions (that is because an extreme point in that case is of the form $\Pi \v{p}$ for a permutation matrix $\Pi$, and every $\Pi$ can be constructed from transpositions). 
	
	For finite temperatures, a $\beta$-swap transforms $\v{p}$ into the extreme point $\v{p}^\pi$ if the $\beta$-orders $\pi_{\v{p}}$ and $\pi$ differ only by a transposition of adjacent elements~\cite{Lostaglio2018elementarythermal,deoliveirajunior2022}. In other words, it happens for $\Pi_{ij}^\beta$ when $\pi_{\v{p}}(i)=\pi_{\v{p}}(j) \pm 1$. More generally, a sequence of $\beta$-swaps with non-overlapping supports will also produce an extreme point of $C^{\mathrm{TO}}_+(\v{p})$, and so a total number of 
	extreme points that can be achieved by sequences of $\beta$-swaps for dimension $d$ (including the starting point) 
	is given by $F(d+1)$, where $F(k)$ is the $k$-th Fibonacci number~\cite{white_1983}. Finally, we will also make use of the notion of $\beta$-cycles that we now define. For a state $\v{p}$, consider a $k$-dimensional subset of energy levels $ \qty{i_1,\hdots,i_{k}}$ neighbouring in the $\beta$-order, i.e., $\pi_{\v{p}}(i_{j+1}) = \pi_{\v{p}}(i_j) + 1$. Denote by $\pi$ a cyclic permutation on this subset, i.e., either $\pi(i_j) = i_{j+1 \text{ mod } k}$, or $\pi(i_j) = i_{j-1 \text{ mod } k}$. Then, a thermal operation mapping $\v{p}$ to its extreme point $\v{p}^{\pi'}$ is called a $\beta$-cycle, if $\Pi'=\Pi \Pi_{\v{p}}$ (here $\Pi, \Pi'$ and $\Pi_{\v{p}}$ denote matrix representations of permutations $\pi,\pi'$ and $\pi_{\v{p}}$). To emphasise that a given $\beta$-cycle acts on $k$ levels, we will sometimes refer to it as a $\beta$-$k$-cycle.
	
	To summarise the notion used, we denote a specific permutation as lowercase letter $\pi$, which acts on integers. Its matrix representation is represented by capital $\Pi$, and, in particular, the transposition of elements $i$ and $j$ has its matrix counterpart~$\Pi_{ij}$. We use the upper index $\beta$, e.g., $\Pi^\beta$, when referring to an extreme operation that permutes the $\beta$-order of the state by a permutation $\pi$.
	
	
	\subsection{Memory-assisted protocols}
	\label{Sec:Memory-assisted protocols}
	
	The basic building blocks of all our protocols are given by two-level elementary thermalisations that are formally defined as follows. 
	\begin{defn}[Two-level thermalisations\label{def:neighbour-thermalisations}] 
		Consider a system in a state $\v{p}$ with the corresponding thermal state $\v{\gamma}$. Then, an MTP transformation
		\begin{equation}\label{eq:neighbour_thermalisations}
			\{p_i, p_{j} \} \to \left\{ \frac{p_i+p_j}{\gamma_i+\gamma_j}\gamma_i,\frac{p_i+p_j}{\gamma_i+\gamma_j}\gamma_j \right\}
		\end{equation}
		is called a two-level thermalisation between levels $i$ and $j$, and the corresponding matrix acting on probability vectors will be denoted by $T_{ij}$. Moreover, if $\pi_{\v{p}}(i)=\pi_{\v{p}}(j) \pm 1$, then $T_{ij}$ is called a neighbour thermalisation.
	\end{defn}
	
	Let us note that the importance of neighbour thermalisations and the reason we employ them in our protocols stems from the fact that their sequences produce the extreme points of the Markovian thermal cone $C_+^{\mathrm{MTP}}$~\cite{lostaglio2021continuous}. Intuitively, one can expect that in order to approximate extreme states of $C_+^{\mathrm{TO}}(\v{p})$ using MeMTPs, one should get to the extreme points of $C_+^{\mathrm{MTP}}(\v{p}\otimes\v{\gamma}_M)$, and these can be achieved by neighbour thermalisations of the composite system-memory state.
	
	Before delving into the full details of our protocol for approximating $\beta$-swaps, let us start with a high-level description to provide some insight into our investigation. We begin with the simplest case of a two-level system and a two-dimensional memory, drawing an analogy between continuous (thermo)majorisation and connected vessels~(see Fig.~\ref{Fig:vessels} for a schematic representation). Considering two vessels—one filled with liquid and one empty—the most one can do when they are connected is to equalise the levels of the liquid between them. However, by adhering to the simple schematic provided in Fig. \ref{Fig:vessels}, it is possible to exceed this intuitively unbeatable limit.
		\begin{figure}[t]
			\centering
			\includegraphics{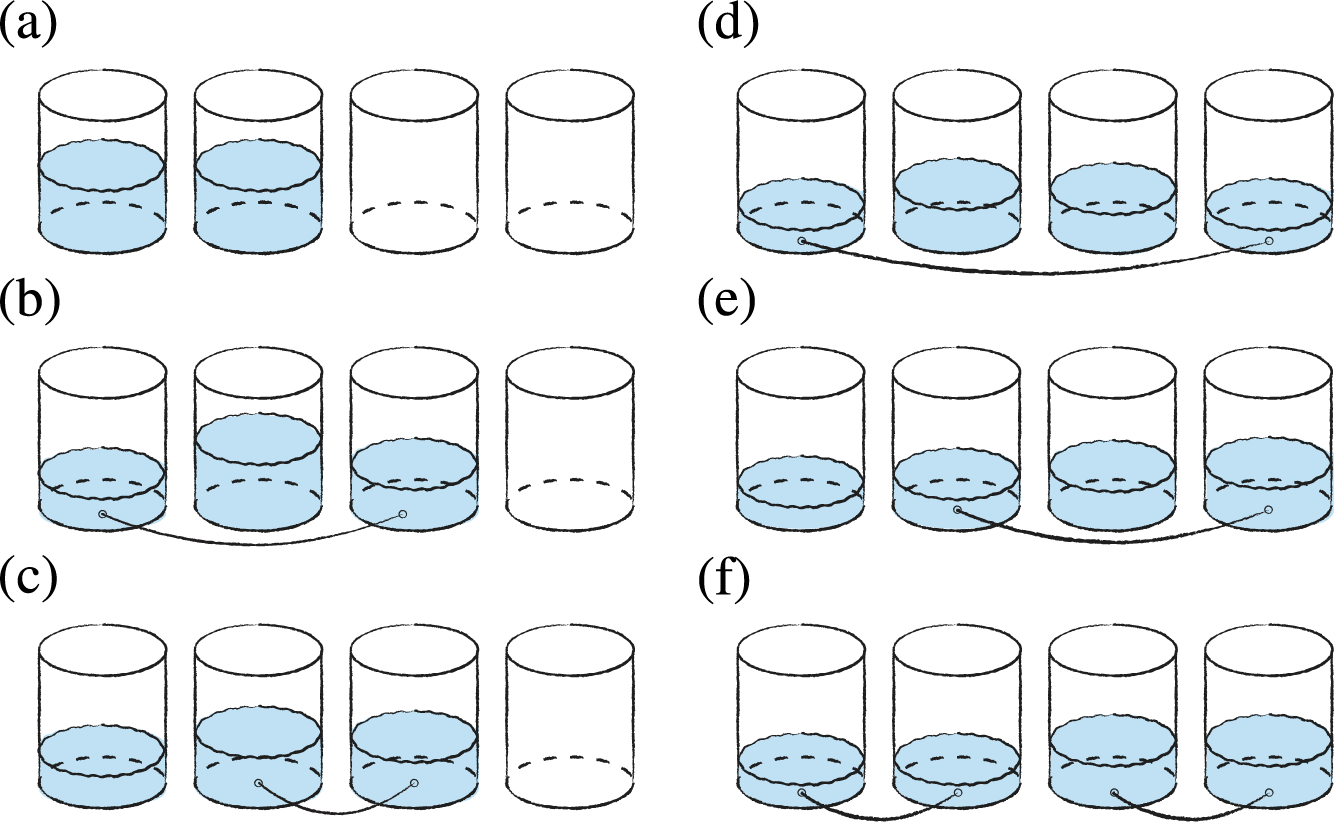}
			\caption{\textbf{Simplest example using connected vessels analogy}. Continuous (thermo)majorisation on $d$-level probability vectors is equivalent to a task of redistributing the content of $d$-ordered vessels that are connected pairwise. Adding a memory in a Gibbs state is then akin to multiplying glasses -- one empty and one full glass become $N$ pairs of full and empty glasses. The process involves five steps, read from top to bottom and left to right, which represent the simplest protocol that allows shifting more than half of the liquid from full to empty glasses. The final distribution after the protocol is applied is $(3/8, 5/8)$.}
			\label{Fig:vessels}
		\end{figure}
		\noindent 
		The sequence in Fig.~\ref{Fig:vessels} should be read from panels (a) to (f), in a top-down and left-to-right manner. We begin with four vessels: two are half-filled and two are empty. We can connect these vessels in pairs, thus equalising the fluid levels. The first two steps involve connecting the half-filled vessels sequentially to the first empty vessel. Likewise, the next two steps connect both initially half-filled vessels to the second empty one. The final step, which involves equalising the fluid levels in pairs, is analogous to thermalising the memory. An astute reader can confirm that $5/8$ of the total fluid ends up in the vessels that were initially empty, thereby surpassing the $1/2$ limit.
	
	We now describe our proposition for a MeMTP protocol approximating the $\beta$-swap $\Pi^{\, \beta}_{ij}$ between the $i$-th and $j$-th energy levels of the main system. It involves a sequence of $N^2$ two-level thermalisations of the state of the composite system (see Fig.~\ref{Fig:thermodynamic_protocol}), which includes the main system and a memory starting at thermal equilibrium, i.e., a state \mbox{$\v p \otimes \v \gamma_M$}. In particular, we focus on the populations $[\v p \, \otimes \, \v \gamma_M]_{N(i-1)+1},\hdots,[\v p \, \otimes\,  \v \gamma_M]_{Ni}$ corresponding to the $i$-th level of the main system and similarly for the $j$-th level. The protocol can be split into a sequence of $N$ rounds $\mathcal{R}_k^{(ij)}$ with $k = 1,\hdots,N$ consisting of $N$ steps each (shaded area in Fig.~\ref{Fig:thermodynamic_protocol}). In the $k$-th round, we select the entry $[\v p \otimes \v \gamma_M]_{N(i-1)+k}$ and thermalise it sequentially with all the levels corresponding to the level $j$ of the main system:
	\begin{equation}
		\mathcal{R}^{(ij)}_k(\v p \otimes \v \gamma_M) := \qty(\prod_{l=1}^N T_{(i-1)N + k,\,(j-1)N + l}) (\v p \otimes \v \gamma_M).
	\end{equation}
	Note that if $\pi_{\v{p}}(i)=\pi_{\v{p}}(j) \pm 1$ (i.e., the $\beta$-orders of $\v{p}$ and $\Pi^\beta_{ij}\v{p}$ differ by a transposition of adjacent elements), then all thermalisations performed are neighbour thermalisations. Using the above, we can now define the action of the truncated protocol $\widetilde{\mathcal{P}}^{(ij)}$:
	\begin{equation}\label{eq:trunc_protocol}
		\widetilde{\mathcal{P}}^{(ij)}(\v p \otimes \v \gamma_M) := \mathcal{R}^{(ij)}_N\circ\hdots\circ\mathcal{R}^{(ij)}_1(\v p \otimes \v \gamma_M).
	\end{equation}
	
	\begin{figure}[t]
		\centering
		\includegraphics{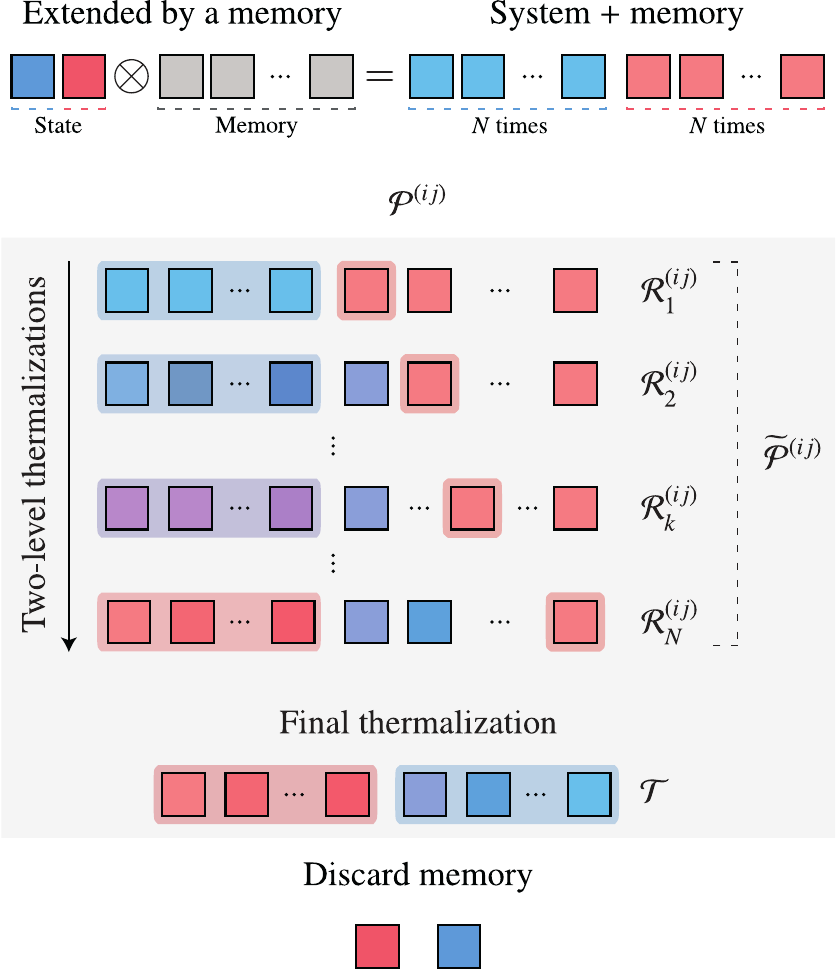}
		\caption{\textbf{$\beta$-swap protocol}. A two-level subsystem of a generically $d$-dimensional system, represented by blue and red squares, is extended by an $N$-dimensional memory represented by grey squares. The composite $2N$-dimensional system undergoes $N$ rounds of processing, where the $k$-th round involves $N$ sequential two-level thermalisations of the first $N$ entries with the $(N + k)$-th entry (represented by the shaded colour around the squares). After the final thermalisation step, the memory can be discarded.}
		\label{Fig:thermodynamic_protocol}
	\end{figure}
	
	The final step is to decouple the main system from the memory using a full thermalisation $\T$ of the memory system $M$, which acts on a general joint state $\v{Q}$ as: 
	\begin{equation}
		\mathcal{T}(\v Q) = \v q\otimes \v{\gamma}_M,\qquad q_i = \sum_{j=1}^N Q_{N(i-1)+j}.
	\end{equation}
	Thus, the full protocol approximating a $\beta$-swap $\Pi_{ij}^\beta$ is given by
	\begin{equation}
		\mathcal{P}^{(ij)}(\v p \otimes \v \gamma_M) = \mathcal{T}\circ\widetilde{\mathcal{P}}^{(ij)}(\v p \otimes \v \gamma_M).
	\end{equation}
	Putting the whole protocol in a simple words -- we take all ``full'' levels, connect them to the first ``empty'' level sequentially, and then repeat the process for all the empty levels.
	
	We will also employ more general protocols that aim at approximating the transformation of the initial state $\v{p}$ into an extreme state of $C_+^{\mathrm{TO}}(\v{p})$ given by $\v{p}^{\pi'}$. Denote matrix representations of $\beta$-orders of these states by $\Pi_{\v{p}}$ and $\Pi'=\Pi \Pi_{\v{p}}$ for some permutation matrix $\Pi$. Moreover, let us decompose $\Pi$ into neighbour transpositions with respect to $\v{p}$, i.e., we write \mbox{$\Pi = \Pi_{i_mj_m}\dots \Pi_{i_1j_1}$} with every consecutive transposition $\Pi_{i_kj_k}$ changing the $\beta$-order of the state $\Pi_{i_{k-1}j_{k-1}}\dots \Pi_{i_1j_1} \v{p}$ only by a transposition of adjacent elements. Then, we define the following two protocols to approximate $\v{p}^{\pi'}$:
	\begin{subequations}
		\begin{align}
			\label{eq:protocol}
			\mathcal{P}^{{\Pi}} &:= \mathcal{P}^{(i_mj_m)} \circ\dots\circ \mathcal{P}^{(i_1j_1)},
			\\
			\label{eq:truncated}
			\widetilde{\mathcal{P}}^{{\Pi}} &:= \T\circ \widetilde{\mathcal{P}}^{(i_mj_m)} \circ\dots\circ \widetilde{\mathcal{P}}^{(i_1j_1)}.
		\end{align}
	\end{subequations}
	Note that, by construction, all two-level thermalisations performed in the above protocols are neighbour thermalisations.
	
	Let us summarise the notation that we introduced for the protocols. Rounds of the protocol for approximating a swap between levels $i$ and $j$ of the main system are denoted by $\mathcal{R}^{(ij)}$. Similarly, a full protocol with and without final thermalization is denoted by $\mathcal{P}^{(ij)}$ and $\tilde{\mathcal{P}}^{(ij)}$, respectively. Final thermalisation is denoted by $\mathcal{T}$.
		
		As a last remark, it should be noted that the protocol discussed in this section is not unique. Alternative protocols achieving equivalent $\beta$-swap approximations can be found in Appendix~\ref{App:protocols}. Additionally, we provide a comparison of the convergence rates of different algorithms toward a specified target state.

	
	\subsection[Achieving extreme points of the future thermal cone for infinite temperature]{\texorpdfstring{Achieving extreme points of $C_+^{\mathrm{TO}}$ for $\beta= 0$}{Achieving extreme points of the future thermal cone for infinite temperature}}
	
	We are now ready to state our main results concerning the power of memory-assisted Markovian thermal processes in the infinite temperature limit. Let us recall that we focus on a $d$-level system and an $N$-dimensional memory in energy-incoherent states represented by probability distributions $\v p$ and $\v \gamma_M$. Since $\beta = 0$, the thermal state of the memory is described by a uniform distribution $\v{\eta}_M$ with every entry equal to $1/N$. We start with the following lemma, the proof of which can be found in Appendix~\ref{app:neighbour} (with the necessary background on the mathematical tools used presented in Appendix~\ref{app:beta}). 
	
	\begin{lem}[Memory-assisted transposition]
		\label{Lem:transposition} 
		In the infinite temperature limit, $\beta=0$, and for an $N$-dimensional memory, the MeMTP protocol $\mathcal{P}^{(ij)}$ acts as
		\begin{equation}
			\mathcal{P}^{(ij)} (\v{p}\otimes\v{\eta}_M)=  \v{q} \otimes \v \eta_M ,
		\end{equation}
		with
		\begin{equation}
			\v{q} = \left(\Pi_{ij} + \epsilon\qty(\mathbbm{1} - \Pi_{ij})\right) \v{p},
		\end{equation}
		and $\epsilon$ given by
		\begin{equation}
			\epsilon=(\pi N)^{-1/2} + o\qty(N^{-1/2}) \overset{N\rightarrow\infty}{\longrightarrow} 0.
		\end{equation}
	\end{lem}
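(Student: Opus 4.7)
The plan is to reduce the analysis to the $2N$ populations acted on by $\mathcal{P}^{(ij)}$, derive an explicit two-index recursion for them, solve it through a random-walk/binomial-tail identification, and extract the leading $1/\sqrt{N}$ asymptotics using de Moivre--Laplace. Denote by $a_k$ and $b_l$ the entries $[\v p \otimes \v \eta_M]_{N(i-1)+k}$ and $[\v p \otimes \v \eta_M]_{N(j-1)+l}$, starting at $a_k = p_i/N$, $b_l = p_j/N$. Every other population is untouched by $\mathcal{P}^{(ij)}$ and is left invariant by $\T$ (already uniform), so for $k \neq i,j$ one immediately reads off $q_k = p_k$, consistent with $\v q = (\Pi_{ij} + \epsilon(\mathbbm{1} - \Pi_{ij}))\v p$. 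In the $\beta = 0$ limit every two-level thermalisation replaces both of its arguments by their arithmetic mean, so the protocol becomes an ordered sequence of such pairwise averages.

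Next I would track the round structure. Since in round $r$ only $a_r$ and the $b_l$'s are touched, and $a_r$ enters round $r$ at $p_i/N$ and is never updated again after round $r$, setting $b_l^{[r]}$ to the value of $b_l$ at the end of round $r$ (with $b_l^{[0]} = p_j/N$) yields
\begin{equation*}
b_1^{[r]} = \frac{p_i/N + b_1^{[r-1]}}{2}, \qquad b_l^{[r]} = \frac{b_{l-1}^{[r]} + b_l^{[r-1]}}{2} \ \ (l \geq 2),
\end{equation*}
and $a_r = b_N^{[r]}$ at the end of the protocol, before $\T$. Introducing $u_l^{[r]} := b_l^{[r]} - p_j/N$ with boundary values $u_0^{[r]} := c := (p_i - p_j)/N$ and $u_l^{[0]} = 0$ homogenises the recursion to $u_l^{[r]} = (u_{l-1}^{[r]} + u_l^{[r-1]})/2$. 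A backward symmetric lattice walk on this grid gives the closed form $u_l^{[r]} = c\, P(\mathrm{Bin}(l+r-1, 1/2) \geq l)$: the walk from $(l,r)$ moving south or west with equal probability exits through the $l=0$ boundary with precisely that probability.

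Applying $\T$ then produces the desired tensor-product state $\v q \otimes \v \eta_M$ automatically, with $q_i = \sum_{r=1}^N a_r = p_j + \sum_{r=1}^N u_N^{[r]}$ and $q_j = p_i + p_j - q_i$ by probability conservation. Matching this with $q_i = p_j + \epsilon(p_i - p_j)$ yields
\begin{equation*}
\epsilon = \frac{1}{N}\sum_{r=1}^N P\bigl(\mathrm{Bin}(N+r-1, 1/2) \geq N\bigr).
\end{equation*}
Substituting $k = N - r$ and applying de Moivre--Laplace, $P(\mathrm{Bin}(2N - k - 1, 1/2) \geq N) \approx 1 - \Phi\bigl((k+1)/\sqrt{2N - k - 1}\bigr)$; the sum is dominated by $k = O(\sqrt{N})$, where the denominator is $\sqrt{2N}(1 + o(1))$. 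Replacing the sum by the corresponding Gaussian integral and using $\int_0^\infty [1 - \Phi(u)]\,du = 1/\sqrt{2\pi}$ gives $\epsilon N = \sqrt{N/\pi} + o(\sqrt{N})$, i.e.\ $\epsilon = (\pi N)^{-1/2} + o(N^{-1/2})$.

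The main obstacle is this last asymptotic step: the de Moivre--Laplace and Euler--Maclaurin approximations must be controlled uniformly in $k$ so that the pointwise errors, once summed, contribute only $o(\sqrt{N})$ after division by $N$. This requires a quantitative Berry--Esseen-type estimate for $\mathrm{Bin}(m, 1/2)$ with $m = \Theta(N)$, together with a Gaussian-tail bound that discards the contribution from $k \gg \sqrt{N}$ as exponentially small in $k^2/N$.
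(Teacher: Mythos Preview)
Your proposal is correct and takes a genuinely different route from the paper's own proof. The paper (Appendix~C) reduces to the same $2N$ relevant populations and derives equivalent recursions, but then writes the closed-form solution as binomial sums, recognises these as instances of the regularised incomplete beta function $I_x(a,b)$, and exploits its identities (in particular $I_{1/2}(N,N)=\tfrac12 I_1(N,\tfrac12)$ together with the asymptotic expansion of $I_x(N,\tfrac12)$ and Stirling) to extract the $(\pi N)^{-1/2}$ leading term. Your approach instead interprets the same recursion as a symmetric lattice walk (equivalently a ballot problem), identifies the solution directly as a binomial-tail probability, and obtains the asymptotics via de~Moivre--Laplace and a Riemann-sum argument using $\int_0^\infty[1-\Phi(u)]\,du=1/\sqrt{2\pi}$.

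What each approach buys: your probabilistic route is more elementary and self-contained for the $\beta=0$ case, and the random-walk picture makes the $1/\sqrt{N}$ scaling intuitively transparent. The paper's special-function machinery is heavier here but pays off in the finite-temperature case (Theorem~\ref{Thm:beta-swap}), where the same beta-function identities apply with $x=4\Gamma_{12}\Gamma_{21}<1$ and immediately yield the exponential prefactor $(4\Gamma_{12}\Gamma_{21})^N$; your CLT argument would need a substantially different large-deviations treatment to handle the biased walk that arises when $\Gamma_{12}\neq\Gamma_{21}$. The technical caveat you flag about uniform control of the normal approximation is real but routine: a Berry--Esseen bound for $\mathrm{Bin}(m,\tfrac12)$ with $m=\Theta(N)$ gives an $O(m^{-1/2})$ error uniformly, which after summing over $k=0,\dots,N-1$ and dividing by $N$ contributes $O(N^{-1/2})=o(1)$ to $\epsilon N/\sqrt{N}$, as required.
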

	
	It is well known that any permutation of $d$ elements can be decomposed into a product of at most $d$ transpositions or $\binom{d}{2}$ neighbour transpositions. Therefore, by employing Lemma~\ref{Lem:transposition}, we can demonstrate that an arbitrary permutation can be achieved using a composition of our approximate protocols.
	
	\begin{thm}[Memory-assisted permutation]
		\label{Thm:permutation} 
		
		In the infinite temperature limit, $\beta=0$, and for an $N$-dimensional memory, $\Pi$ can be approximated by the MeMTP protocol $\mathcal{P}^{{\Pi}}$ as follows: 
		\begin{equation}
			\mathcal{P}^{{\Pi}}(\v{p}\otimes \v{\eta}_M) = \v{q}\otimes \v{\eta}_M,
		\end{equation}
		where
		\begin{equation}
			\v{q}=\left(\Pi + \epsilon \v\Delta + o\qty(N^{-1/2})\right)\v{p}
			\overset{N\rightarrow\infty}{\rightarrow} \Pi \v {p},
		\end{equation}
		with $\epsilon = (\pi N)^{-1/2}$ and the operator $\v\Delta$ defined in terms of transpositions appearing in the definition of $\P^\Pi$ in Eq.~\eqref{eq:protocol}:
		\begin{equation} \label{eq:Lambda_correction_op}
			\v\Delta = \sum_{l=1}^{m}\qty(\prod_{k=l+1}^{m} \Pi_{i_kj_k})\qty(\mathbbm{1}-\Pi_{i_lj_l})\qty(\prod_{k=1}^{l-1} \Pi_{i_kj_k}).
		\end{equation}
		
	\end{thm}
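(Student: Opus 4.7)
The plan is to prove Theorem~\ref{Thm:permutation} by iterating Lemma~\ref{Lem:transposition} along the decomposition $\Pi = \Pi_{i_m j_m} \cdots \Pi_{i_1 j_1}$. Crucially, each block $\mathcal{P}^{(i_k j_k)}$ terminates with the full thermalisation $\mathcal{T}$ of the memory, so after each block the joint state is again a product of the updated system marginal with $\v{\eta}_M$. Consequently, Lemma~\ref{Lem:transposition} applies verbatim at the next block, acting on the newly updated system distribution.

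Writing $\v{p}^{(0)} := \v{p}$ and letting $\v{p}^{(k)}$ denote the system marginal after the $k$-th block, Lemma~\ref{Lem:transposition} yields
\begin{equation}
\v{p}^{(k)} = \bigl(\Pi_{i_k j_k} + \epsilon (\mathbbm{1} - \Pi_{i_k j_k})\bigr) \v{p}^{(k-1)} + o(N^{-1/2}),
\end{equation}
with the same $\epsilon = (\pi N)^{-1/2}$ at every step, since in the infinite-temperature limit this rate is independent of the chosen pair of levels. Composing these $m$ relations and expanding the resulting product in powers of $\epsilon$, the zeroth-order term equals $\prod_{k=m}^{1} \Pi_{i_k j_k}\, \v{p} = \Pi \v{p}$, while the first-order term is obtained by inserting a single factor $(\mathbbm{1} - \Pi_{i_l j_l})$ at position $l$, keeping the remaining permutations ordered to its left and right exactly as prescribed by the composition. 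Summing over $l$ reproduces precisely the operator $\v\Delta$ of Eq.~\eqref{eq:Lambda_correction_op} acting on $\v{p}$, giving the stated leading correction $\epsilon \v\Delta \v{p}$.

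It remains to check that all further contributions collapse into the $o(N^{-1/2})$ remainder. The terms of order $\epsilon^k$ with $k \geq 2$ contribute $O(N^{-k/2})$, which is subleading, while the $o(N^{-1/2})$ residuals from each of the $m$ applications of Lemma~\ref{Lem:transposition} remain $o(N^{-1/2})$ after composition: the number $m$ of neighbour transpositions is fixed (independent of $N$), and the intervening permutations are stochastic matrices of unit $\ell_1 \to \ell_1$ norm, so they propagate the residuals without amplification.

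The step demanding the most care is the bookkeeping of the first-order expansion: one has to verify that the coefficient assembles into precisely the ordering prescribed by Eq.~\eqref{eq:Lambda_correction_op}, with the permutations indexed $k > l$ acting on the left and those with $k < l$ on the right of each ``defect'' $(\mathbbm{1} - \Pi_{i_l j_l})$. Once this combinatorial identification is settled, the rest is an inductive unfolding of Lemma~\ref{Lem:transposition} together with routine norm estimates on error propagation through stochastic maps.
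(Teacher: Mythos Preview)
Your proposal is correct and follows essentially the same route as the paper: iterate Lemma~\ref{Lem:transposition} along the decomposition of $\Pi$, write $\v{q}$ as the product of the factors $\Pi_{i_kj_k}+\epsilon(\mathbbm{1}-\Pi_{i_kj_k})$ acting on $\v{p}$, and read off the zeroth- and first-order terms in $\epsilon$. The paper's version is simply terser; your added remarks on why the memory factorises after each block and why the $o(N^{-1/2})$ residuals do not accumulate (fixed $m$, stochasticity of the intervening maps) make the same argument more explicit without changing its substance.
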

	\begin{proof}
		
		From the definition of $\mathcal{P}^\Pi$ and Lemma~\ref{Lem:transposition} we get
		\begin{equation}
			\v{q} =  \qty[\Pi_{i_mj_m} + \epsilon \qty(\mathbbm{1} - \Pi_{i_mj_m})]\dots \qty[\Pi_{i_1j_1} + \epsilon \qty(\mathbbm{1} - \Pi_{i_1j_1})]\v{p}.
		\end{equation}
		Clearly, the leading term is given by $\Pi\v{p}$, whereas the next leading term, proportional to $\epsilon$, is given by $\epsilon\Delta\v{p}$. All higher order terms scale at least as $\epsilon^2$, so are of the order $o(N^{-1/2})$.
	\end{proof}
	
	The above theorem can then be directly used to obtain the bound on how close one can get from a given $\v{p}$ to any state $\v{q}\in C_+^{\mathrm{TO}}(\v{p})$ using MeMTPs with $N$-dimensional memory. We explain how to derive such a bound for a given $\v{p}$ in Appendix~\ref{app:bound}, whereas below we present a weaker, but much simpler, bound that is independent of $\v{p}$.
	
	\begin{cor} \label{corr:general_bound}
		Consider states $\v{p}$ and $\v q\in C_+^{TO}(\v{p})$. Then, in the infinite temperature limit, $\beta=0$, and for an $N$-dimensional memory, there exists a MeMTP protocol $\mathcal{P}$ such that
		\begin{equation}
			\label{eq:cor1}
			\P(\v{p}\otimes \v{\eta}_M) = \v{q}'\otimes \v{\eta}_M,
		\end{equation}
		with
		\begin{equation}
			\label{eq:cor2}
			\delta(\v{q}',\v{q})\leq \frac{d(d-1)}{2\sqrt{\pi N}} + o\qty(N^{-1/2}).
		\end{equation}   
	\end{cor}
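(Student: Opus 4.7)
The plan is to combine Theorem~\ref{Thm:permutation} with the reduction to extreme points encoded by Eqs.~\eqref{eq:extremal_enough}--\eqref{eq:same_beta_order}. First, I would fix the permutation matrix $\Pi$ such that $\Pi \v{p} = \v{p}^{\pi_{\v q}}$ is the extreme point of $C_+^{\mathrm{TO}}(\v p)$ whose $\beta$-order coincides with that of $\v q$. Because $\v q \in C_+^{\mathrm{TO}}(\v p)$ implies, via \eqref{eq:extremal_enough}, that $\v q \in C_+^{\mathrm{TO}}(\Pi \v p)$, and $\v q$ and $\Pi \v p$ share the same $\beta$-order, Eq.~\eqref{eq:same_beta_order} guarantees the existence of an exact MTP $\mathcal{M}$ on the system with $\mathcal{M}(\Pi \v p) = \v q$. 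The problem therefore reduces to approximating $\Pi \v p$ within the required accuracy and finishing with an exact $\mathcal{M}$.

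For the approximation step I would decompose $\Pi$ into $m$ neighbour transpositions with respect to $\v p$ (as in the setup of Theorem~\ref{Thm:permutation}). Since any permutation of $d$ elements can be realized by at most $\binom{d}{2}=d(d-1)/2$ neighbour transpositions, I may take $m \leq d(d-1)/2$. Theorem~\ref{Thm:permutation} then delivers a MeMTP $\mathcal{P}^{{\Pi}}$ with output $\v p' \otimes \v \eta_M$, where $\v p' - \Pi \v p = \epsilon \v \Delta \v p + o(N^{-1/2})$ and $\epsilon = (\pi N)^{-1/2}$. The key estimate is an $\ell^1$-bound on $\v \Delta \v p$: each of the $m$ summands in Eq.~\eqref{eq:Lambda_correction_op} has the form $A(\mathbbm{1}-B)C \v p$, with $A$ and $C$ products of permutation matrices (hence $\ell^1$-isometries on probability vectors) and $B$ a transposition. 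Since $C \v p$ and $BC \v p$ are both probability vectors, $\|(\mathbbm{1}-B)C \v p\|_1 \leq 2$, and summing over the $m$ terms gives $\|\v \Delta \v p\|_1 \leq 2m \leq d(d-1)$. This yields $\delta(\v p',\Pi \v p) \leq m \epsilon + o(N^{-1/2}) \leq \frac{d(d-1)}{2\sqrt{\pi N}} + o(N^{-1/2})$.

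To close the argument, I would lift $\mathcal{M}$ to the composite by tensoring with the identity channel on the memory; the identity is trivially a Markovian thermal process that preserves $\v \eta_M$, so the composition $(\mathcal{M}\otimes\mathrm{id})\circ \mathcal{P}^{{\Pi}}$ is a valid MeMTP with output $\mathcal{M}(\v p')\otimes \v \eta_M =: \v q' \otimes \v \eta_M$, verifying~\eqref{eq:cor1}. Since every MTP acts as a stochastic matrix on probability vectors and is therefore $\ell^1$-contractive, $\delta(\v q',\v q) = \delta(\mathcal{M}(\v p'),\mathcal{M}(\Pi \v p)) \leq \delta(\v p',\Pi \v p)$, giving \eqref{eq:cor2}. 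The only substantive ingredient beyond Theorem~\ref{Thm:permutation} itself is the combinatorial bound $m \leq \binom{d}{2}$ — this is where the $d(d-1)$ factor enters — together with the standard $\ell^1$-contractivity of stochastic maps; neither is expected to present any genuine difficulty.
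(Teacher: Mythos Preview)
Your proposal is correct and follows essentially the same route as the paper's own proof: both arguments pick the permutation $\Pi$ aligning $\beta$-orders, invoke Theorem~\ref{Thm:permutation} to approximate $\Pi\v p$ with error controlled by $\|\v\Delta\v p\|_1$, bound the latter via the triangle inequality and the count $m\le\binom{d}{2}$ of neighbour transpositions, and finish by composing with the exact MTP from Eq.~\eqref{eq:same_beta_order} together with $\ell^1$-contractivity of stochastic maps. Your write-up is in fact slightly more explicit than the paper's in justifying both the $\ell^1$-isometry of the permutation factors in $\v\Delta$ and the appeal to Eq.~\eqref{eq:extremal_enough}, but the underlying strategy is identical.
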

	\begin{proof}
		First, define $\Pi$ as a permutation that changes the $\beta$-order of $\v{p}$ to that of $\v{q}$. In other words, the $\beta$-order of $\Pi \v{p}$ is~$\pi_{\v{q}}$. Then, using Theorem~\ref{Thm:permutation}, we have that
		\begin{equation}
			\P^{\Pi}(\v{p}\otimes \v{\eta}_M)=\v{r}\otimes \v{\eta}_M
		\end{equation}
		with
		\begin{equation}
			\delta(\v{r},\Pi \v{p})\simeq\frac{1}{2\sqrt{\pi N}} \sum_{i=1}^d |(\Delta \v{p})_i|   \lesssim  \frac{m}{\sqrt{\pi N}} \lesssim  \frac{d(d-1)}{2\sqrt{\pi N}},
		\end{equation}
		where $\simeq$ and $\lesssim$ denote the equalities and inequalities up to $o(N^{-1/2})$. In the above, we have used the triangle inequality and the fact that one can always decompose $\Pi$ into at most $d(d-1)/2$ neighbour transpositions. Next, from Eq.~\eqref{eq:same_beta_order}, we know that there exists an MTP protocol $\P'$ mapping $\Pi\v{p}$ to~$\v{q}$. Using the contractiveness of the total variation distance under stochastic processing, we then have
		\begin{align}
			\delta(\P'(\v{r}),\v{q})&=\delta(\P'(\v{r}),\P'(\Pi\v{p}))\leq \delta(\v{r},\Pi\v{p})\lesssim \frac{d(d-1)}{2\sqrt{\pi N}}.
		\end{align}
		We thus conclude that by choosing $\P=(\P'\otimes \mathcal{I}_M)\circ \P^\Pi$, Eqs.~\eqref{eq:cor1}-\eqref{eq:cor2} are satisfied.
	\end{proof}
	
	Furthermore, we present the following conjecture for a better approximation of arbitrary permutations.
	
	\begin{conj}[Improved convergence]
		\label{Conj:convergence}
		In the infinite temperature limit, $\beta=0$, and for an $N$-dimensional memory, $ \widetilde{\mathcal{P}}^{{\Pi}}$ gives a better approximation of a permutation $\Pi$ than  $\mathcal{P}^{\Pi}$:
		\begin{equation}
			\delta\left( \Pi \v p,\tilde{\v q}\right) \leq  \delta\left( \Pi \v p, \v{q}\right),
		\end{equation}
		where $\tilde{\v{q}}$ and $\v{q}$ are defined via
		\begin{equation}
			\widetilde{\mathcal{P}}^{{\Pi}}(\v{p}\otimes \v{\eta}_M) = \tilde{\v{q}}\otimes \v{\eta}_M,\qquad \mathcal{P}^{{\Pi}}(\v{p}\otimes \v{\eta}_M) = \v{q}\otimes \v{\eta}_M.
		\end{equation}
	\end{conj}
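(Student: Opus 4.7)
The natural plan is to expand both protocols in the small parameter $\epsilon = (\pi N)^{-1/2}$ of Lemma~\ref{Lem:transposition} and compare the $O(\epsilon)$ corrections to the target $\Pi\v{p}$. The two protocols differ only by the intermediate memory thermalisations $\mathcal{T}$ that are inserted inside $\mathcal{P}^\Pi$, so the truncated protocol $\widetilde{\mathcal{P}}^\Pi$ retains the system-memory correlations generated by each $\widetilde{\mathcal{P}}^{(i_k j_k)}$ from one step to the next, whereas $\mathcal{P}^\Pi$ discards them at every round. The conjecture is effectively the statement that these preserved correlations systematically reduce the final deviation from $\Pi\v{p}$.

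Concretely, the first step would be to strengthen Lemma~\ref{Lem:transposition} by computing the full joint state $\widetilde{\mathcal{P}}^{(ij)}(\v{p}\otimes\v{\eta}_M)$ without tracing out the memory, writing it as $\Pi_{ij}\v{p}\otimes\v{\eta}_M + \epsilon(\iden - \Pi_{ij})\v{p}\otimes\v{\eta}_M + \epsilon\,\v{C}_{ij}(\v{p}) + o(\epsilon)$, where $\v{C}_{ij}(\v{p})$ is a correlation tensor whose marginal on the main system vanishes. Composing these expansions along the decomposition $\Pi = \Pi_{i_m j_m}\cdots\Pi_{i_1 j_1}$ and tracing out the memory at the end should yield $\tilde{\v{q}} = \Pi\v{p} + \epsilon(\v{\Delta} - \v{\Xi})\v{p} + o(\epsilon)$, with $\v{\Delta}$ the operator from Theorem~\ref{Thm:permutation} and $\v{\Xi}$ collecting the back-action of each $\widetilde{\mathcal{P}}^{(i_{k+1} j_{k+1})}$ on the correlations $\v{C}_{i_k j_k}(\v{p})$ left by its predecessors. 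The conjecture then reduces to the coordinatewise inequality $\sum_i |(\v{\Delta}\v{p} - \v{\Xi}\v{p})_i| \leq \sum_i |(\v{\Delta}\v{p})_i|$, which I would attempt to establish by showing that the nonzero entries of $\v{\Xi}\v{p}$ share the sign pattern of $\v{\Delta}\v{p}$ so that the subtraction is contractive on every coordinate.

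The main obstacle is precisely this sign-alignment step: data processing via the $\mathcal{T}$-invariance of the target $\Pi\v{p}\otimes\v{\eta}_M$ gives local monotonicity at each thermalisation, but says nothing about how the preserved correlations interact with subsequent transpositions. A direct combinatorial argument over neighbour-transposition decompositions of $\Pi$ and all $\beta$-orderings of $\v{p}$ looks intractable; a cleaner route would be to identify a majorisation-style relation between the intermediate joint states, showing that $\widetilde{\mathcal{P}}^\Pi(\v{p}\otimes\v{\eta}_M)$ thermomajorises $\mathcal{P}^\Pi(\v{p}\otimes\v{\eta}_M)$ relative to $\Pi\v{p}\otimes\v{\eta}_M$, which by contractivity of total variation under the common downstream processing $\mathcal{T}$ would give the desired inequality. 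The strong numerical evidence reported suggests such a structural relation exists; pinning it down, especially beyond first order in $\epsilon$ where higher-order cross-terms could in principle change sign, is where the remaining work lies.
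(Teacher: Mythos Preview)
This statement is labelled a \emph{Conjecture} in the paper, and the paper does not prove it: the only support offered is numerical evidence (Fig.~\ref{fig:beta0_convergence}) together with the qualitative observation that for single $\beta$-$k$-cycles the two protocols perform identically, while the advantage appears only once two or more cycles are composed. There is therefore no paper proof for your proposal to be compared against.

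Your proposal is not a proof either, as you acknowledge: the expansion in $\epsilon$ is a reasonable organising principle, but the heart of the matter --- showing that the correlation back-action $\v{\Xi}\v{p}$ is always sign-aligned with $\v{\Delta}\v{p}$ --- is left open. Two remarks on the approach itself. First, the paper's numerical finding that $\widetilde{\mathcal{P}}^\Pi$ and $\mathcal{P}^\Pi$ coincide exactly for a single $\beta$-$k$-cycle is a sharp test: in your language it forces $\v{\Xi}\v{p}=0$ whenever $\Pi$ is a single cycle, which is a nontrivial constraint your correlation tensor $\v{C}_{ij}$ would have to satisfy, and a good sanity check before attempting the general case. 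Second, your alternative route via a majorisation-style relation between the intermediate joint states cannot work as stated: the target $\Pi\v{p}\otimes\v{\eta}_M$ is \emph{not} invariant under the remaining downstream operations of either protocol (only the global $\v{\eta}$ is), so contractivity of total variation under common processing does not directly transfer the desired inequality. Any structural comparison would have to be made step by step, tracking how each $\widetilde{\mathcal{P}}^{(i_kj_k)}$ acts differently on a correlated versus a product input, which brings you back to the combinatorial difficulty you identified.
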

	The conjecture is solidified by strong numerical evidence (see Fig.~\ref{fig:beta0_convergence} for an example considering $d = 6$). We note that the convergence is better, but the overall character of $O(N^{-1/2})$ is still preserved. More specifically, we observe that for permutations given by $\beta$-$k$-cycles with $k \leq d$, there is no advantage to removing the intermediate thermalisations (i.e., no advantage of $\widetilde{\P}^\Pi$ over ${\P}^\Pi$). The advantage already appears for a composition of $\beta$-$d$-cycle with $\beta$-$(d-1)$-cycle, leading to the $\beta$-order $(d, d-1,1,\hdots,d-2)$ (here, without loss of generality, we assumed that the initial $\beta$-order is given by $(1,2,\dots,d)$). In general, the advantage grows with the number of composed $\beta$-cycles (see Fig.~\ref{fig:beta0_convergence}, where different colours and markers correspond to different length compositions of $\beta$-cycles). In particular, we verified that for a permutation $(16)(25)(34)$, which is composed of $\beta$-cycles of length $6$ through $2$ (or $15 = \binom{6}{2}$ neighbour transpositions), both $\mathcal{P}^{\Pi}$ and $\widetilde{\mathcal{P}}^{\Pi}$ converge to the actual extreme point $\Pi \v{p}$. Surprisingly, we find that all the other possible permutations fall within the convergence advantage class of one of the aforementioned $\beta$-cycle compositions. This includes, in particular, the cases when the last $\beta$-cycle in the sequence is incomplete, i.e., it is shortened by the final subsequence of $\beta$-swaps of any length.
	
	\begin{figure}[t]
		\centering \includegraphics{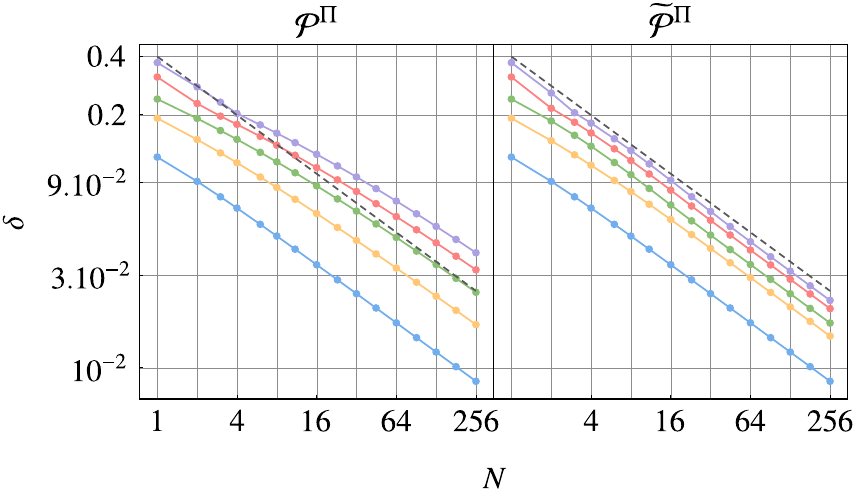}
		\caption{\textbf{Convergence rates at infinite temperature}. Log-log plot of the total variation distance $\delta$ between the extreme points \mbox{$\v{p}^\pi\in C_+^{\mathrm{(TO)}}(\v{p})$} and the states obtained from $\v{p}$ via the algorithm $\mathcal{P}^{\Pi}$ (left panel) and $\widetilde{\mathcal{P}}^{ \Pi}$ (right panel), as a function of the memory size~$N$. Here, \mbox{$\v p = \qty(0.37, 0.24, 0.16, 0.11, 0.07, 0.05)$}, $\beta = 0$, and different colours correspond to families of extreme points $\v{p}^\pi$ with different convergence rates (from bottom to top $\v{p}^\pi$ is obtained from $\v{p}$ via a $\beta$-6-cycle, a composition of a $\beta$-6-cycle with a $\beta$-5-cycle, and so on). All convergences behave as $O(N^{-1/2})$ as given in Eq.~\eqref{eq:cor2}, which can be seen by the comparison with the function $0.4/\sqrt{N}$ (dashed black line), with multiplicative advantage for $\widetilde{\mathcal{P}}^{ \Pi}$ over $\mathcal{P}^{ \Pi}$. }
		\label{fig:beta0_convergence}
	\end{figure}
	
	
	\subsection[Achieving extreme points of the future thermal cone for finite temperatures]{\texorpdfstring{Achieving extreme points of $C_+^{\mathrm{TO}}$ for $\beta\neq 0$}{Achieving extreme points of the future thermal cone for finite temperatures}}
	
	Our second main result concerns the power of memory-assisted Markovian thermal processes at finite temperatures. 
	We start with the following generalisation of Lemma~\ref{Lem:transposition}, the proof of which can be found in Appendix~\ref{app:neighbour}.  
	
	\begin{thm}[Memory-assisted $\beta$-swap]
		\label{Thm:beta-swap} 
		For a finite temperature, $\beta\neq 0$, and for an $N$-dimensional memory described by a trivial Hamiltonian (so that its thermal state is $\v{\eta}_M$), $\Pi_{ij}^\beta$ can be approximated by the MeMTP protocol $\P^{(ij)}$ as follows:    
		\begin{equation}
			\P^{(ij)} (\v{p}\otimes \v \eta_M)=\v{q}\otimes \v \eta_M,
		\end{equation}
		with
		\begin{equation}
			\delta(\v q,\Pi_{ij}^\beta \v{p}) = \frac{\qty(4\Gamma_i\Gamma_j)^N}{\qty(\Gamma_i - \Gamma_j)^2}\qty[\frac{\abs{p_i \Gamma_j - p_j\Gamma_i}}{(N+1)\sqrt{\pi N}} + o\qty(N^{-3/2})]
		\end{equation}
		where we have used $\Gamma_i = \gamma_i/(\gamma_i + \gamma_j)$ and likewise for $\Gamma_j$.
	\end{thm}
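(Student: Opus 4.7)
My plan is to track the $2N$ populations of the joint system living in the subspace of levels $i$ and $j$ of the main system, since the protocol $\P^{(ij)}$ leaves all other populations untouched. I label these populations $a_1,\dots,a_N$ (for level $i$) and $b_1,\dots,b_N$ (for level $j$), with initial values $a_k=p_i/N$ and $b_l=p_j/N$. Every two-level thermalisation in the protocol leaves its two participants in the equilibrium ratio $\Gamma_i:\Gamma_j$, and $a_k$ is modified only during round $k$; writing $\alpha_k^{(l)}$ for the value of $a_k$ right after the $l$-th sub-step of round $k$, the current value of $b_l$ at that moment is $(\Gamma_j/\Gamma_i)\alpha_k^{(l)}$. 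Substituting this back into the thermalisation update rule yields the two-dimensional recursion $\alpha_k^{(l)}=\Gamma_i\alpha_k^{(l-1)}+\Gamma_j\alpha_{k-1}^{(l)}$ with formal boundary values $\alpha_k^{(0)}=p_i/N$ and $\alpha_0^{(l)}=(\Gamma_i/\Gamma_j)(p_j/N)$. Reading this backwards as a biased random walk that at $(k,l)$ moves to $(k,l-1)$ with probability $\Gamma_i$ or to $(k-1,l)$ with probability $\Gamma_j$ and stops at the first boundary, I obtain $\alpha_k^{(l)}=p_i/N+[(\Gamma_i p_j-\Gamma_j p_i)/(\Gamma_j N)]\,Q_{k,l}$, where $Q_{k,l}$ is the probability of hitting the $k=0$ boundary first.

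\textbf{Reduction to a negative-binomial tail.} A race-to-boundary argument identifies $Q_{k,N}=\mathbb{P}(K\geq k)$ for a random variable $K$ with negative-binomial law $\mathbb{P}(K=k)=\binom{N+k-1}{k}\Gamma_i^N\Gamma_j^k$, namely the number of $\Gamma_j$-failures before the $N$-th $\Gamma_i$-success in Bernoulli trials. After the final memory thermalisation $\mathcal{T}$ the population on level $i$ equals $p_i^{\mathrm{out}}=\sum_{k=1}^N\alpha_k^{(N)}$, and the identities $\sum_{k=1}^N\mathbb{P}(K\geq k)=\mathbb{E}[\min(K,N)]=\mathbb{E}[K]-\mathbb{E}[(K-N)^+]$ together with $\mathbb{E}[K]=N\Gamma_j/\Gamma_i$ reduce the problem, after a short algebraic manipulation and using that $\v{q}$ and $\Pi_{ij}^\beta\v{p}$ differ only on components $i$ and $j$ with opposite signs by probability conservation, to the exact identity
\begin{equation*}
\delta(\v{q},\Pi_{ij}^\beta\v{p})\;=\;\frac{|p_i\Gamma_j-p_j\Gamma_i|}{\Gamma_j N}\,\mathbb{E}[(K-N)^+].
\end{equation*}

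\textbf{Tail asymptotics.} It remains to estimate $\mathbb{E}[(K-N)^+]$ as $N\to\infty$, where $E_i<E_j$ gives $\Gamma_i>\Gamma_j$. Factoring $\mathbb{P}(K=N+m)=\mathbb{P}(K=N)\,\Gamma_j^m\prod_{l=1}^m(2N+l-1)/(N+l)$ and noting each ratio lies in $[1,2]$, a Taylor expansion gives $\mathbb{P}(K=N+m)=\mathbb{P}(K=N)(2\Gamma_j)^m(1+O(m^2/N))$ for $m=o(\sqrt{N})$, while the contribution from $m\gtrsim\sqrt{N}$ is super-exponentially suppressed. Summing $\sum_{m\geq 1}m(2\Gamma_j)^m=2\Gamma_j/(\Gamma_i-\Gamma_j)^2$ and using Stirling in the form $\mathbb{P}(K=N)=\tfrac12\binom{2N}{N}(\Gamma_i\Gamma_j)^N\sim(4\Gamma_i\Gamma_j)^N/(2\sqrt{\pi N})$ produces $\mathbb{E}[(K-N)^+]\sim\Gamma_j(4\Gamma_i\Gamma_j)^N/[(\Gamma_i-\Gamma_j)^2\sqrt{\pi N}]$. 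Substituted into the identity above, this reproduces the theorem; the $(N+1)$ versus $N$ in the normalisation is of relative order $1/N$ and is absorbed into the stated $o(N^{-3/2})$ remainder.

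\textbf{Main obstacle.} I expect the hardest step to be the uniform tail control: the $O(m^2/N)$ correction from the ratio product, the $O(1/N)$ Stirling correction to $\binom{2N}{N}$, and the super-exponential bound for $m\gtrsim\sqrt{N}$ need to be combined into a single, clean $o(N^{-3/2})$ pre-factor remainder, rather than a spurious $O(N^{-3/2}\log N)$ or similar.
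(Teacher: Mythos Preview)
Your proof is correct and takes a genuinely different route from the paper. The paper derives closed-form binomial-sum expressions for the populations after the protocol, rewrites the two sums $\mathbb{E}$ and $\mathbb{F}$ governing the output state in terms of the regularised incomplete beta function $I_x(a,b)$, and then invokes a second-order asymptotic expansion of $I_{4\Gamma_i\Gamma_j}(N,\tfrac12)$ to extract the leading term (including the explicit factor $N+1$). You replace all the special-function machinery with a probabilistic reading: the lattice recursion becomes a biased random walk, the boundary-hitting probability is a negative-binomial tail, and the total-variation error reduces \emph{exactly} to $|p_i\Gamma_j-p_j\Gamma_i|\,\mathbb{E}[(K-N)^+]/(\Gamma_j N)$, which you estimate via Stirling plus a geometric series. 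What each approach buys: the paper's beta-function formulation is heavier upfront but delivers the $(N+1)$ automatically from the second-order term and, more importantly, the same toolkit is reused verbatim in the $\beta$-3-cycle proof (Theorem~\ref{Thm:beta-3-cycle}), where no comparably clean random-walk picture presents itself. Your argument is more elementary and conceptually transparent, making it clear \emph{why} the rate is $(4\Gamma_i\Gamma_j)^N$: it is the large-deviation cost for a negative binomial with mean $N\Gamma_j/\Gamma_i<N$ to reach~$N$.

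On your ``main obstacle'': it is milder than you fear. Because each ratio $(2N+l-1)/(N+l)$ lies strictly in $(1,2)$, the uniform bound $\mathbb{P}(K=N+m)\leq \mathbb{P}(K=N)(2\Gamma_j)^m$ holds for \emph{all} $m\geq 0$, so no truncation at $m\sim\sqrt N$ is required and no $\log N$ can appear. The same bound, combined with $\sum_m m^3(2\Gamma_j)^m<\infty$, shows that the aggregate $O(m^2/N)$ corrections contribute only a relative $O(1/N)$; together with the $O(1/N)$ Stirling correction and the $N$ versus $N+1$ discrepancy, this lands cleanly inside the stated $o(N^{-3/2})$ remainder.
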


	By using the above theorem, one can approximate with arbitrary precision a total of $F(d+1)$ extreme points achievable by a composition of non-overlapping $\beta$-swaps (recall that $F(k)$ is the $k$-th Fibonacci number). However, since $F(d+1)\leq d!$ for $d\geq 3$, not all extreme points of $C_+^{\mathrm{TO}}(\v{p})$ can be obtained this way. Nevertheless, we conjecture that using MeMTP protocols $\widetilde{\P}^\Pi$ that are composed of blocks imitating $\beta$-swaps, just without intermediate thermalisations, one can reach all the extreme points of $C_+^{\mathrm{TO}}(\v{p})$.
	
	\begin{conj}[Extreme points of $C_+^{\mathrm{TO}}$]\label{Conj:beta_permutations}
		Consider a state $\v{p}$ and the extreme point of $C_+^{\mathrm{TO}}(\v{p})$ given by $\v{p}^{\pi'}$, with matrix representations of $\beta$-orders of these states satisfying \mbox{$\Pi'=\Pi\Pi_{\v{p}}$} for some permutation $\Pi$. Then, for a finite temperature, $\beta\neq 0$, and for an $N$-dimensional memory described by a trivial Hamiltonian (so that its thermal state is $\v{\eta}_M$), the MeMTP protocol $\widetilde{\P}^{\Pi}$ acts as
		\begin{equation}
			\widetilde{\P}^{\Pi}(\v{p}\otimes \v{\eta}_M)=\v{q}\otimes \v{\eta}_M,
		\end{equation}
		with
		\begin{equation} \label{eq:beta_nonzero_conv}
			\delta(\v{q},\v{p}^{\pi'})\overset{N\rightarrow\infty}{\longrightarrow} 0.
		\end{equation}
	\end{conj}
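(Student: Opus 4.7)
The plan is to work directly in the $dN$-dimensional joint system--memory space, where every two-level thermalisation performed by $\widetilde{\mathcal{P}}^{\Pi}$ is an MTP. Because the memory has trivial Hamiltonian, the $N$ memory levels inside each fixed system level are degenerate in $\beta$-order, so each block thermalisation called for by the protocol is genuinely a neighbour thermalisation of the joint state at the time it is applied. Consequently the output of $\widetilde{\mathcal{P}}^{\Pi}$ lies in $C_+^{\mathrm{MTP}}(\v p\otimes\v\eta_M)$, and the final step $\mathcal{T}$ decouples the memory, producing a product $\v q\otimes\v\eta_M$ whose system marginal $\v q$ is entirely fixed by the memory-index sums of the penultimate joint distribution. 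The task therefore reduces to showing that, as $N\to\infty$, this marginal converges to $\v p^{\pi'}$ in total variation.

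First I would fix the decomposition $\Pi=\Pi_{i_mj_m}\cdots\Pi_{i_1j_1}$ into neighbour transpositions with respect to the evolving $\beta$-order and proceed by induction on the length $m$. The base case reduces to Theorem~\ref{Thm:beta-swap}: since $\widetilde{\mathcal{P}}^{(i_1j_1)}$ coincides with $\mathcal{P}^{(i_1j_1)}$ except for the omitted memory thermalisation, rerunning that proof with the final $\mathcal{T}$ suppressed shows that the joint state produced by a single block is within vanishing total variation of the product $(\Pi^{\,\beta}_{i_1j_1}\v p)\otimes\v\eta_M$. For the inductive step I would write the state after $k$ blocks as $\sigma_k=\v p^{(k)}\otimes\v\eta_M+\v c_k$, where $\v p^{(k)}=\Pi^{\,\beta}_{i_kj_k}\cdots\Pi^{\,\beta}_{i_1j_1}\v p$ is the intended intermediate marginal and $\v c_k$ collects both the residual product-form error and the genuine system--memory correlations. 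Combining the fact that each block is a composition of $N^2$ neighbour thermalisations, which are contractions in total variation and thus propagate $\v c_k$ without amplification, with Theorem~\ref{Thm:beta-swap} applied to the product part $\v p^{(k)}\otimes\v\eta_M$, one would obtain a recursive bound of the form $\|\v c_{k+1}\|_1\le \|\v c_k\|_1+\varepsilon_N$ with $\varepsilon_N\to 0$, hence $\|\v c_m\|_1\le m\,\varepsilon_N\to 0$. A final application of $\mathcal{T}$ together with the triangle inequality then yields $\delta(\v q,\v p^{\pi'})=o(1)$ as $N\to\infty$.

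The main obstacle is the second ingredient of the inductive step, because Theorem~\ref{Thm:beta-swap} is proved only for product inputs, whereas every $\sigma_k$ with $k\ge 1$ carries genuine system--memory correlations; in fact it is precisely these correlations that enable $\widetilde{\mathcal{P}}^{\Pi}$ to reach the $d!-F(d+1)$ extreme points unattainable by non-overlapping $\beta$-swaps, so they cannot simply be discarded. To control them I would try to show that after each block the reduced memory distribution remains within $O(1/N)$ of $\v\eta_M$, since each block's neighbour thermalisations conserve the total population within each memory slot to leading order, and that for every fixed memory index the conditional system distribution is within $O(1/\sqrt{N})$ of the intended $\beta$-swapped state. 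Quantifying both statements uniformly in $m$ and verifying that the accumulated error remains $o(1)$ in $N$ is the technical heart of the conjecture, and is presumably why the authors state it here as a conjecture supported by numerical evidence rather than as a theorem.
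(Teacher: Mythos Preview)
This statement is a \emph{conjecture} in the paper: the authors do not prove it, but offer supporting evidence in the form of Theorem~\ref{Thm:beta-3-cycle} (the $\beta$-3-cycle case, proved in Appendix~\ref{app:beta-3-cycle}) and numerical simulations (Fig.~\ref{fig:beta_nonzero_convergence}). So there is no full proof to compare your attempt against.

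Your inductive strategy contains a conceptual gap that would make it converge to the wrong limit even if every error estimate closed. You set the intermediate target after $k$ blocks to $\v p^{(k)}=\Pi^{\,\beta}_{i_kj_k}\cdots\Pi^{\,\beta}_{i_1j_1}\v p$, so that the induction, if it worked, would yield $\delta(\v q,\v p^{(m)})\to 0$. But at finite temperature the composition $\v p^{(m)}$ of $\beta$-swaps is in general \emph{not} the extreme point $\v p^{\pi'}$. For instance, with $d=3$, initial $\beta$-order $(1,2,3)$ and $\gamma_1>1/2$, the third entry of $\Pi^{\,\beta}_{13}\Pi^{\,\beta}_{12}\v p$ equals $(\gamma_3/\gamma_1)\bigl[(1-\gamma_2/\gamma_1)a+b\bigr]$, whereas the extreme point of $\beta$-order $(2,3,1)$ has third entry $(\gamma_3/\gamma_1)a$; these differ whenever $b/\gamma_2\neq a/\gamma_1$. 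This is precisely why the paper distinguishes $\mathcal{P}^{\Pi}$ (with intermediate memory thermalisations, which by Theorem~\ref{Thm:beta-swap} and contractivity does converge to the $\beta$-swap composition $\v p^{(m)}$) from $\widetilde{\mathcal{P}}^{\Pi}$. You correctly observe in your final paragraph that the system--memory correlations are what enable $\widetilde{\mathcal{P}}^{\Pi}$ to reach the remaining extreme points, yet your proposed remedy---showing that after each block the conditional system state is within $O(1/\sqrt{N})$ of the $\beta$-swapped marginal---is exactly the assertion that these correlations are negligible, and would force the limit to be $\v p^{(m)}$ rather than $\v p^{\pi'}$. The two halves of your last paragraph are therefore in direct tension.

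The paper's partial result takes an entirely different route: rather than treating the two-block protocol as two approximate $\beta$-swaps with propagated product-state error, Appendix~\ref{app:beta-3-cycle} tracks the full $3N$-dimensional joint distribution through both blocks with the correlations intact, writes down closed-form recursions for the individual entries, and evaluates the three system marginals directly via regularised incomplete beta function identities, showing they converge to the coordinates of the genuine extreme point. Any proof of the full conjecture would presumably have to extend that direct joint-state analysis rather than a product-state perturbation argument.
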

	
	The conjecture is solidified by the following two pieces of evidence, which utilize the truncated protocols $\tilde{\mathcal{P}}^{\Pi}$. First, we provide an analytical proof for convergence of the truncated protocols to a subset of extreme points beyond the ones achievable with simple $\beta$-swaps.
	
	\begin{thm}[Memory-assisted $\beta$-3-cycle]
		\label{Thm:beta-3-cycle} 
		
		Consider a state~$\v{p}$ with entries $i_1,i_2,i_3$ being neighbours in the $\beta$-order (i.e., $\pi_{\v{p}}(i_1)=\pi_{\v{p}}(i_2)+1=\pi_{\v{p}}(i_3)+2$), and the extreme point of $C_+^{\mathrm{TO}}(\v{p})$ given by $\v{p}^{\pi'}$, with matrix representations of $\beta$-orders of these states satisfying \mbox{$\Pi'=\Pi\Pi_{\v{p}}$} for $\Pi=\Pi_{i_1i_3}\Pi_{i_2i_3}$. Then, for a finite temperature, $\beta\neq 0$, and for an $N$-dimensional memory described by a trivial Hamiltonian (so that its thermal state is $\v{\eta}_M$), the MeMTP protocol $\widetilde{\P}^{\Pi}$ acts as
		\begin{equation}
			\widetilde{\P}^{\Pi}(\v{p}\otimes \v{\eta}_M)= \v{q}\otimes \v{\eta}_M
		\end{equation}
		with
		\begin{equation}
			\delta(\v{q},\v{p}^{\pi'})\overset{N\rightarrow\infty}{\rightarrow} 0.
		\end{equation}
	\end{thm}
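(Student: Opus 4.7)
The plan is to reduce the problem to the three-level main subsystem spanned by $i_1,i_2,i_3$ and then analyse the two phases of $\widetilde{\P}^{\Pi}$, namely $\widetilde{\P}^{(i_2 i_3)}$ followed by $\widetilde{\P}^{(i_1 i_3)}$, before the closing thermalisation $\T$. Every two-level thermalisation inside these two phases acts only within the memory blocks attached to $i_1,i_2,i_3$, so the populations on all other main-system levels remain untouched and contribute nothing to $\delta(\v q,\v p^{\pi'})$. The argument is thus restricted to a $3N$-dimensional subspace of $\v p\otimes\v{\eta}_M$.

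In the first step I would analyse $\widetilde{\P}^{(i_2 i_3)}$. The $i_1$-block of the joint state is invariant and stays uniform with total mass $p_{i_1}$, while the $i_2$- and $i_3$-blocks are processed by exactly the protocol treated in Theorem~\ref{Thm:beta-swap}. The only difference between $\widetilde{\P}^{(i_2 i_3)}$ and $\P^{(i_2 i_3)}$ is the absence of a final memory thermalisation, which does not alter the main-system marginal, so the block totals converge to $(\Pi^\beta_{i_2 i_3}\v p)_{i_2}$ and $(\Pi^\beta_{i_2 i_3}\v p)_{i_3}$ at the rate given by Theorem~\ref{Thm:beta-swap}. In parallel I would extract from the explicit recursion in the proof of that theorem the distribution that the protocol leaves inside the $i_3$-block, which serves as input to the second phase.

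In the second step I would analyse $\widetilde{\P}^{(i_1 i_3)}$. The $i_2$-block is now frozen and the $i_1$-block is still uniform with mass $p_{i_1}$, so the calculation reduces to a $2N$-dimensional subspace, but with a non-uniform initial condition in the $i_3$-block inherited from phase one. This is the main obstacle, since Theorem~\ref{Thm:beta-swap} was proved starting from $\v p\otimes\v{\eta}_M$ with both relevant blocks uniform. I would tackle it by repeating the recursion of Theorem~\ref{Thm:beta-swap} with the modified $i_3$-block initial distribution, arguing that deviations from uniformity in that block influence the block totals only at subleading order in $1/N$. Consequently the $i_1$- and $i_3$-block totals still approach $(\Pi^\beta_{i_1 i_3}\Pi^\beta_{i_2 i_3}\v p)_{i_1}$ and $(\Pi^\beta_{i_1 i_3}\Pi^\beta_{i_2 i_3}\v p)_{i_3}$. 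This composition is consistent with the $\beta$-adjacency requirement of a $\beta$-swap because $i_1$ and $i_3$ are $\beta$-neighbours in $\Pi^\beta_{i_2 i_3}\v p$, and for $N$ large enough they remain so in the actual phase-one output.

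Finally, $\T$ replaces the memory by $\v{\eta}_M$ and produces the main-system marginal as the vector of block totals. Comparing with the extreme point $\v p^{\pi'}$, whose entries on $\{i_1,i_2,i_3\}$ are exactly $(\Pi^\beta_{i_1 i_3}\Pi^\beta_{i_2 i_3}\v p)_{i_1,i_2,i_3}$, and combining the phase-one and phase-two errors by the triangle inequality, one obtains $\delta(\v q,\v p^{\pi'})\to 0$ as $N\to\infty$. The crux of the whole argument is the propagation of the non-uniform $i_3$-block through the second phase, where I would expect the sequential structure of the rounds $\mathcal{R}^{(i_1 i_3)}_k$ to smooth out the inherited deviations.
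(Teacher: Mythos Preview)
Your overall architecture matches the paper's proof: reduce to the three active levels, split the protocol into $\widetilde{\P}^{(i_2 i_3)}$ followed by $\widetilde{\P}^{(i_1 i_3)}$, and identify the non-uniform $i_3$-block entering phase two as the crux. That is exactly how Appendix~\ref{app:beta-3-cycle} proceeds.

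The gap is in how you propose to resolve that crux. You write that you would argue ``deviations from uniformity in that block influence the block totals only at subleading order in $1/N$'' and then combine phase-one and phase-two errors by the triangle inequality. This does not work as stated. After phase one the $i_3$-block entries $c_k$ are given by an explicit formula (Eq.~\eqref{Eq:cj_entries} of the paper) and are \emph{not} close to uniform: their total is within $o(1)$ of $(\Pi^\beta_{i_2 i_3}\v p)_{i_3}$, but at the joint level the total variation distance between the actual phase-one output and the idealised product $(\Pi^\beta_{i_2 i_3}\v p)\otimes\v\eta_M$ is $O(1)$. Contractivity of the phase-two stochastic map therefore buys you nothing, and there is no smallness parameter to feed into a perturbative or triangle-inequality argument.

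What the paper does instead is derive a second explicit recursion for the phase-two output $c_k^{(N)}$ as a function of the phase-one entries $c_{k-l}$, obtaining
\[
c^{(N)}_k = a\,\Gamma^{k-1}_{13} \sum_{i=1}^N \binom{N+k-1-i}{k-1}\Gamma_{31}^{N+1-i} + \Gamma^N_{31}\sum_{l=0}^{N-1}\Gamma_{13}^{l}\binom{N+l-1}{l}c_{k-l},
\]
and then shows by direct manipulation (rewriting the sums as regularised incomplete beta functions and using their asymptotics) that $\tfrac{1}{N}\sum_k c_k^{(N)}\to a\,\Gamma_{31}/\Gamma_{13}$. The point is that the $c_{k-l}$-part of this sum vanishes in the limit regardless of the internal profile of the $c_k$, which is established by bounding it against $I_{\Gamma_{31}}(N,N)\to 0$. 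This is a concrete calculation, not a smoothing or subleading-order argument, and it is where the actual work of the proof lies; your proposal does not yet contain a substitute for it.
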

	The proof of the above theorem can be found in Appendix~\ref{app:beta-3-cycle}, and potentially the same proving techniques can be applied to higher-order cycles. This would then provide a general method for simulating $\beta$-cycles, as well as any combinations of non-overlapping $\beta$-cycles, with arbitrary precision through MeMTPs. 
	
	\begin{figure}[t]
		\centering
		\includegraphics{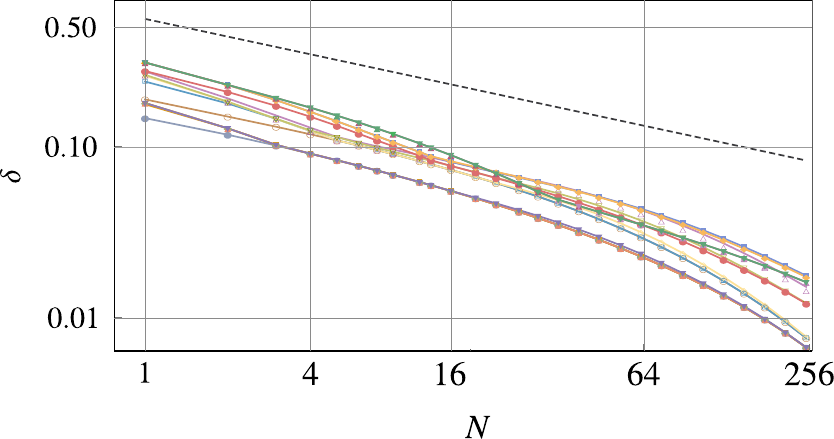}
		\caption{\textbf{Convergence rates at finite temperature}. Log-log plot of the total variation distance $\delta$ between the states obtained from $\v{p}$ via the algorithm $\widetilde{\mathcal{P}}^{ \widetilde{\Pi}_{ij}}$ and the corresponding extreme points \mbox{$\v{p}^\pi\in C_+^{\mathrm{(TO)}}(\v{p})$}, as a function of the memory size $N$. Here, \mbox{$\v p = \qty(0.37, 0.24, 0.16, 0.11, 0.07, 0.05)$}, $\beta = 0.1$, and different colours correspond to extreme points $\v{p}^\pi$ with matrix representation of the $\beta$-order $\pi$ given by $\Pi=\widetilde{\Pi}_{ij}\Pi_{\v{p}}$.
			For all curves, the convergence is better than $O(N^{-1/2})$ in agreement with Eq.~\eqref{eq:beta_perm_convergence}, as can be seen by the comparison with the limiting line $1/\sqrt{\pi N}$ for $\beta = 0$ (dot-dashed black line).}
		\label{fig:beta_nonzero_convergence}
	\end{figure}
	
	The second piece of evidence is based on extensive numerical simulations demonstrating convergence to arbitrary extreme points beyond both $\beta$-swaps and aforementioned $\beta$-cycles. Let us consider any state $\v{p}$ and a set of permutations defined by a recurrence formula
	\begin{equation}
		\widetilde{\Pi}_{i j} := \qty(\prod_{k=1}^{j} \Pi_{\pi_{\v p}(k),\, \pi_{\v p}(k+1)})\widetilde{\Pi}_{i-1,\,d+1-i}
	\end{equation}
	with $1 \leq j \leq d-i$ and assuming the starting condition $\widetilde{\Pi}_{0d} = \mathbbm{1}$. Note that $\widetilde{\Pi}_{1,d-1}$ represents a full $\beta$-$d$-cycle, $\widetilde{\Pi}_{i,d+1-i}$ a composition of $i$ $\beta$-cycles of length from $d$ to $d+1-i$, and finally $\widetilde{\Pi}_{d,1}$ is a permutation which fully reverses the $\beta$-order of $\v p$. For each such permutation, we have considered the action of the protocol $\widetilde{\mathcal{P}}^{\widetilde{\Pi}_{ij}}\qty(\v p)$ and its convergence to the respective extreme point~$\v{p}^{\pi}$ with $\Pi=\widetilde{\Pi}_{ij} \Pi_{\v{p}}$ (recall that $\Pi$ is a matrix representation of $\pi$). In each case, we have observed the convergence of the form from Eq.~\eqref{eq:beta_nonzero_conv} that is better than $N^{-1/2}$. Results for an exemplary state in dimension $d = 6$ are presented in Fig.~\ref{fig:beta_nonzero_convergence}, where a total of $15$ different curves are shown to lie below the $N^{-1/2}$ limit and diverging from it.
	
	Finally, based on Theorem~\ref{Thm:beta-swap}, the proof of Theorem~\ref{Thm:beta-3-cycle}, and numerical evidence, one can reasonably strengthen Conjecture~\ref{Conj:beta_permutations} to make the following statement on the convergence:
	\begin{equation}\label{eq:beta_perm_convergence}
		\delta(\v{q},\v{p}^{\pi'})=O\qty(\frac{e^{-A({\Pi}) N}}{N^{3/2}}),
	\end{equation}
	where $A(\Pi) = O(1)$ is a permutation-dependent exponent.

	
	\section{Discussion and applications}
	\label{sec:discussion}
	
	In Section~\ref{sec:bridging}, we demonstrated a method of achieving an arbitrary state from the future cone of TO using MeMTPs through MTP operations acting upon the system extended by memory, initiated in the thermal state $\gamma$. In the following sections, we will apply our protocol to study information-based quantum thermodynamic processes, such as work extraction and cooling. Next, we revisit the question of the sufficiency of two-level control for TOs. Finally, we provide a brief discussion of the behaviour of the free energy and correlations with the progression of our protocol. This sheds light on how non-Markovian effects arise in the memory-assisted protocol.
	
	
	\subsection{Work extraction}
	\label{subsec:workExtraction}

		Generally, definitions of work rely either on the control and manipulation of external parameters that determine the system's Hamiltonian~\cite{Alicki_1979,kosloff2013quantum} or on the assumption that work is a random variable and a controlled Hamiltonian evolution is employed to determine the work statistics~\cite{PhysRevE.90.032137,PhysRevE.92.042150,PhysRevLett.118.070601,PhysRevLett.123.230603}. In contrast, the resource-theoretic approach avoids the presence of any external agent and does not involve changes to the Hamiltonian. Furthermore, this framework differs from traditional approaches by shifting the focus from average and higher moments of the work distribution to the single-shot regime. In this regime, the question is posed as to what is the maximum amount of work that can be extracted while allowing for a probability of failure $\epsilon$.
	
	\begin{figure}[t]
		\label{}
		\centering
		\includegraphics{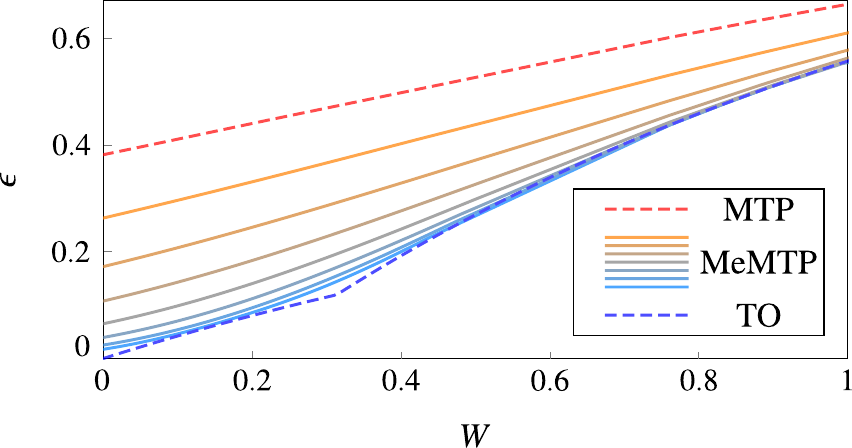}
		\caption{\textbf{$\epsilon$-deterministic work extraction with MeMTPS}. Transformation error $\epsilon$ as a function of the work $W$ extracted from a two-level system with energy splitting $\Delta$ prepared in a thermal state at temperature $1/\beta_S$ smaller than the environmental temperature $1/\beta$ with parameters 
			\mbox{$\beta_S \Delta=2$} and \mbox{$\beta \Delta=1$}. System-environment interactions are modelled by TOs (dashed black curve),  MTPs (dashed red curve) and memory-assisted Markovian thermal process with a memory of size $2$, $4$ , $8$, $16$, $32$, $64$ and $128$, respectively.}
		\label{Fig:work-extraction}
	\end{figure}
	
	The so-called $\epsilon$-deterministic work extraction, which typically involves an out-of-equilibrium system~$S$, a thermal bath at inverse temperature $\beta$, and a battery $B$ initially in an energy eigenstate $E_0$~\cite{Aberg2013,horodecki2013fundamental,Skrzypczyk2014}. The aim is to increase the energy of $B$ by an amount $W$ by exciting it from $E_0$ to $E_1=E_0+W$ with a success probability $1-\epsilon$. The optimal error $\epsilon$ for a given $W$ can be obtained via thermomajorisation condition for transformations given by thermal operations~\cite{horodecki2013fundamental} and through continuous thermomajorisation relations when transformations are given by Markovian thermal processes~\cite{korzekwa2022}. Our framework allows one to interpolate between the two extremes by including a memory system with varying dimension $N$. 
	
	Consider a two-level system $S$ and a two-level battery $B$ with energy levels $(0,\Delta)$ and $(0,W)$, respectively. Assume that the initial state of the joint system is given by $\v{p}_{SB} = \v{p}\otimes(1,0)$. One can then select the extreme point $\v{p}^{\pi'}_{SB}\in {C}^{\mathrm{TO}}_+\qty(\v{p}_{SB})$ from the future thermal cone of the composite system for which the following relation is satisfied with the minimum value of $\epsilon_{\mathrm{TO}}$:
	\begin{equation}
		\v{\gamma}\otimes(\epsilon_{\mathrm{TO}},1-\epsilon_{\mathrm{TO}}) \in 
		{C}^{\mathrm{TO}}_+\qty(\v{p}^{\pi'}_{SB}).
	\end{equation}
	In other words, $\v{p}^{\pi'}_{SB}$ is an intermediate state from which one can achieve minimal error for extracting $W$ work from $\v{p}$ via any thermal operation. We can now define $\Pi$ as a permutation that maps the matrix representation of the initial $\beta$-order of $\v{p}_{SB}$ to the final $\beta$-order of $\v{p}_{SB}^{\pi'}$. Then, by using the algorithm $\widetilde{\mathcal{P}}^\Pi$, we can transform $\v{p}_{SB}$ into a state $\v{q}_{SB}$ that approximates~$\v{p}^{\pi'}_{SB}$. Finally, due to Eq.~\eqref{eq:same_beta_order}, we can use standard thermomajorisation to find the minimal value of $\epsilon_N$ for which the state $\v{q}_{SB}$ can be transformed to $\v{\gamma}\otimes(\epsilon_N,1-\epsilon_N)$ via MTPs. Note that $\epsilon_N$ then corresponds to the probability of failure of extracting work $W$ from $\v{p}$ using a memory of size $N$. Numerical simulations of this procedure (see Fig.~\ref{Fig:work-extraction}) show that as $N$ grows, $\epsilon_N$ decreases, allowing us to conjecture that $\lim_{N\rightarrow\infty} \epsilon_N = \epsilon_{\mathrm{TO}}$. However, note that the convergence is not uniform: it is the slowest around $W = 0$ and the kink at $W=1/\beta \log (1+e^{-\beta \Delta})$. Nevertheless, Fig.~\ref{Fig:work-extraction} clearly shows that even a small size memory can significantly improve the quality of the extracted work.
	
	\subsection{Cooling a two-level system using a two-dimensional memory with nontrivial Hamiltonian}
		
		As a second application of our findings, we consider the task of cooling a two-level system with the aid of a two-dimensional memory. The setup involves a two-level system with energy gap $E_S$, extended by a memory system with energy gap $E_M$. The joint system's energy level structure is depicted in Fig.~\ref{fig-cooling}.  We assume that the difference between energy gaps is such that it allows one to selectively couple with the bath, i.e., $E_S - E_M \neq E_M$. This enables us to separately address transitions $\ket{01}\leftrightarrow\ket{10}$, $\ket{00}\leftrightarrow\ket{11}$ together with two coupled pairs of the form $\ket{0i}\leftrightarrow\ket{1i}$ and $\ket{i0}\leftrightarrow\ket{i1}$. We will refer to these operations thermalise these levels as operation 1, 2, 3 and 4, respectively.
		
		\begin{figure}[t]
			\centering
			\includegraphics{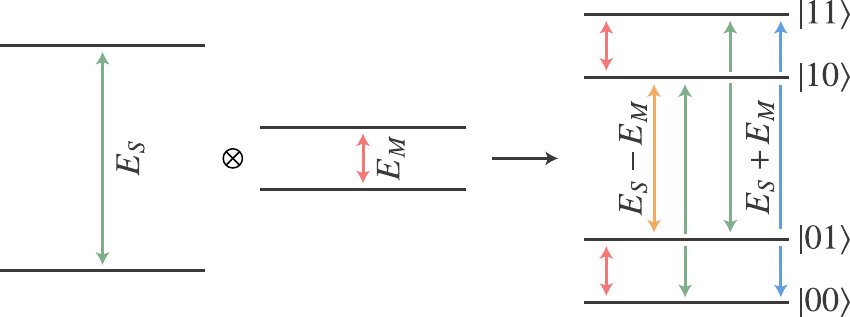}
			\caption{\textbf{Energy level structure.} Schematic diagram of a two-level system, consisting of a main system and a memory with energy gaps $E_S$ and $E_M$, respectively. The energy gap of the composite system is such that it can be selectively coupled to the thermal bath.}
			\label{fig-cooling}
		\end{figure}
		
		Let us now assume, for simplicity, that the system starts in an excited state extended by Gibbs memory, $\v{p}\otimes\v \gamma_M$ with $p_i = \delta_{i1}$. If we consider the main system alone with access only to MTPs, one can cool it down only to the ambient temperature. In this case, the system will reach thermal equilibrium, and the resulting distribution is given by
		\begin{equation}
			\v \gamma_S = \qty(\frac{1}{1+e^{-\beta E_S}}, \frac{e^{-\beta E_S}}{1 + e^{-\beta E_S}}).  
		\end{equation}
		However, by implementing our protocol, which can be realised as a sequence of operations, $1\rightarrow2\rightarrow3\rightarrow4$ and discarding (thermalising) the memory, we arrive at \mbox{$\mathcal{P}(\v{p}\otimes\v \gamma_M) = \v{q}\otimes\v \gamma_M$} with
		\begin{equation}
			\v{q} = \mqty[\frac{e^{\beta  E_M}+e^{\beta  \left(E_M+E_S\right)}+e^{\beta  \left(2 E_M+E_S\right)}+e^{\beta  \left(E_M+2 E_S\right)}+e^{\beta  E_S}}{\left(e^{\beta  E_S}+1\right) \left(e^{\beta 
					\left(E_M-E_S\right)}+1\right) \left(e^{\beta  \left(E_M+E_S\right)}+1\right)} \\
			\frac{e^{\beta  E_M}+e^{\beta  \left(2 E_M+E_S\right)}+e^{\beta  E_S}}{\left(e^{\beta  E_S}+1\right) \left(e^{\beta  E_M}+e^{\beta  E_S}\right) \left(e^{\beta 
					\left(E_M+E_S\right)}+1\right)}
			].
		\end{equation}
		The distance of this state from the Gibbs state $\v{\gamma}_S$ at ambient temperature in terms of the 1-norm is given by
		\begin{equation}
			\norm{\v{q} - \v \gamma_S}_1 = \frac{1}{\left(e^{-\beta  E_S}+1\right) \left[\cosh \left(\beta  E_M\right)+\cosh \left(\beta  E_S\right)\right]},
		\end{equation}
		which is positive for every non-zero value of $E_M$ and $E_S$. This means that despite non-triviality of the memory's spectrum, our simple memory-extended protocol achieves a cooling advantage over Markovian processes.

	
	\subsection{Two-level control is sufficient for thermal operations}
	\label{subsec:twoLevelControl}
	
	\begin{figure}[t]
		\centering    \includegraphics{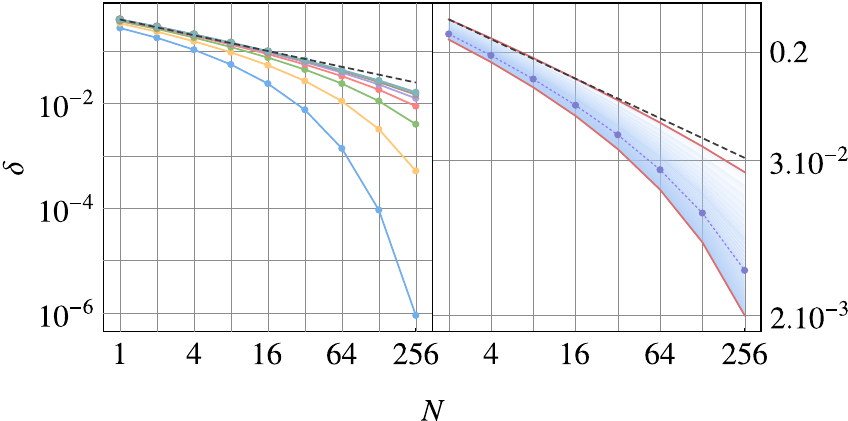}
		\caption{\textbf{Convergence to states inaccessible via elementary thermal operations.} 
			Log-log plot of the total variation distance $\delta$ between the state $\v{q}$ from Eq.~\eqref{eq:inaccessible} and the state obtained from \mbox{$\v{p}=(1,0,\dots,0)$} via the algorithm $\widetilde{\mathcal{P}}^{ \Pi_{\v{q}}}$, as a function of the memory size~$N$. Left: systems with energy spectra $E_i=i$ with $d = 3,\hdots,10$ (bottom to top) and for \mbox{$\beta = 1.1 \log(2)>\beta_{\mathrm{crit}}$}. For all presented dimensions the convergence is better than $1/(2\sqrt{N})$ (black dashed line). Right: systems with energy spectra $(0,E_1,E_2,1)$ taken from a grid with interval $\Delta = 1/64$ (translucent blue lines) for $\beta = 1.1\cdot\beta_{\text{crit}}$. The red lines represent the extreme cases of convergence, while the thick blue line is the average convergence. Note that the top red line almost agrees with $1/(2\sqrt{N})$ (black dashed line), which well approximates the expected convergence for $\beta = 0$ and agrees with the fact that it is obtained for almost completely degenerate levels.}
		\label{fig:M_point_convergence}
	\end{figure}
	
	In Ref.~\cite{Lostaglio2018elementarythermal} it has been proved that there exist thermodynamic state transformations that cannot be decomposed into the so-called \emph{elementary thermal operations}, i.e., thermal operations acting only on two levels of the system at the same time. Then, in Ref.~\cite{mazurek2018decomposability}, for any dimension $d$, an explicit final state $\v{q}\in C_+^{\mathrm{TO}}(\v{p})$ was given such that it cannot be achieved (even approximately) starting from the ground state $\v{p}=(1,0,\dots,0)$ using convex combinations of sequences of elementary thermal operations. More precisely, given the energy spectrum of the system with $E_{i+1}\geq E_i$, this final state is given by
	\begin{align}
		\label{eq:inaccessible}
		\v{q} & = \left(1 - \sum_{i=2}^{d}e^{-\beta E_i},    e^{-\beta E_2}, \dots, e^{-\beta E_{d}}\right)
	\end{align}
	with $\beta \geq \beta_{\text{crit}}$ such that $1 - \sum_{i=2}^{d}e^{-\beta_{\text{crit}} E_i} = 0$. It was then proven by the authors of Ref.~\cite{mazurek2018decomposability} that there exists $\epsilon>0$ such that any $\v{q}'$ achievable from $\v{p}$ satisfies $\delta(\v{q}, \v{q}') \geq \epsilon$. 
	
	Given the above, one might conclude that being able to selectively couple to the bath just two energy levels at once is highly restrictive and does not allow one to induce all the transitions possible via general thermal operations. This conclusion, however, would be incorrect, as the restriction only arises when one is limited to coupling only two levels of the \emph{system} at a given time. As we have seen in this paper, when one is allowed to bring an auxiliary $N$-level system in a thermal equilibrium state $\gamma_M$, then the ability to selectively couple to the bath just two energy levels of the joint system allows one to induce all transitions of the main system possible via thermal operations as $N\to\infty$. Crucially, the operation 
	\begin{equation}
		\E(\rho)=\rho\otimes \gamma_M
	\end{equation}
	is a thermal operation for every $N$. Thus, $\E$ followed by a sequence of elementary thermal operations on the joint system, followed by discarding the system $M$ at the end, can induce any energy-incoherent state transition of the system possible via general thermal operations. In other words, elementary control over two energy levels at a given time is sufficient to generate all thermodynamically possible transitions if we allow ancillary thermal systems.
	
	\begin{figure*}
		\centering
		\includegraphics{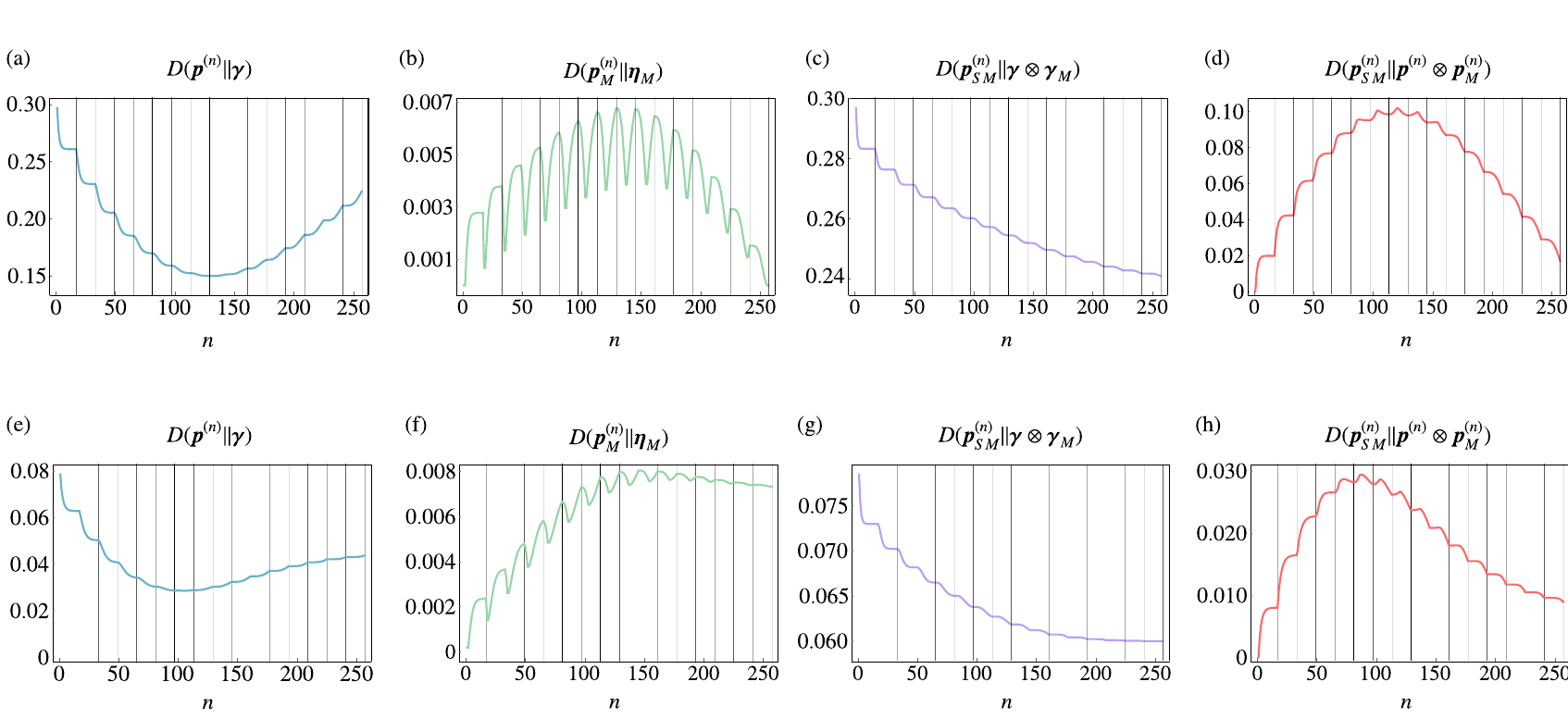}
		\caption{\textbf{Evolution of non-equilibrium free energies and correlations}. Non-equilibrium free energies of the main system [(a) and (e)], the memory [(b) and (f)], and the joint system [(c) and (g)], as well as the mutual information between the system and memory [(d) and (h)], as a function of the step number $n$ of the protocol $\P^{(ij)}$. Here, the composite system consists of a three-level system initialised in a state $\v p = (0.7,0.2,0.1)$ and a 16-dimensional degenerate memory prepared in a thermal (maximally mixed) state, and the plots are presented for two inverse temperatures, $\beta =0$ and $\beta = 0.5$.     
		}   
		\label{Fig:Free-energy}
	\end{figure*}
	
	We illustrate the above with the following numerical examples, showing that our MeMTP protocol $\widetilde{\mathcal{P}}^{\Pi_{\v q}}$ (which consists of only two-level operations) is able to transform $\v{p}$ into $\v{q}'$ that approximates $\v{q}$ arbitrarily well (i.e., $\delta(\v{q}',\v{q})\to 0$ as $N\to\infty$). In order to focus attention, we chose a constant $\beta = 1.1\cdot \beta_{\text{crit}}$. First, we considered systems of varying dimension $d$, up to $d_{\text{max}} = 80$, and fixed the energy structure to $E_i = i$, corresponding to quantum harmonic oscillator. We observed that the convergence for all these dimensions scales according to the predictions from Conjecture~\ref{Conj:beta_permutations}, which can be seen in the left panel of Fig.~\ref{fig:M_point_convergence} for $d =3,\hdots,10$ and memory sizes up to $N = 2^8$. Moreover, in order to ascertain that the convergence does not depend on the energy structure of the system, we fixed $d = 4$ and considered energy levels $(0, E_1, E_2, 1)$ with $E_1 < E_2$ taken from a grid with spacing $\Delta E = 2^{-6}$, resulting in $1953$ uniformly distributed points. For each of these points, we have considered the protocol with memory size up to $N = 2^8$. As demonstrated in the right panel of Fig.~\ref{fig:M_point_convergence}, it turns out again that the convergence, independently from the energy structure, is better than $1/\sqrt{N}$, in accordance with Conjecture~\ref{Conj:beta_permutations}. 
	
	
	\subsection{Non-equilibrium free energy evolution}
	\label{subsec:nonEquilibriumFree}
	
	To understand how non-Markovian effects arise in the memory-assisted protocol, we will now examine the evolution of the system and memory during the protocol $\mathcal{P}^{(ij)}$. More precisely, let us denote the joint state of the system and memory after the $n$-th two-level thermalisation step of the protocol by~$\v{p}_{SM}^{(n)}$. Similarly, let $\v{p}^{(n)}$ and $\v{p}_{M}^{(n)}$ denote the reduced states of the system and memory after the $n$-th step. Then, in the spirit of the analysis performed for elementary thermal operations in Ref.~\cite{Jeongrak2022}, we will examine the behaviour of the following entropic quantities. First, we will look at the relative entropy between $\v{p}^{(n)}$ and the thermal state of the system $\v \gamma$,
	\begin{equation}
		D\left(\v p^{(n)}\|\v{\gamma}\right)= \sum^{d}_{i=1} p^{(n)}_i \log \frac{p^{(n)}_i}{\gamma_i},
	\end{equation}
	which is a thermodynamic monotone, as it decreases under (Markovian) thermal operations, and is directly related to the non-equilibrium free energy~\cite{brandao2013resource}. We will also look into the behaviour of the analogous quantities for the joint system and the memory system. Moreover, to track the correlations that build up between the system and memory, we will investigate the mutual information between them, which is given by \mbox{$D(\v{p}^{(n)}_{SM} \| \v{p}^{(n)} \otimes \v{p}^{(n)}_M)$}.
	
	We use a three-level system and a $16$-dimensional memory as an illustrative example. We consider the joint system undergoing a $\beta$-swap protocol $\mathcal{P}^{(ij)}$ for $\beta=0$ and $\beta=0.5$. As shown in Figs.~\hyperref[Fig:Free-energy]{\ref{Fig:Free-energy}a} and~\hyperref[Fig:Free-energy]{\ref{Fig:Free-energy}e}, the non-equilibrium free energy of the main system initially decreases to a minimum, and then increases until it reaches a level that closely approximates the target state (a swap/$\beta$-swap). It is important to note that this observed increase is only possible because of the presence of the memory system. In contrast, note that the global non-equilibrium free energy decreases after each step, as depicted in Figs.~\hyperref[Fig:Free-energy]{\ref{Fig:Free-energy}c} and~\hyperref[Fig:Free-energy]{\ref{Fig:Free-energy}g}. However, during the process, a fraction of the main system's non-equilibrium free energy is transferred to the memory, which acts as a free energy storage. As such, it later enables the system to increase its local free energy again, hence allowing it to achieve the final state. More interestingly, the free energy of the memory, presented in Figs.~\hyperref[Fig:Free-energy]{\ref{Fig:Free-energy}b} and~\hyperref[Fig:Free-energy]{\ref{Fig:Free-energy}f}, exhibits a comb-like structure consisting of $d-1$ teeth with $(d+1)$ steps each. Specifically, within the $k$-th tooth, the first $d-k$ steps increase the free energy, while the remaining $k$ steps decrease it. Note that as long as the memory is not thermalised, its non-equilibrium free energy does not go to zero. However, for $\beta = 0$, it approaches a value very close to zero, but there are still correlations between the memory and the system, which are illustrated in Figs.~\hyperref[Fig:Free-energy]{\ref{Fig:Free-energy}d} and~\hyperref[Fig:Free-energy]{\ref{Fig:Free-energy}e}. 
	
	
	\section{Conclusions and outlook}~\label{sec:outlook}
	
	In this paper, we proposed a novel approach to investigate memory effects in thermodynamics by introducing the concept of memory-assisted Markovian thermal processes. These were defined by extending the framework of Markovian thermal processes with ancillary memory systems brought in thermal equilibrium states. Our construction allowed us to interpolate between the regime of memoryless dynamics and the one with full control over all degrees of freedom of the system and the bath. Using a family of protocols composed of Markovian thermal processes, we demonstrated that energy-incoherent states achievable from a given initial state via thermal operations could be approximated arbitrarily well via our algorithmic procedure employing memory. Furthermore, we analysed the convergence of our protocols in the infinite memory limit, finding polynomial and exponential convergence rates for infinite and finite temperatures, respectively. In the infinite temperature limit, we provided analytic convergence to the entire set of states accessible via thermal operations. For finite temperatures, we proved the convergence to a subset of accessible states and, based on extensive numerical evidence, we conjectured that a modified version of our protocol can realise arbitrary transitions achievable via thermal operations with an exponential convergence rate that grows with memory size. Our model-independent approach can be seen as a significant step forward in understanding ultimate limits of the Markovian evolution in general, which should be contrasted with the model-specific approaches to the so-called Markovian embedding \cite{siegle2010markovian,budini2013embedding,campbell2018system}. On the other hand, it may be seen as far less general than the approach taken in Ref.~\cite{ende2023finitedimensional}, where our work would correspond to a step towards simulating arbitrary evolution with Markov-Stinespring curves.

  Our framework should be contrasted with previous investigations into the embeddability of Markov chains into continuous Markov processes via hidden states~\cite{owen2019number, PhysRevResearch.3.023164, PhysRevX.11.021019}. These approaches consider a system with a set of "visible states," which are occupied and subjected to operations, along with "hidden states" that enable an embedding of operations that would otherwise be unfeasible. These hidden states could correspond to unpopulated levels within the main system—an option that remains viable for implementing our protocols. Alternatively, we could choose to extend the system by introducing a memory initialized in a specific energy state $\v r$ with populations ${r_i = \delta_{ij}}$ for some $j$. Nevertheless, this approach carries two caveats. The first involves the cost associated with bringing the non-equilibrium state $\v{r}$ into play, as opposed to a Gibbs state. The second pertains to the catalytic nature of the operation—whether operations of the form $\v{p} \otimes \v{r} \rightarrow \v{q} \otimes \v{r}$, where $\v{r}$ is retrievable, can indeed expand the space of accessible states~$\v{q}$.
	
	We also explained how our results can be employed to quantitatively assess the role of memory for the performance of thermodynamic protocols. In this context, we discussed the dependence on the memory size of the amount and quality of work extracted from a given non-equilibrium state. In addition, we introduced a minimal model designed to cool a two-level system below ambient temperature using a two-dimensional memory.
	However, the method can be used as well to investigate other thermodynamic protocols, such as information erasure or thermodynamically free encoding of information~\cite{korzekwa2022encoding}. Furthermore, we revealed that all transitions accessible via thermal operations can be accomplished using a restricted set of thermal operations that exclusively affect only two energy levels (of the system extended by a memory) at any given time. These findings carry important implications, not only for the development of efficient thermodynamic protocols, such as optimal cooling and Landauer erasure, but also for the exploration of novel avenues of research focused on characterising memory effects in thermodynamics. Finally, we also commented on the role played by the memory system as a free energy storage that enables non-Markovian effects.
	
	Our results offer many possibilities for generalisation and further research. First, one can try proving that the future thermal cone for memory-assisted Markovian thermal processes agrees with that of thermal operations in the limit of infinite memory, \mbox{$\lim_{N\rightarrow\infty} C^+_{\text{MeMTP}} = C^+_{\text{TO}}$}, as suggested by Conjecture~\ref{Conj:beta_permutations}. This can be built upon the proofs for $\beta$-swaps (Theorem \ref{Thm:beta-swap}) and $\beta$-$3$-cycles (Theorem \ref{Thm:beta-3-cycle}) presented in this work. Second, one may also attempt to show that the convergence of the proposed protocols $\mathcal{P}^{\Pi}$ and $\widetilde{\mathcal{P}}^{\Pi}$ is optimal with respect to the memory size. In other words, one could investigate the upper-bound on the power of memory-assisted Markovian thermal processes with a given size of memory $N$. Third, from a more practical point of view, it may be worthwhile to explore MeMTPs involving finite and infinite memory with non-trivial energy level structure. The practical relevance of this direction can be understood by considering the introduction of non-degenerate splitting of the levels for the full system, which would allow the level pairs to be addressed independently.
	
	In addition to the above, there are also less clear-cut goals for future efforts, such as expanding the studies beyond energy-incoherent states into the full range of quantum states. Furthermore, while our work focused on a single main system, an interesting avenue for future work could be to investigate many non-interacting subsystems. This extension could shed light on the combined consequences of finite-size and memory effects, providing valuable insights into the behaviour of larger, more complex systems. Specifically, characterising such effects could help to identify strategies for improving the efficiency of thermodynamic protocols in practical applications. Finally, one can also consider memory composed of many equivalent systems (such as a multi-qubit memory), and analyse the potential challenges arising from energy-level degeneration in such a setting.
	
	Finally, the feasibility of the introduced algorithm can be studied from a control perspective, following the approaches outlined in~\cite{wolpert2019space, PRXQuantum.4.010332}. The first approach introduces the notion of a space-time trade-off, which refers to the minimal amount of memory and time steps required to classically implement a given process. The second approach deals with control complexity, defining it as the number of levels a given operation non-trivially acts versus the time steps needed to implement that process. Our algorithm has specific time and memory requirements, namely $N$-dimensional memory and $N^2$ time steps. Furthermore, it is limited to the simplest two-level processes at any given time, meaning its control complexity is as low as possible. Nonetheless, future work might explore variations of our protocol (or any of the variants presented in Appendix~\ref{App:protocols}). Such explorations could focus on enabling parallelisation of certain steps by expanding available memory, thereby illustrating the space-time trade-off.

	\begin{acknowledgments}
		KK would like to thank Paul Skrzypczyk for asking a stimulating question at QIP2022 that started this project. The authors acknowledge financial support from the Foundation for Polish Science through the TEAM-NET project (contract no. POIR.04.04.00-00-17C1/18-00). During the process of preparing this manuscript, we were made aware of the overlap with Ref.~\cite{jeongraknelly}, and we thank Jeongrak Son and Nelly~H.~Y.~Ng for insightful discussions on this topic.
	\end{acknowledgments}
	
	\appendix
	
	
	\section{Thermodynamic evolution of energy-incoherent states}\label{App:partial-order}
		
		This work focuses on energy-incoherent states that can be represented by $d$-dimensional probability vectors of their eigenvalues, and their evolution is described by stochastic matrices. As a result, this specificity allows one to replace density operators and quantum channels with probability vectors and stochastic matrices, respectively. 
		Consequently, given two states, $\rho$ and $\sigma$, with eigenvalues $\v p$ and $\v q$, the existence of a thermal operation between $\rho$ and $\sigma$ is equivalent to the existence of a Gibbs-preserving stochastic matrix between $\v p$ and $\v q$~\cite{Janzing2000,horodecki2013fundamental}. Surprisingly, determining whether a given state $\v p$ can be transformed into a target state $\v q$ reduces to checking a finite list of conditions expressed by a partial-order relation between the initial and target state. In the infinite-temperature limit, these rules are encoded by the majorisation relation~\cite{marshall1979inequalities}, and in the finite temperature case by thermomajorisation~\cite{Rusch1978,horodecki2013fundamental}. If one further constrains the evolution to be Markovian, then the aforementioned rules are expressed by the notion of continuous thermomajorization~\cite{Lostaglio2019}. In this appendix, we review and summarise well-known results concerning these partial order relations.
		
		\subsection{Majorisation}
		
		To formulate the solution underlying the thermodynamic evolution of energy-incoherent states under thermal operation, we first need to recall the concept of majorisation~\cite{marshall1979inequalities} (see also Ref.~\cite{Lostaglio2019} for a detailed discussion).
		
		\begin{defn}[Majorisation]\label{def_Majorisation} Given two $d$-dimensional probability distributions $\v p$ and $\v q$, we say that $\v{p}$ \emph{majorises} $\v{q}$, and denote it by $\v p \succ \v q$, if and only if the following condition holds:
			\begin{equation}
				\label{eq_majorisation}
				\sum_{i=1}^k p_i^{\downarrow}\geq\sum_{i=1}^k q_i^{\downarrow} \quad \text{for all} \quad  k\in\{1\dots d\},
			\end{equation}
			where $\v{p}^{\downarrow}$ denotes the vector $\v{p}$ rearranged in a non-increasing order. 
		\end{defn}
		Equivalently, the majorisation relation between distributions can be expressed in terms of the \emph{majorisation curve}. Given a distribution $\v p$, we define a piecewise linear curve $f_{\v p}(x)$ in $\mathbb{R}^2$. This curve is obtained by joining the origin $(0,0)$ and the points $\left(\sum_{i=1}^{k} \eta_i, \sum_{i=1}^{k} p^{\downarrow}_i \right)$, for each $k$ in the set $\{1, ..., d\}$. Then, $\v{p}$ majorises $\v{q}$ if, and only if, the majorisation curve $f_{\v p}(x)$ of $\v{p}$ is always above that of $\v{q}$:
		\begin{equation}
			\v p \succ \v q \iff \forall x\in \left[0,1\right]:~f_{\v p}(x) \geq  f_{\v q}(x) \, .
		\end{equation}
		
		Majorisation can be interpreted as a formalisation of the notion of disorder with respect to the uniform distribution $\v \eta$. Note that the uniform state $\v \eta$ is majorised by any other distribution, while every distribution is majorised by the sharp state $\v s = (1, 0, ..., 0)$. Furthermore, if $\v p \succ \v q$, then all R\'{e}nyi entropies associated with $\v p$ are smaller than those of $\v q$~\cite{horodecki2013quantumness}. 
		
		\begin{figure}[t]
			\centering
			\includegraphics{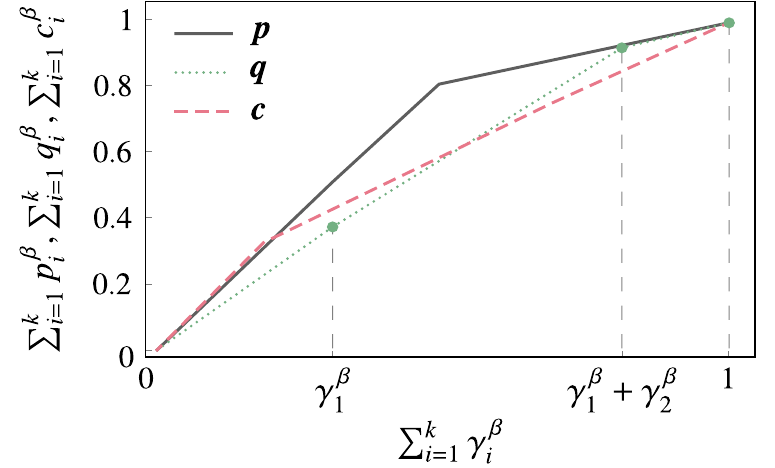}
			\caption{\label{fig-thermomajorisationcurves} \emph{Thermomajorisation curve}. For three different states $\v p$, $\v q$ and $\v c$, with a thermal Gibbs state $\v \gamma \propto (1, e^{-\beta}, e^{-2\beta})$, and $\beta >0$, we plot their thermomajorisation curves $f^{\, \beta}_{\v p}(x)$ [black curve], $f^{\, \beta}_{\v q}(x)$ [dashed red curve] and $f^{\, \beta}_{\v c}(x)$ [dot green curve], respectively. While $\v p$ thermomajorises $\v q$ [since $f^{\, \beta}_{\v p}(x)$ is never below $f^{\, \beta}_{\v q}(x)$], both states are incomparable with $\v c$, as their thermomajorisation curves cross with $f^{\, \beta}_{\v c}(x)$.}
		\end{figure}
		
		The most general transformation between two probability distributions, $\v{p}$ and $\v{q}$, is accomplished by a stochastic matrix, where $\Lambda$ fulfills $\Lambda_{ij} \geq 0$ and $\sum_{i} \Lambda_{ij} = 1$. In the infinite-temperature limit, thermodynamic transformations are accomplished by stochastic matrices that preserve the uniform distribution $\v{\eta}$. Mathematically, this means that $\Lambda$ is bistochastic, i.e., it additionally satisfies $\sum_j \Lambda_{ij} = 1$. As every bistochastic matrix can be written as a convex combination of permutation matrices, this implies that the set of $d \times d$ bistochastic matrices is a convex polytope with $d!$ vertices, one for each permutation in $\mathcal{S}_d$. Finally, since the existence of a bistochastic matrix connecting $\v{p}$ to $\v{q}$ is equivalent to $\v{p} \succ \v{q}$, we conclude that the set of states $C^{\textrm{TO}}_{+}(\v p)$ achievable via thermal operations from a given incoherent initial state $\v{p}$ is given by
		\begin{equation}\label{Eq:futurcone}
			C^{\textrm{TO}}_{+}(\v p) = \operatorname{conv}\left[\left\{\Pi \v p \, , \mathcal{S}_d \ni \pi \mapsto \Pi  \right\}\right] ,
		\end{equation}
		where $\Pi$ denotes a permutation matrix corresponding to the permutation $\v \pi$ with $d$ elements, and $\operatorname{conv[S]}$ the convex hull of the set $S$. A detailed discussion about the construction of the future $C^{\textrm{TO}}_{+}$ and its properties can be found in Sec. III of Ref.~\cite{deoliveirajunior2022}.

		\subsection{Thermomajorisation}\label{app:thermomajorisation}
		
		Thermomajorisation is a measure of disorder relative to the thermal distribution $\v \gamma$. Mathematically, given $\v \gamma$, we say that $\v p$ \emph{thermomajorise} $\v q$, and denote by $\v p \succ_{\beta}\v q$, if there exists a stochastic matrix $\Lambda^{\v \beta}$, which leaves the vector $\v \gamma$ invariant and maps $\v p$ onto $\v q$.
		
		The starting point of defining the thermodynamic equivalent of the majorisation is done by presenting the concept of $\beta$-ordering~(see Definition~\ref{Def-beta-ordering}), and then introducing the notion of thermomajorisation curves~\cite{horodecki2013fundamental}. Recall that the $\beta$-ordering of $\v p$ is defined by a permutation $\v \pi_{\v p}$ that sorts $p_i/\gamma_i$ in a non-increasing order. The $\beta$-ordered version of $\v p$, denoted as $\v p^\beta$, is obtained by arranging the elements of $\v p$ according to the permutation $\v \pi_{\v p}$
		\begin{equation}
			\v{p}^{\, \beta}=\left(p_{\v \pi_{\v{p}}^{-1}(1)},\dots ,p_{\v \pi_{\v{p}}^{-1}(d)}\right).
		\end{equation}
		
		A thermomajorisation curve is defined as a piecewise linear curve composed of linear segments connecting the point $(0,0)$ and the points defined by consecutive subsums of the $\beta$-ordered form of the probability $\v{p}^\beta$ and the Gibbs state $\v{\gamma}^\beta$,
		\begin{equation}
			\left(\sum_{i=1}^k\gamma^{\, \beta}_i,~\sum_{i=1}^k p^{\, \beta}_i\right):=\left(\sum_{i=1}^k\gamma_{\v \pi^{-1}_{\v{p}}(i)},~\sum_{i=1}^k p_{\v \pi^{-1}_{\v{p}}(i)}\right),
		\end{equation}
		for $k\in\{1,\dots,d\}$. Thus, given two $d$-dimensional probability distributions $\v p$ and $\v q$, and a fixed inverse temperature $\beta$, we say that $\v p$ \emph{thermomajorises} $\v q$ and denote it as $\v p \succ_{\beta} \v q$, if the thermomajorisation curve $f^{\, \beta}_{\v{p}}$ is above $f^{\, \beta}_{\v{q}}$ everywhere, i.e.,
		\begin{equation}
			\v p \succ_{\beta} \v q \iff \forall x\in[0,1]:~ f^{\, \beta}_{\v{p}}(x) \geq f^{\, \beta}_{\v{q}}(x) \, .
		\end{equation}
		See Fig.~\ref{fig-thermomajorisationcurves} for an example considering a three-level system. 
		
		For finite temperatures, general thermodynamic transformations between two probability distributions, $\v{p}$ and $\v{q}$, are accomplished by stochastic matrices preserving the Gibbs state, i.e., matrices $\Lambda$ such that $\Lambda \v{\gamma} = \v{\gamma}$, which are commonly referred to as Gibbs-preserving (GP) matrices in the literature. As the rules governing state transformations are no longer described by a majorisation relation, but instead by thermomajorisation, the characterisation of the future thermal cone is no longer given by Eq.~\eqref{Eq:futurcone}. Nevertheless, the set of Gibbs-preserving matrices is still a convex set, and the extreme points of the future thermal cone can be obtained by the following lemma.
		\begin{lem}[Lemma 12 of Ref.~\cite{Lostaglio2018elementarythermal}]
			\label{lem_extreme}
			Given $\v{p}$, consider the following distributions $\v{p}^{\, \v \pi}\in C^{\textrm{TO}}_{+}(\v p)$ constructed for each permutation $\v \pi\in \S_d$. For $i\in\left\{1,\dots,d\right\}$:
			\begin{enumerate}
				\item Let $x_i^{\v \pi}=\sum_{j=0}^{i} e^{-\beta E_{\v \pi^{-1}\left(j\right)}}$ and $y_i^{\v \pi}=f^{\, \beta}_{\v{p}}\left(x_i^{\v \pi}\right)$. 
				\item Define $p^{\v \pi}_i:=y^{\v \pi}_{\v \pi(i)} - y^{\v \pi}_{\v \pi(i)-1}$, with $y_{0}:=0$.
			\end{enumerate}
			Then, all extreme points of $C^{\textrm{TO}}_{+}(\v p)$ have the form $\v{p}^{\v \pi}$ for some ${\v \pi}$. In particular, this implies that $C^{\textrm{TO}}_{+}(\v p)$ has at most $d!$ extremal points.
		\end{lem}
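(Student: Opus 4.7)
The plan is to characterise $C^{\textrm{TO}}_{+}(\v p)$ via thermomajorisation and then exploit the geometric picture of thermomajorisation curves. Since $\v q \in C^{\textrm{TO}}_{+}(\v p)$ iff $\v p \succ_\beta \v q$, i.e.\ $f^{\, \beta}_{\v p}(x) \geq f^{\, \beta}_{\v q}(x)$ for all $x \in [0,1]$, the cone is the set of probability distributions whose thermomajorisation curve lies (pointwise) under that of $\v p$. The key observation is that a thermomajorisation curve of a distribution $\v q$ with $\beta$-ordering $\v \pi$ is piecewise linear with breakpoints at the abscissae $x_i^{\v \pi} = \sum_{j \le i} e^{-\beta E_{\v \pi^{-1}(j)}}$, and its slopes between breakpoints are precisely $q_{\v \pi^{-1}(i)}/\gamma_{\v \pi^{-1}(i)}$, arranged non-increasingly. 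So fixing a candidate $\beta$-order $\v \pi$ for the output amounts to fixing the abscissae of the breakpoints, and the remaining freedom is in choosing the ordinates.

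First I would verify that, for each $\v \pi$, the construction yields a bona fide element of $C^{\textrm{TO}}_{+}(\v p)$. Concretely, setting $y_i^{\v \pi} = f^{\, \beta}_{\v p}(x_i^{\v \pi})$ and $p_i^{\v \pi} = y_{\v \pi(i)}^{\v \pi} - y_{\v \pi(i)-1}^{\v \pi}$ produces a non-negative vector summing to $1$ (since $y_0 = 0$, $y_d = 1$ and $f^{\, \beta}_{\v p}$ is non-decreasing), whose thermomajorisation curve is the piecewise-linear interpolation of the points $(x_i^{\v \pi}, y_i^{\v \pi})$. Because $f^{\, \beta}_{\v p}$ is concave (its slopes are non-increasing by $\beta$-ordering of $\v p$), this interpolation lies below $f^{\, \beta}_{\v p}$, proving $\v p \succ_\beta \v p^{\v \pi}$; the same concavity also guarantees that the slopes of the interpolant are non-increasing, so $\v \pi$ really is the $\beta$-ordering of $\v p^{\v \pi}$.

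Next I would argue that every extreme point of $C^{\textrm{TO}}_{+}(\v p)$ is of this form. Partition $C^{\textrm{TO}}_{+}(\v p)$ according to the $\beta$-order of the output: within the sector of $\beta$-order $\v \pi$, the conditions $\v p \succ_\beta \v q$ together with consistency of the $\beta$-order and normalisation constitute a finite list of linear inequalities, hence a convex polytope. For any $\v q$ in this sector with $\v q \neq \v p^{\v \pi}$, there is some breakpoint $k$ at which $f^{\, \beta}_{\v q}(x_k^{\v \pi}) < f^{\, \beta}_{\v p}(x_k^{\v \pi})$; I would then construct a small symmetric perturbation $\v q \pm \epsilon \v v$ (redistributing mass between the $k$-th and $(k+1)$-th elements in $\beta$-order) that preserves the $\beta$-order and keeps both perturbations in $C^{\textrm{TO}}_{+}(\v p)$, thereby showing $\v q$ is not extreme. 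Hence the only candidate extreme point with $\beta$-order $\v \pi$ is $\v p^{\v \pi}$, so the total number of extreme points is at most $d!$.

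The main obstacle I anticipate is the last step: one needs the perturbation to simultaneously (i) preserve the strict $\beta$-order $\v \pi$ of $\v q$ (which can break if $\v q$ lies on the boundary between sectors, i.e.\ has some $q_i/\gamma_i = q_j/\gamma_j$), and (ii) keep the perturbed curve below $f^{\, \beta}_{\v p}$. Handling boundary $\beta$-orders requires care — in those cases the $\beta$-ordering is not unique and $\v q$ may appear as $\v p^{\v \pi}$ for several $\v \pi$'s, which is compatible with the ``at most $d!$'' bound but has to be tracked so that the sector-wise decomposition still exhausts the extreme points. A clean way to sidestep this is to argue by continuity: states with unique $\beta$-order are dense, the sector-wise argument goes through there, and the general case follows by taking limits of extreme-point decompositions, using the fact that each $\v p^{\v \pi}$ depends continuously on $\v p$ through $f^{\, \beta}_{\v p}$.
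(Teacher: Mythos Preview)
The paper does not supply its own proof of this lemma: it is quoted verbatim as Lemma~12 of Ref.~\cite{Lostaglio2018elementarythermal} and used as a black box to characterise the extreme points of $C^{\textrm{TO}}_{+}(\v p)$. There is therefore nothing in this paper to compare your argument against.

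That said, your sketch follows the standard route used in the original reference: (i) show each $\v p^{\v\pi}$ lies in the cone by concavity of $f^{\,\beta}_{\v p}$, and (ii) show that any $\v q$ in the cone whose thermomajorisation curve is strictly below $f^{\,\beta}_{\v p}$ at some elbow can be written as a midpoint of two nearby admissible states, hence is not extreme. The one place where your write-up is still soft is the boundary case where $\v q$ has a degenerate $\beta$-order. Your density/continuity patch is not quite enough as stated, because limits of extreme-point decompositions need not be extreme-point decompositions. The cleaner fix is direct: if $q_{\v\pi^{-1}(k)}/\gamma_{\v\pi^{-1}(k)} = q_{\v\pi^{-1}(k+1)}/\gamma_{\v\pi^{-1}(k+1)}$ for some $k$, then the same mass-exchange perturbation between those two levels keeps the curve unchanged on one side and strictly interior on the other, so it still stays under $f^{\,\beta}_{\v p}$ and preserves (one choice of) $\beta$-order. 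Hence degeneracy only enlarges the set of admissible perturbations, and the sector-wise argument goes through without appealing to limits.
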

		The above lemma allows one to characterise the future thermal cone of $\v p$ by constructing states $\v p^{\v \pi}$ for each $\v \pi \in \mathcal S_d$, and taking their convex hull.

		\subsection{Continuous thermomajorisation}\label{app:continuousmajorisation}
		If we enforce that the transformation is Markovian, the conditions are captured by the notion of continuous thermomajorisation~\cite{lostaglio2021continuous}. Formally, one asks if there exists a continuous path within the probability simplex that connects these two distributions, such that the preceding distribution is thermomajorised by the succeeding one at any two points along this path. Such a notion is defined as follows. 
		\begin{defn}[Continuous thermomajorisation] \label{def:markov_majo}
			A distribution $\v{p}$ \emph{continuously thermomajorises} $\v{q}$, denoted $\v{p} \ggcurly_\beta \v{q}$, if there exists a continuous path of probability distributions $\v{r}(t)$ for $t\in[0,t_f)$ such that
			\begin{enumerate}
				\item $\v{r}(0)=\v{p}$,
				\item $\forall~ t_1,t_2\in[0,t_f):\quad t_1 \leq t_2 \Rightarrow\v{r}(t_1)\succ_\beta \v{r}(t_2)$,
				\item $\v{r}(t_f)=\v{q}$.
			\end{enumerate}
			The path $\v{r}(t)$ is called \emph{thermomajorising trajectory} from $\v{p}$ to~$\v{q}$.
		\end{defn}
		Let us make a few comments about the above definition. Firstly, it is worth noting that when $\beta =0$ and the thermal state is replaced by the uniform fixed point, $\v{\gamma}=\v{\eta}$, the above definition corresponds to a continuous version of standard majorisation, denoted by $\ggcurly$. Secondly, determining whether a given initial state continuously thermomajorises a target state is a difficult problem, and unlike the other variants of majorisation presented so far, there is no continuous thermomajorisation curve for this type of majorisation that would facilitate a quick check. Nevertheless, the necessary and sufficient conditions are known~\cite{Lostaglio2019}. These comprise a complete set of entropy production inequalities that can be reduced to a finitely verifiable set of constraints. Lastly, the notion of continuous thermomajorisation encapsulates all constrains of memoryless thermal processes on population dynamics.
	
	\section{Regularised incomplete beta function}
	\label{app:beta}
	
	The content of this appendix is based on Refs.~\cite{artin2015gamma,NIST:DLMF}.
	
	\subsection{Definition and properties}
	
	The finite-size corrections to a $\beta$-swap and its compositions are determined by the cumulative distribution function (CDF) known as the regularised incomplete beta function. This function is closely related to the well-known beta function $B(a,b)$ and is widely used in deriving our results. To provide the necessary background and present its key properties, we first recall the definition and properties of the beta function:
	\begin{equation}\label{Eq:beta_function}
		B(a,b) = \int_{0}^{1} t^{a-1} (1-t)^{b-1} dt ,
	\end{equation}
	with $a,b \in \mathbbm{C}$. The beta function relates to the gamma function in the following way
	\begin{equation}\label{Eq:beta_and_gamma_function}
		B(a,b) = \frac{\Gamma(a)\Gamma(b)}{\Gamma(a+b)}.
	\end{equation}
	
	The incomplete beta function $B_x(a,b)$ is defined by changing the upper limit of integration in Eq.~\eqref{Eq:beta_function} to an arbitrary variable, i.e.,
	\begin{equation}\label{Eq:incomplete_beta_function}
		B_x(a,b) = \int_{0}^{x} t^{a-1} (1-t)^{b-1} dt.
	\end{equation}
	Finally, we define the regularised incomplete beta function $I_x(a,b)$ (regularised beta function for short) by normalising the incomplete beta function,
	\begin{equation}
		I_x(a,b) = \frac{B_x(a,b)}{B(a,b)}.
	\end{equation}
	We present plots of the regularised beta function for a few selected values of $x$ in Fig.~\eqref{Fig:regularised_beta_function}.
	
	Throughout this work, we assume that $a, b >0$ and \mbox{$0\leq x \leq 1$}. It is easily noted that $I_0(a,b) = 0$, $I_1(a,b) = 1$, and $I_0(a,b) \leq I_x(a,b) \leq I_1(a,b)$, thus making it a proper CDF. Furthermore, for $a,\,b\in\mathbb{Z}$, $I_x(a,b)$ can be written in terms of a binomial function
	\begin{equation}\label{Eq:incomplete_beta_function_binomial}
		I_x(a,b) = (1-x)^b\sum_{j=a}^{\infty}\binom{b+j-1}{j}x^{j}.
	\end{equation}
	From this equation, by using the geometric series and its derivatives, one can conclude that
	\begin{equation}
		\label{Eq:incomplete_beta_function_binomial_a_0}
		I_x(0,b) = 1.
	\end{equation}
	\begin{figure}[t]
		\centering
		\includegraphics{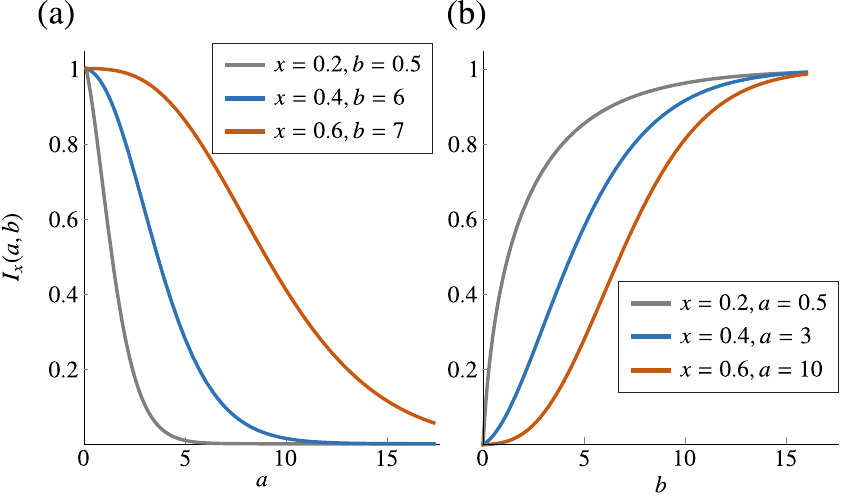}
		\caption{\textbf{Regularised beta function}. Plots of the regularised incomplete beta function as a function of (a) $a$ and (b) $b$. }
		\label{Fig:regularised_beta_function}
	\end{figure}
	Moreover,
	\begin{equation}\label{Eq:incompleta-beta-function-difference}
		(1-x)^b\sum_{j=a}^{n}\binom{b+j-1}{j}x^{j} = I_x(a,b)-I_x(n+1,b).
	\end{equation}
	If $a =1$, then $I_x(1,b) = 1 - (1-x)^b$ and Eq.~\eqref{Eq:incomplete_beta_function_binomial} simplifies to 
	\begin{equation}
		(1-x)^b\sum_{j=1}^{n}\binom{b+j-1}{j}x^{j} = 1 - (1-x)^b - I_x(n+1,b).
	\end{equation}
	The next two useful properties of $I_x(a,b)$ are the symmetry relation 
	\begin{equation}\label{Eq:incomplete_beta_function_symmetry}
		I_x(a,b) = 1- I_{1-x}(b,a),
	\end{equation}
	and the relation for an equal argument,
	\begin{equation}\label{Eq:incomplete_beta_function_recurrence}
		I_x(a,a) = \frac{1}{2}I_{4x(1-x)}\qty(a,\frac{1}{2}),
	\end{equation}
	when $0\leq x \leq 1/2$.
	Finally, there are two recurrence relations which allow one to shift either of the arguments of the function by one,
	\begin{subequations} 
		\begin{align}\label{Eq:incomplete_beta_function_recurrence2}
			I_x(a,b) &= I_x(a+1,b)+ \frac{x^a(1-x)^b}{a B(a,b)}, \\ \label{Eq:incomplete_beta_function_recurrence3}
			I_x(a,b) &= I_x(a,b+1)- \frac{x^a(1-x)^b}{a B(a,b)}, \\
			I_x(a,b) &= I_x(a+1,b-1)+\frac{x^a (1-x)^{b-1}}{aB(a,b)}  \label{Eq:incomplete_beta_function_recurrence4}, \\
			I_x(a,b) &= I_x(a-1,b+1)-\frac{x^{a-1} (1-x)^{b}}{aB(a,b)}.  \label{Eq:incomplete_beta_function_recurrence5}
		\end{align}
	\end{subequations}
	
	
	\subsection{General relations involving the gamma function}
	
	Next, let us present general relations and properties of the gamma function that are extensively used in our proofs. First, recall that for every positive integer $n$
	\begin{equation}
		\Gamma(n) = (n-1)!. 
	\end{equation}
	Using the above, one can derive a simple formula for the following expression that appears when dealing with the regularised beta function:
	\begin{equation}\label{Eq:beta-function-expansion-binomial}
		\frac{1}{n B(n,n)} = \frac{\Gamma(2n)}{n \Gamma(n)^2} = \frac{\Gamma(2n)}{n \Gamma(n)^2} \frac{2n}{2n} = \frac{1}{2}\binom{2n}{n}.
	\end{equation}
	Other important functional equation for the gamma function is the Legendre duplication formula:
	\begin{equation}\label{Eq:Gamma-function-legendre}
		\Gamma(n)\Gamma\qty(n+\frac{1}{2}) = 2^{1-2n}\sqrt{\pi}\Gamma(2n).    
	\end{equation}
	The factorial terms can be approximated using Stirling's approximation
	\begin{equation}\label{Eq:Stirlings-approximation}
		n! = \sqrt{2\pi n}\qty(\frac{n}{e})^n\left(1+O(n^{-1})\right).
	\end{equation}
	
	
	\subsection{Asymptotic analysis}
	
	We will be interested in the asymptotic behaviour of the regularised beta function. So, in this section, we introduce important relations and identities that will be useful for proving our main theorems. Let us begin by considering the regularised beta function $I_x(a,b)$, where $x$ and $b$ fixed. For $a \to \infty$, we have the following asymptotic expansion:
	\begin{align}\label{Eq:regularised-beta-asymptotic}
		I_x(a,b) = &\:\:\Gamma(a+b)x^a(1-x)^{b-1}\nonumber\\ &\:\:\times \Biggr[\sum_{k=0}^{n-1}\frac{1}{\Gamma(a+k+1)\Gamma(b-k)}\qty(\frac{x}{1-x})^k \nonumber\\ &\:+ O\qty(\frac{1}{\Gamma(a+n+1)})\Biggr].
	\end{align}
	Note that $O$-term vanishes in the limit only if $n \geq b$. Furthermore, for each $n=0,1,2, ...$ If $b = 1, 2, 3, ...$, and $n>b$,  the $O$-term can be omitted, as the result is exact. In this work, Eq.~\eqref{Eq:regularised-beta-asymptotic} will be expanded up to the second order. Specifically, for the values of $a = N, b= 1/2$, we have the following equation
	\begin{align}\label{Eq:regularised-beta-expansion-12}
		\! I_{x}\qty(N,\frac{1}{2}) &\simeq\! \frac{x^N}{\sqrt{1-x}}\qty[\frac{\Gamma\qty(N+\frac{1}{2})}{\Gamma\qty(N+1) \Gamma\qty(\frac{1}{2})}+\frac{\Gamma\qty(N+\frac{1}{2})}{\Gamma\qty(N+2) \Gamma\qty(-\frac{1}{2})}\qty(\frac{x}{1-x})] \nonumber \\
		&=\!\frac{x^N}{\sqrt{1-x}}\frac{(2N)!}{4^N (N!)^2}\qty[1-\frac{x}{2(N+1)(1-x)}],
	\end{align}
	where $\simeq$ symbol hides the terms of the order $O(x^N/N^2)$.
	
	Next, we will consider sums of the regularised beta functions over the second argument and their limit as \mbox{$a \to \infty$}. We start with
	\begin{align}
		\sum_{i=1}^{a}I_x(a,i+1) &= \sum_{i=1}^a \frac{B_x(a,i+1)\Gamma(a+1+i)}{\Gamma(a)\Gamma(i+1)} \nonumber \\
		&= \sum_{i=1}^a \frac{(a+i)!}{(a-1)! i!}\int_{0}^x dt \: t^{a-1}(1-t)^i  \nonumber \\
		&= \int_{0}^x dt \sum_{i=1}^a \binom{a+i}{i}t^{a-1}(1-t)^i,
	\end{align}
	where we have used definitions of the beta function and the gamma function for integer arguments. Using Eq.~\eqref{Eq:incompleta-beta-function-difference}, the above expression can be recast as
	\begin{align}
		\sum_{i=1}^{a}I_x(a,i+1) &= \int_{0}^x \frac{dt}{t^2}\:[I_x(1,a-1)-I_{1-x}(a-1,a-1)] \nonumber \\
		&= \int_{0}^x \frac{dt}{t^2}\:\qty[\frac{1}{2}I_{4x(1-x)}\qty(a-1,\frac{1}{2})-I_x(a-1,1)]
	\end{align}
	for $0\leq x \leq 1/2$.
	
	Finally, using the asymptotic expansion from~Eq.~\eqref{Eq:regularised-beta-asymptotic}, we get
	\begin{align}
		\int_{0}^x dt\: \Biggl\{-t^{a-3}&+\frac{1}{2x^2}\Biggl[\frac{\Gamma(a-1/2)}{\Gamma(a+1)}\frac{[4x(1-x)]^{a-1}}{\sqrt{1-4x(1-x)}} + \nonumber\\ & O\qty(\frac{(4x(1-x))^{a-1}\Gamma(a-1/2)}{\Gamma(a+1)}) \Biggl]\Biggl\}.
	\end{align}
	Thus, we see that this term vanishes in the limit of $a \to \infty$, and we find that the following sum vanishes in the limit:
	\begin{equation}\label{Eq:beta-regularised-limit-sum}
		\lim_{a\to\infty} \sum_{i=1}^{a}I_x(a,i+1) = 0. 
	\end{equation}
	
	Moreover, we will need the following sum:
	\begin{align}
		\sum_{i=1}^N &I_{x}(N+1-i,N)= \sum_{i=1}^N\frac{B_{x}(N+1-i,N)}{B(N+1-i,N)}\nonumber\\
		=&\sum_{i=1}^N\frac{N\Gamma(2N+1-i)}{N\Gamma(N+1-i)\Gamma(N)}\int_{0}^{x}dt\, t^{N-i}(1-t)^{N-1} \times \nonumber\\ 
		&\times \int_{0}^{x}dt N(1-t)^{N-1}\sum_{i=1}^N\binom{2N-i}{N-i}t^{N-i}\nonumber\\
		=& \int_{0}^{x}dt N\frac{(1-t)^{N-1}}{(1-t)^{N+1}}(1-t)^{N+1}\sum_{j=0}^{N-1}\binom{(N+1)+j-1}{j}t^{j}\nonumber\\
		=&\int_{0}^{x}dt \frac{N}{(1-t)^2}\Bigg(I_{1-x}(0,N+1)-I_{1-x}(N,N+1)\Bigg)\nonumber\\
		=& \int_{0}^{x}dt \frac{N}{(1-t)^2}\Bigg(1-I_{1-x}(N,N)-\frac{t^N(1-t)^N}{NB(N,N)}\Bigg)\nonumber\\
		=& N \int_{0}^{x}dt\frac{1}{(1-t)^2}=N \frac{x}{1-x}. \label{Eq:regularised-beta-function-sum-2}
	\end{align}
	
	
	\section{Proofs of Lemma~\ref{Lem:transposition} and Theorem~\ref{Thm:beta-swap}}~\label{app:neighbour}
	
	To prove Lemma~\ref{Lem:transposition} and Theorem~\ref{Thm:beta-swap}, we will consider a composite system consisting of the main $d$-dimensional system and an $N$-dimensional memory system. Without loss of generality, we can assume that the main system is a two-level system with $i= 1$ and $j = 2$, whose state is described by unnormalised probability vector. We begin by deriving an expression that describes how the composite system evolves under the memory-assisted protocol $\mathcal{P}^{(12)}$. Next, to gain insight into the behaviour of the joint system as the memory size grows, we will prove the asymptotic result. Finally, in the last subsection, we will show how this result implies the desired convergence rates.
	
	
	\subsection{Dynamics induced by two-level thermalisations}
	
	Consider a two-level system, described by a Hamiltonian $H = E_1 \ketbra{E_1}{E_1} + E_2 \ketbra{E_2}{E_2}$ and initially prepared in an energy-incoherent state \mbox{$\v p = (b,c)$}, together with a memory system, described by a trivial Hamiltonian $H_M = 0$ and prepared in a maximally mixed state $\v{\eta}_M = \qty(1, \hdots, 1)/N$. The joint state $\v r:= \v p \otimes \v \eta_M$ of the composite system is then given by
	\begin{equation}\label{Eq:joint-state}
		\v{r}^{(0)} \equiv\v{r} = \frac{1}{N}\Big(\underbrace{b,\hdots,b}_{N\, \text{times}}|\underbrace{c ,\hdots,c }_{N\, \text{times}}\Big),
	\end{equation}
	whereas the joint thermal distribution \mbox{$\v \Gamma := \v \gamma \otimes \v \eta_M$} is given by
	\begin{align}
		\v \Gamma &= \frac{1}{N(e^{-\beta E_1}+ e^{-\beta E_2})}\qty[e^{-\beta E_1}, \cdots , e^{-\beta E_1} |e^{-\beta E_2}, \cdots , e^{-\beta E_2}] \nonumber \\ &=\frac{1}{N}\qty(\gamma_1, ..., \gamma_1 | \gamma_2, ..., \gamma_2). 
	\end{align}
	For the sake of brevity, we introduce the rescaled Gibbs factors, which will be used extensively in the proofs:
	\begin{equation}\label{Eq:rescaled-Gibbs-factor}
		\Gamma_{ij} =\frac{\gamma_i}{\gamma_i+\gamma_j}.
	\end{equation}
	
	Under a series of two-level thermalisations, the joint state of the composite system at the $k$-th round is given by
	\begin{equation}
		\!\!\! \mathcal{R}_k\qty(\v{r}^{(k-1)})\equiv \v{r}^{(k)} \! =\! \qty(b_1^{(k)},\hdots, b_N^{(k)}\,|\,c_1^{(N)},\hdots,c^{(N)}_{k},c,\hdots,c),
	\end{equation}
	where $b^{(k)}_j$ and $c^{(N)}_j$ satisfy the following recurrence relations
	\begin{align}\label{Eq:recurrence-relation-entries_bj}
		b^{(k)}_{j} &= \Gamma^{j}_{21}\,\qty(\frac{\Gamma_{12}}{\Gamma_{21}}c+\Gamma_{12}\sum_{i=1}^j b^{(k-1)}_i \Gamma^{\, -i}_{21}), \\
		c^{(N)}_j &= \Gamma_{21}^{\, N}c+\Gamma_{21}^{\, N+1}\sum_{i=1}^N b^{(j-1)}_i \Gamma^{\, -i}_{21},
		\label{Eq:recurrence-relation-entries_cj}
	\end{align}
	with $b^{(0)}_j = b$. Equations \eqref{Eq:recurrence-relation-entries_bj} and \eqref{Eq:recurrence-relation-entries_cj} can be understood by noting that during the $k$-th round of two-level thermalisations, an additional $c$ is added to the previous $(k-1)$-th entry and the resulting state is again thermalised. By iterating this process for the first $k$ rounds, we can derive a closed-form expression for the entries $b^{(k)}_j$ and $c^{(N)}_j$:
	\begin{align}\label{Eq:bj_entries}
		\!\!\! b_j^{(k)} &=\Gamma_{12}\Gamma_{21}^{j-1}c\sum_{i=0}^{k-1}\binom{j\! +\! i\! -\! 1}{i}\Gamma_{12}^{i}+b\Gamma_{12}^{k}\sum_{i=1}^{j}\binom{j\! +\! k\! -\! 1\! -\! i}{k\! -\! 1}\Gamma_{21}^{j-i}, \\
		\!\!\! c^{(N)}_j &= c\Gamma^N_{21} \sum_{i=0}^{j-1}\binom{N\! +\! i\! -\! 1}{i}\Gamma_{12}^{i}+b\Gamma_{21}\Gamma^j_{12}\sum_{i=0}^{N-1} \binom{i\! +\!  j}{j}\Gamma_{21}^{i}.\label{Eq:cj_entries}
	\end{align}
	After $N^2$ rounds of two-level thermalisations, the composite final state $\v r^{(N)}$ is given by
	\begin{align}
		\widetilde{\mathcal{P}}^{(12)}(\v r) & \equiv\v r^{(N)} = \v q \otimes \v \eta_M \nonumber\\ &=\frac{1}{N}{\small\qty[b^{(N)}_1,\hdots,b^{(N)}_N \, \bigg| \, c^{(N)}_{1},\hdots, c^{(N)}_{N}]}.
	\end{align}
	
	
	\subsection{Infinite memory limit}
	\label{app:limit_proofs}
	
	Up till now, we have obtained a closed-form expression describing the action of the truncated protocol $\widetilde{\mathcal{P}}^{(12)}$. In order to prove Lemma~\ref{Lem:transposition}~and~Theorem~\ref{Thm:beta-swap}, the next step consists of decoupling the main system from memory using a full thermalisation $\mathcal{T}$:
	\begin{equation}
		\mathcal{P}^{(12)}(\v p \otimes \v \gamma_M) = \mathcal{T} \circ \widetilde{\mathcal{P}}^{(12)}(\v p \otimes \v \gamma_M). 
	\end{equation}
	Since we are initially focused on demonstrating the asymptotic results, our aim is to show that as $N$ approaches infinity, we achieve a $\beta$-swap:
	\begin{subequations}
		\begin{align}
			\lim_{N\to \infty} \frac{\sum_{i=1}^{N} b^{(N)}_i}{N} &= c+b(1-e^{-\beta (E_2-E_1)}),\label{Eq:limit_beta_swap1}\\ \lim_{N\to \infty} \,\frac{\sum_{i=1}^{N} c_i^{(N)}}{N} &= be^{-\beta (E_2-E_1)} .\label{Eq:limit_beta_swap2}
		\end{align}
	\end{subequations}
	To prove the limits in Eqs.~\eqref{Eq:limit_beta_swap1} and \eqref{Eq:limit_beta_swap2}, we will begin by using the conservation of probability,
	\begin{equation}\label{Eq:simple_fact}
		\lim_{N\to \infty} \sum_{i=1}^{N}\frac{ b^{(N)}_i}{N}+ \lim_{N\to \infty}\sum_{i=1}^{N}\frac{ c_{i}^{(N)}}{N} = b+c.   
	\end{equation}
	This allows us to express the first limit as a function of the second:
	\begin{equation}\label{Eq:first_limit}
		\lim_{N\to \infty} \sum_{i=1}^{N}\frac{ b^{(N)}_i}{N} = b+c - \lim_{N\to \infty}\sum_{i=1}^{N}\frac{ c^{(N)}_{i}}{N} .
	\end{equation}
	
	We will now examine the non-trivial term on the right hand side of Eq.~\eqref{Eq:first_limit}. We start by using Eq.~\eqref{Eq:incompleta-beta-function-difference} to re-write the first term of $c^{(N)}_j$ appearing in Eq.~\eqref{Eq:cj_entries} in terms of the regularised beta function
	\begin{align}
		\eval{c^{(N)}_j}_{\substack{b = 0\\c=1}} = \Gamma^N_{21} \sum_{i=0}^{j-1}\binom{N+i-1}{i}\Gamma_{12}^{i} &= [I_{\Gamma_{12}}(0,N)-I_{\Gamma_{12}}(j,N)] \nonumber \\ &= I_{\Gamma_{21}}(N,j),
	\end{align}
	where we also used Eq.~\eqref{Eq:incomplete_beta_function_symmetry} to invert the arguments of the regularised beta function. Thus, using \eqref{Eq:beta-regularised-limit-sum}, we conclude that
	\begin{equation}
		\lim_{N\to\infty}\frac{1}{N}\sum_{i=1}^{N}\eval{c^{(N)}_j}_{\substack{b = 0\\c=1}} = \lim_{N\to\infty}\frac{1}{N}\sum_{i=1}^{N}I_{\Gamma_{21}}(N,i) = 0.    
	\end{equation}
	Taking into account the second term of $c^{(N)}_j$ appearing in Eq.~\eqref{Eq:cj_entries}, one can immediately evaluate the sum
	\begin{align}
		\lim_{N\rightarrow\infty}\eval{c^{(N)}_j}_{\substack{b = 1\\c=0}}  = \Gamma_{21}\Gamma^j_{12}\sum_{i=0}^{\infty} \binom{i+k}{k}\Gamma_{21}^{i} = \frac{\Gamma_{21}}{\Gamma_{12}},
	\end{align}
	and therefore obtain that
	\begin{equation}
		\lim_{N\to \infty} \,\frac{\sum_{i=1}^{N} c^{(N)}_i}{N} = be^{-\beta (E_2-E_1)}.    
	\end{equation}
	Substituting this result to Eq.~\eqref{Eq:first_limit}, we prove that
	\begin{equation}
		\lim_{N\to \infty} \frac{\sum_{i=1}^{N} b^{(N)}_i}{N} = c+b[1-e^{-\beta (E_2 - E_1)}].
	\end{equation}
	As a result, in the limit of $N\to\infty$, the protocol $\mathcal{P}^{(12)}$ achieves a $\beta$-swap. 
	
	
	\subsection{Finite memory convergence rates}
	\label{app:convergence_proofs}
	
	To complete the proofs of Lemma~\ref{Lem:transposition} and Theorem~\ref{Thm:beta-swap}, we will now analyse what the approximation error is for a finite size of the memory $N$. This will tell us how quickly the initial state convergences to a $\beta$-swap as a function of $N$. First, we define two functions governing the convergence
	\begin{subequations}
		\begin{align}\label{Eq:error_function1}
			\!\!\! \mathbb{E}:=& \frac{1}{N} \sum_{i = 1}^N\eval{c^{(N)}_{i}}_{c=1,b=0} \!\! = \frac{\Gamma^N_{21}}{N}\sum_{i=1}^{N}\sum_{j=0}^{i-1} \binom{N\! +\! j\! -\! 1}{j}\Gamma^j_{12} , \\
			\!\!\! \mathbb{F}:=& \frac{1}{N} \sum_{j = 1}^N\eval{b_j^{(N)}}_{c=0,b=1  } \!\! = \frac{\Gamma^N_{12}}{N}\sum_{j=1}^N \sum_{i=1 }^{j}\binom{j\! +\! N\! -\! 1\! -\! i}{N\! -\! 1}\Gamma^{j-i}_{21}\label{Eq:error_function2},\!
		\end{align}
	\end{subequations}
	which allow us to write the final state of the system as
	
	\begin{equation}
		\label{eq:finalstate}
		\v{q} = b\mqty(\mathbb{F} \\ 1-\mathbb{F} ) + c\mqty(1-\mathbb{E}\\\mathbb{E}).
	\end{equation}
	
	Next, we will asymptotically expand Eq.~\eqref{Eq:error_function1}. The starting point is to reduce the double sum into one, then convert the binomial sums into regularised beta functions and use its properties given by Eqs.~\eqref{Eq:incomplete_beta_function_recurrence4},~\eqref{Eq:incomplete_beta_function_recurrence}~and~\eqref{Eq:beta-function-expansion-binomial}:
	\begin{align}
		\mathbb{E}&= \frac{\Gamma^N_{21}}{N}\sum_{i=1}^N(N-i+1) \binom{N+i-2}{i-1}\Gamma_{12}^{i-1} \nonumber \\&= \Gamma^N_{21}\sum_{i=0}^{N-1}\binom{N+i-1}{i}\Gamma_{12}^{i} - \frac{\Gamma^{N}}{N}\sum_{i=1}^{N-1} i\: \binom{N+i-1}{i}\Gamma_{12}^{i} \nonumber \\
		&= I_{\Gamma_{21}}(N,N)-\frac{\Gamma_{12}}{\Gamma_{21}}I_{\Gamma_{21}}(N+1,N-1)\nonumber \\ &=\qty(1-\frac{\Gamma_{12}}{\Gamma_{21}})I_{\Gamma_{21}}(N,N)+\frac{\Gamma_{12}}{\Gamma_{21}}\frac{\Gamma_{21}^N \Gamma_{12}^{N-1}}{NB(N,N)} \nonumber\\
		&=\frac{1}{\Gamma_{21}}\qty[(\Gamma_{21}-\Gamma_{12})\frac{1}{2}I_{4\Gamma_{21}\Gamma_{12}}\qty(N,\frac{1}{2})+\frac{(\Gamma_{21} \Gamma_{12})^{N}}{N}\frac{\Gamma(2N)}{\Gamma(N)^2}] \label{Eq:E_function_before_exp}.
	\end{align}
	
	Now we expand Eq.~\eqref{Eq:E_function_before_exp} up to second order using Eq.~\eqref{Eq:regularised-beta-expansion-12}. Recall that such an expansion is an approximation up to terms of the order $O(x^N/N^2)$ (where $x$ will be given by \mbox{$4\Gamma_{21}\Gamma_{12}$}), which will be dropped since our final approximation will be up to the order $o(x^N/N^{3/2})$ or $o(1/N^{1/2})$ for finite and infinite temperatures, respectively. Then, the gamma functions appearing in the expansions are simplified using Eq.~\eqref{Eq:Gamma-function-legendre} and~Eq.~\eqref{Eq:beta-function-expansion-binomial}. Finally, using Stirling's approximation [Eq.~\eqref{Eq:Stirlings-approximation}], we arrive at 
	\begin{align}\label{Eq:convergence_E1}
		\mathbbm{E} = & \frac{(4\Gamma_{21}\Gamma_{12})^N}{2\Gamma_{21}\sqrt{\pi N}}\Bigg[\frac{(\Gamma_{21}-\Gamma_{12})}{\sqrt{1-4\Gamma_{21}\Gamma_{12}}}-\frac{(\Gamma_{21}-\Gamma_{12})(4\Gamma_{21}\Gamma_{12})}{2(1-4\Gamma_{21}\Gamma_{12})^{3/2}(N+1)} \nonumber\\    &\quad\qquad\qquad+1+o\left(N^{-1}\right)\Bigg][1+O(N^{-1})].
	\end{align}
	As the last step, we simplify the above expression for $\Gamma_{12}>1/2$ (finite temperature case) by using the fact that \mbox{$\Gamma_{12}=1-\Gamma_{21}$}, to arrive at
	
	\begin{equation}
		\!\!\!\! \!\!\eval{\mathbb{E}}_{\Gamma_{12}>\frac{1}{2}} \!\!\!\!\!\!\! =  (4\Gamma_{12}\Gamma_{21})^{N}\left[\frac{\Gamma_{12}}{(\Gamma_{12}\! -\!\Gamma_{21})^2\sqrt{\pi N}(N\! +\! 1)}+o\left( N^{-3/2}\right)\right].
	\end{equation}
	The infinite-temperature limit, $\beta =0$, is similarly analysed by using Eq.~\eqref{Eq:E_function_before_exp} and plugging $\Gamma_{12} = \Gamma_{21} = 1/2$. In this case, the scaling is slightly different and is given by
	\begin{equation}
		\eval{\mathbb{E}}_{\Gamma_{12} = \frac{1}{2}} = \frac{1}{\sqrt{\pi N}}+o\left(N^{-1/2}\right).
	\end{equation}
	
	The other half of estimating the convergence rate stems from considering Eq.~\eqref{Eq:error_function2}. Proceeding in the same manner as before, we re-write it as
	\begin{align}
		\mathbb{F} &= \frac{\Gamma^N_{12}}{N}\sum_{k=1}^N(N-k+1)\binom{N+k-2}{N-1}\Gamma_{21}^{k-1} \nonumber\\
		&=\Gamma^N_{12}\sum_{k=0}^{N-1}\binom{N+k-1}{k}\Gamma_{21}^{k} -\frac{\Gamma^N_{12}}{N}\sum_{k=1}^{N-1} k \: \binom{N+k-1}{k}\Gamma_{21}^{k} \nonumber \\
		&= [1-I_{\Gamma_{21}}(N,N)]-\frac{\Gamma_{21}}{\Gamma_{12}}[1-I_{\Gamma_{21}}(N-1,N+1)] \nonumber\\
		&=[1-I_{\Gamma_{21}}(N,N)]-\frac{\Gamma_{21}}{\Gamma_{12}}\qty[1-I_{\Gamma_{21}}(N,N)-\frac{(\Gamma_{21}\Gamma_{12})^N}{\Gamma_{21}NB(N,N)}] \nonumber \\
		&=\frac{(\Gamma_{12}-\Gamma_{21})}{\Gamma_{12}}\qty[1-I_{4\Gamma_{12}\Gamma_{12}}\qty(N,\frac{1}{2})]+\frac{(4\Gamma_{21}\Gamma_{12})^N}{2\Gamma_{12}}\frac{1}{\sqrt{\pi N}}.\label{Eq:error_function4}
	\end{align}
	Again, we expand Eq.~\eqref{Eq:error_function4} up to second order using Eq.~\eqref{Eq:regularised-beta-expansion-12}. Then, the gamma functions appearing in the expansions are simplified using Eq.~\eqref{Eq:Gamma-function-legendre}, the remaining ones are simplified by using Eq.~\eqref{Eq:beta-function-expansion-binomial}. Finally, using Stirling's approximation [Eq.~\eqref{Eq:Stirlings-approximation}], we arrive at 
	\begin{align}\label{Eq:convergence_F}
		\mathbbm{F} = \:& \qty(\frac{\Gamma_{12}-\Gamma_{21}}{2\Gamma_{12}}) \Biggl[ 2-\frac{(4\Gamma_{21}\Gamma_{12})^N}{\sqrt{1-4\Gamma_{21}\Gamma_{12}}}\frac{1}{\sqrt{\pi N}} \nonumber \\
		&\!\!\times \qty(1-\frac{4\Gamma_{21}\Gamma_{12}}{2(N+1)(1-4\Gamma_{21}\Gamma_{12})})\Biggr] \nonumber\\
		&+\frac{(4\Gamma_{21}\Gamma_{12})^N}{2\Gamma_{12}}\left[\frac{1}{\sqrt{\pi N}}+o(N^{-3/2})\right].
	\end{align}
	For $\Gamma_{12}>1/2$ (finite temperature case), we use the fact that \mbox{$\Gamma_{12}=1-\Gamma_{21}$}, to simplify the above as
	\begin{equation}
		\mathbb{F} \simeq \underbrace{\frac{\Gamma_{12} - \Gamma_{21}}{\Gamma_{12}}}_{=1 - e^{-\beta (E_2-E_1)}} + \underbrace{\Gamma_{21}\frac{(4\Gamma_{12}\Gamma_{21})^N}{(N+1)\sqrt{\pi N}\qty(\Gamma_{12} - \Gamma_{21})^2}}_{\equiv \mathbb{G}},
	\end{equation}
	where $\simeq$ hides the $o$-terms. The infinite-temperature limit, \mbox{$\beta =0$}, is obtained by using Eq.~\eqref{Eq:error_function4} and plugging \mbox{$\Gamma_{12} = \Gamma_{21} = 1/2$}. This yields the following convergence
	\begin{equation}
		\eval{\mathbb{F}}_{\Gamma_{12} = \frac{1}{2}} = \frac{1}{\sqrt{\pi N}}+o\left(N^{-1/2}\right)
	\end{equation}
	with the expression for $\mathbb{G}$ modified to \mbox{$(\pi N)^{-1/2}$}.
	
	As a final step to prove 
	Lemma~\ref{Lem:transposition} and Theorem~\ref{Thm:beta-swap}, we calculate explicitly the state of the primary system after the protocol~$\mathcal{P}^{(12)}$. This is done by substituting the results for $\mathbb{E}$ and $\mathbb{F}$ to Eq.~\eqref{eq:finalstate}, yielding
	\begin{align}
		\v q & 
		= \mqty(1 - e^{-\beta E} & 1 \\ e^{-\beta E} & 0 )\mqty(b\\c) + \qty(b\mathbb{G} - c \mathbb{E})\mqty(1 \\ -1) \nonumber\\
		& = \Pi^\beta_{12}\v p + \qty(b\mathbb{G} - c \mathbb{E})\mqty(1 \\ -1).
	\end{align}
	Thus, the distance between $\v q$ and the target $\Pi^\beta_{12} \v p$ is given by
	\begin{align}
		\delta\qty(\Pi^\beta_{12} \v p,\,\v q) & = \abs{b\mathbb{G} - c \mathbb{E}}.
	\end{align}
	For $\beta=0$ case, the above gives
	\begin{align}
		\delta\qty(\Pi^\beta_{12} \v p,\,\v q) & = \frac{\abs{b - c }}{\sqrt{\pi N}}+o\left(N^{-1/2}\right),
	\end{align}
	whereas $\beta\neq 0$ case, it gives
	\begin{align}
		\delta\qty(\Pi^\beta_{12} \v p,\,\v q) 
		&= (4\Gamma_{12}\Gamma_{21})^{N}\Bigg[\frac{\abs{b \Gamma_{21} - c\Gamma_{12}}}{(\Gamma_{12}-\Gamma_{21})^2\sqrt{\pi N}(N+1)}\nonumber\\
		&\qquad\qquad\qquad+o(N^{-3/2})\Bigg].
	\end{align}
	These prove Lemma~\ref{Lem:transposition} and Theorem~\ref{Thm:beta-swap} after going to the notation used therein, i.e., $b \rightarrow p_1,\,c\rightarrow p_2,\,\Gamma_{12}\rightarrow\Gamma_1$ and $\Gamma_{21}\rightarrow\Gamma_2$. \qed
	
	
	\section{Strengthening Corollary \ref{corr:general_bound}}
	\label{app:bound}
	
	Corollary~\ref{corr:general_bound} deals with a very general approach to bounding the distance between the target state $\v p^{\pi}$ and its approximation obtained from $\v{p}$ via the MeMTP protocol $\mathcal{P}^{\Pi}$. However, it can be improved by taking into account the set of indices on which the permutation acts.
	
	\begin{cor} \label{corr:detailed_bound}
		Consider states $\v{p}$ and $\v q\in C_+^{TO}(\v{p})$. Then, in the infinite temperature limit, $\beta=0$, and for an $N$-dimensional memory, there exists a MeMTP protocol $\mathcal{P}$ such that
		\begin{equation}
			\P(\v{p}\otimes \v{\eta}_M) = \v{q}'\otimes \v{\eta}_M,
		\end{equation}
		with
		\begin{equation}
			\delta(\v{q}',\v{q})\leq \frac{1}{2\sqrt{\pi N}} \sum_{\substack{k,l\\ k\neq l}} \abs{p_{i_k} - p_{i_l}} + o\qty(N^{-1/2})
			=: \epsilon,
		\end{equation}
		where $\qty{i_1,\hdots,i_{d'}}\subset\qty{1,\hdots,d}$ is a subset of indices neighbouring in the $\beta$-order, $\pi_{\v p}(i_{j}) + 1 = \pi_{\v p}(i_{j+1})$, such that \mbox{$\Pi_{\v q} = \prod_{i=1} \Pi_{j_i k_i}$} with $j_i,k_i\in\qty{i_1,\hdots,i_{d'}}$.
	\end{cor}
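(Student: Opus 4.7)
The plan is to repeat the scheme used in the proof of Corollary~\ref{corr:general_bound}, but refine the estimate of $\delta(\v{r},\Pi\v{p})$ so that the correction operator $\v{\Delta}$ from Theorem~\ref{Thm:permutation} is controlled in a way that depends on both the support of $\Pi$ and the actual values of $\v{p}$, rather than being bounded crudely by $d(d-1)/2$ times the sum of all entries.

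Concretely, let $\Pi$ be a permutation taking $\pi_{\v{p}}$ to $\pi_{\v{q}}$ whose non-fixed indices are exactly $I=\{i_1,\ldots,i_{d'}\}$. Fix a \emph{reduced} (minimal-length) decomposition $\Pi=\Pi_{i_m j_m}\cdots\Pi_{i_1 j_1}$ into neighbour transpositions with respect to the intermediate $\beta$-orderings; its length equals the number of inversions of $\Pi$ and satisfies $m\leq \binom{d'}{2}$. By Theorem~\ref{Thm:permutation} applied to this decomposition, together with the elementary identity $\|(\mathbb{1}-\Pi_{ij})\v{v}\|_1=2|v_i-v_j|$ and the fact that the outer permutations in each summand of $\v{\Delta}$ (see Eq.~\eqref{eq:Lambda_correction_op}) preserve the $\ell_1$-norm, I get
\begin{equation}
\tfrac{1}{2}\sum_i |(\v{\Delta}\v{p})_i| \;\leq\; \sum_{l=1}^m \bigl|p^{(l-1)}_{i_l}-p^{(l-1)}_{j_l}\bigr|,
\end{equation}
where $\v{p}^{(l-1)}=\Pi_{i_{l-1}j_{l-1}}\cdots\Pi_{i_1 j_1}\v{p}$. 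Each term on the right equals $|p_{a_l}-p_{b_l}|$ for some pair $\{a_l,b_l\}\subset I$.

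The next step is to argue that the multiset $\{\{a_l,b_l\}\}_{l=1}^m$ consists of distinct pairs. This is the standard reduced-word property of the symmetric group: in any reduced decomposition of $\Pi$ into adjacent transpositions, the pairs of values swapped are exactly the inversions of $\Pi$, each appearing once. Consequently,
\begin{equation}
\sum_{l=1}^m |p_{a_l}-p_{b_l}| \;\leq\; \sum_{1\leq k<l\leq d'} |p_{i_k}-p_{i_l}| \;=\; \tfrac{1}{2}\sum_{k\neq l}|p_{i_k}-p_{i_l}|.
\end{equation}
Multiplying by the prefactor $\epsilon=(\pi N)^{-1/2}$ from Theorem~\ref{Thm:permutation} gives $\delta(\v{r},\Pi\v{p})\leq \epsilon$ with the refined $\epsilon$ in the statement, up to $o(N^{-1/2})$ corrections.

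Finally, I close the argument exactly as in Corollary~\ref{corr:general_bound}: by Eq.~\eqref{eq:same_beta_order} there is an MTP $\P'$ mapping $\Pi\v{p}\mapsto\v{q}$ (since $\Pi\v{p}$ and $\v{q}$ share the $\beta$-ordering $\pi_{\v{q}}$ and $\v{q}\in C_+^{\mathrm{TO}}(\v{p})$), and setting $\P=(\P'\otimes\mathcal{I}_M)\circ\P^{\Pi}$ and using contractivity of the total variation distance under stochastic maps yields $\delta(\P(\v{p}\otimes\v{\eta}_M)|_S,\v{q})\leq \delta(\v{r},\Pi\v{p})$. The main obstacle is the distinctness-of-pairs step: one has to be sure that the decomposition used when invoking Theorem~\ref{Thm:permutation} is reduced with respect to the current $\beta$-order, so that no pair in $I$ is swapped twice. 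Once this combinatorial point is granted, everything else is bookkeeping and contractivity, both already appearing in the proof of Corollary~\ref{corr:general_bound}.
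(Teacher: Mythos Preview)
Your argument is correct, and it takes a route that differs from the paper's in two places.

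For the bound on $\delta(\v r,\Pi\v p)$, the paper does not work with a reduced word for the given $\Pi$. Instead it fixes the \emph{maximal} permutation $\Pi^*$ (the full reversal on $\{i_1,\ldots,i_{d'}\}$, requiring all $\binom{d'}{2}$ neighbour swaps), evaluates $\v\Delta$ for that case to obtain equality with $\tfrac{1}{2\sqrt{\pi N}}\sum_{k\neq l}|p_{i_k}-p_{i_l}|$, and then simply asserts that this value upper-bounds the error for every other $\Pi$ supported on $I$. Your reduced-word argument actually \emph{proves} that assertion: the standard fact that the value-pairs swapped along a reduced expression are exactly the inversions of $\Pi$, each once, is precisely what guarantees $\sum_l|p_{a_l}-p_{b_l}|\le\sum_{k<l}|p_{i_k}-p_{i_l}|$. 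So your step~1 is the more rigorous of the two.

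For the final step the paper does \emph{not} reuse the contractivity argument of Corollary~\ref{corr:general_bound}. It instead splits into two cases according to whether $\v q\in C_+^{\mathrm{MTP}}(\v r)$; if yes the distance is zero, if no it invokes planarity of the boundaries $\partial C_+^{\mathrm{MTP}}(\v r)$ and $\partial C_+^{\mathrm{MTP}}(\Pi\v p)$ within a fixed $\beta$-order chamber to transport the $\epsilon$-ball. Your route---apply the MTP $\P'$ taking $\Pi\v p\mapsto\v q$ (which exists by Eqs.~\eqref{eq:extremal_enough}--\eqref{eq:same_beta_order} since at $\beta=0$ the extreme point $\v p^{\pi_{\v q}}$ is literally $\Pi\v p$) and use contractivity of $\delta$---is shorter and avoids the geometric case split. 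The paper's version does buy a little extra: it shows that whenever $\v q$ already lies in $C_+^{\mathrm{MTP}}(\v r)$ the error is exactly zero, not merely $\le\epsilon$, which is the observation underlying the remark after their proof about the growing volume of exactly attainable targets.

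Your only self-flagged caveat---that the decomposition fed into Theorem~\ref{Thm:permutation} be reduced---is a non-issue: Theorem~\ref{Thm:permutation} holds for \emph{any} neighbour-transposition factorisation, and you are free to choose a reduced one when constructing $\P^\Pi$.
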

	
	\begin{proof}
		First, we consider a target state to be an extreme point~$\v p^{\pi^*}$ such that the $\beta$-order $\Pi_{\v p^*}\equiv\Pi^*$ can be decomposed into the maximal number of $d'(d'-1)/2$ neighbour swaps on the levels $i_1$ through $i_{d'}$. Taking explicitly Eq.~\eqref{eq:Lambda_correction_op} from Theorem~\ref{Thm:permutation}, one finds that
		\begin{equation}
			\delta\left(\v{p}^{\pi^*},\v r^*\right) = \frac{1}{2\sqrt{\pi N}} \sum_{\substack{k,l\\ k\neq l}} \abs{p_{i_k} - p_{i_l}} + o\qty(N^{-1/2}),
		\end{equation}
		where for convenience we used $\mathcal{P}^{\Pi^*}(\v p\otimes\eta_M) = \v r^* \otimes \eta_M$.
		We note that the above expression in fact provides a general upper bound for any permutation $\Pi$ on the aforementioned subset of $d'$ levels -- defining $\mathcal{P}^{\Pi}(\v p\otimes\eta_M) = \v r \otimes\eta_M$ we find that
		\begin{equation}
			\delta\left(\Pi\v{p},\v r\right) \leq \epsilon.
		\end{equation}
		
		Now, there are two cases to be considered. First, take a state $\v{q}$ which is in the future of the approximation point $\v r$, $\v q \in C_+^{MTP}\qty(\v r)$, from which it follows that
		
		\begin{equation} \label{eq:exact_achiev}
			\exists\mathcal{O}\in\text{MTP}:\delta\left(\v{q},\mathcal{O}(\v r)\right) = 0.
		\end{equation}
		Otherwise, $\v q$ is not in the future cone of $\v r$. In this case, we first note that there exists a ball $B(\v r,\epsilon')\ni \Pi\v p$ with radius $\epsilon' \leq \epsilon$ with respect to $\delta(\cdot,\cdot)$. Thanks to the planarity of the boundaries $\partial C_+^{MTP}\qty(\v r)$ and $\partial C_+^{MTP}\qty(\Pi \v p)$ when restricted to a fixed $\beta$-order, we can consider the extreme case
		\begin{equation}
			\v q \in \partial C_+^{MTP}\qty(\Pi \v p) \!\Rightarrow\! \exists \v r'\!\in\! \partial C_+^{MTP}\qty(\v r):\delta\qty(\v{q},\v{r}')\!\leq\! \epsilon'\! \leq\! \epsilon
		\end{equation}
		and the same argument applies for any \mbox{$\v q \in C_+^{MTP}\qty(\Pi \v p)\backslash C_+^{MTP}\qty(\v r)$}, thus concluding the proof.
	\end{proof}
	
	The bound presented in Corollary~\ref{corr:detailed_bound} can be further improved by taking into account the possibility of dividing the set $\qty{i_1,\hdots,i_{d'}}$ into subsets that are not mixed at any step when considering the decomposition of $\Pi_{\v q}$ into neighbour transpositions. Finally, we point out that, in agreement with Eq.~\eqref{eq:exact_achiev}, there will exist such states $\v{q}$ that are attainable exactly, and moreover, their volume will increase together with the size of memory $N$.
	
	
	\section{Proof of Theorem~\ref{Thm:beta-3-cycle}}
	\label{app:beta-3-cycle}
	
	To prove Theorem~\ref{Thm:beta-3-cycle}, we will consider a composite system consisting of the main $d$-dimensional system and an $N$-dimensional memory system. Without loss of generality, we can simply assume that the main system has three levels with $i_1 = 1, i_2 = 2$ and $i_3 = 3$, and its state is described by an unnormalised probability vector \mbox{$\v p = (a, b, c)$}. The Hamiltonian is then given by \mbox{$H =\sum_{i=1}^3 E_i \ketbra{E_i}{E_i}$}, while the memory system is described by a trivial Hamiltonian $H_M = 0$ and prepared in a maximally mixed state \mbox{$\v \eta_M = (1/N, ..., 1/N)$}. The joint state of the composite system, \mbox{$\v r:= \v p \otimes \v \eta_M$}, is then given by
	\begin{equation}
		\label{Eq:entries_three_level_system}
		\v r^{(0)} \equiv \v r = \frac{1}{N}\Big(\underbrace{a, ..., a}_{N \text{  times}} | \underbrace{b, ..., b}_{N \text{  times}}| \underbrace{c, ..., c}_{N \text{  times}}\Big),
	\end{equation}
	and the joint thermal state is given by
	\begin{align}
		\v \Gamma & = \frac{1}{ZN}[e^{-\beta E_1}, ...,e^{-\beta E_1}| e^{-\beta E_2}, ..., e^{-\beta E_2}|e^{-\beta E_3}, ..., e^{-\beta E_3}] \nonumber \\
		& = \frac{1}{N}\qty(\gamma_1, ..., \gamma_1 | \gamma_2, ..., \gamma_2 | \gamma_3, ..., \gamma_3),
	\end{align}
	where $Z=\sum_{i=1}^3 e^{-\beta E_i}$.
	
	As before, the starting point consists of understanding how the joint state of the composite system changes under the action of the composite protocol $\widetilde{\mathcal{P}}^{(13)}_N \circ  \widetilde{\mathcal{P}}^{(23)}_N$, whose action is summarised in two steps:
	\begin{enumerate}
		\item Two-level thermalisation between second and third energy levels. 
		\item Two-level thermalisation between first and third energy levels.
	\end{enumerate}
	The final state $\v r^{(N)}$ is then given by
	\begin{align}
		\!\!\! \widetilde{\P}^{\Pi}(\v r) &\equiv \v r^{(N)} = \v q \otimes \v \eta_M  \nonumber \\&=\frac{1}{N}{\small\qty[a^{(N)}_1,\hdots,a^{(N)}_N \, \bigg|\, b^{(N)}_1,\hdots,b^{(N)}_N \, \bigg| \, c^{(N)}_{1},\hdots, c^{(N)}_{N}]},\label{eq:beta-3-cycle-final}
	\end{align}
	where $\Pi=\Pi_{\Gamma_{23}}\Pi_{\Gamma_{23}}$. Note that due to probability conservation, characterising the second and third entries of Eq.~\eqref{eq:beta-3-cycle-final} is sufficient.
	
	After the first protocol $\widetilde{\mathcal{P}}^{23}$, the second energy level remains ``untouched'' and, as a result, its entries are given by Eq.~\eqref{Eq:bj_entries} (with $\Gamma_{12}$ and $\Gamma_{21}$ replaced by $\Gamma_{23}$ and $\Gamma_{32}$, respectively, as defined in \eqref{Eq:rescaled-Gibbs-factor}). The other two entries are obtained in a similar way as Eqs.~\eqref{Eq:recurrence-relation-entries_bj}-\eqref{Eq:recurrence-relation-entries_cj}. The action of the protocol generates a recurrence formula that allows us to write the last entry $c^{(N)}_k$ as
	\begin{align}
		\!\!\!\! c^{(N)}_k = a\Gamma^{k-1}_{13} \sum_{i=1}^N &\binom{N+k-1-i}{k-1}\Gamma_{31}^{N+1-i}\nonumber\\&\quad \quad\quad\quad+\Gamma^N_{31}\sum_{l=0}^{N-1}\Gamma_{13}^{l}\binom{N+l-1}{l}c_{k-l},
	\end{align}
	where $c_k$ is given by
	\begin{equation}\label{ck}
		c_k = c\Gamma^N_{32} \sum_{i=0}^{k-1} \binom{N+i-1}{i}\Gamma^{i}_{23} +b\Gamma_{32}\Gamma^{k}_{23}\sum_{i=0}^{N-1} \binom{i+k}{k}\Gamma_{32}^{i}.
	\end{equation}
	
	Since, without loss of generality, we assumed that $\v p$ has $\beta$-ordering $(123)$, the proof boils down to demonstrating that $\widetilde{\P}^{\Pi}(\v r)$ sends $\v p$ to the following extreme point
	\begin{equation}
		\v p^{(321)} = \qty[a + \frac{\Gamma_{32}}{\Gamma_{23}}b - a \frac{\Gamma_{31}}{\Gamma_{13}}, c+b\left(1-\frac{\Gamma_{32}}{\Gamma_{23}}\right), \frac{\Gamma_{31}}{\Gamma_{13}}a].
	\end{equation}
	Therefore, we need to prove the following limits
	\begin{subequations}
		\begin{align}\label{Eq:first-limit-beta-cyclic-perm}
			\lim_{N \to \infty}\frac{1}{N}\sum_{i=1}^N b^{(N)}_i &= c+b \, \qty(1-\frac{\Gamma_{32}}{\Gamma_{23}}), \\ \lim_{N \to \infty}\frac{1}{N}\sum_{i=1}^N c^{(N)}_i &= a \frac{\Gamma_{31}}{\Gamma_{13}}. \label{Eq:second-limit-beta-cyclyc-perm}
		\end{align}
	\end{subequations}
	
	
	\subsection{Proof of limit (\ref{Eq:first-limit-beta-cyclic-perm}) }
	
	We start by recalling that $b^{(N)}_j$ is given by:
	\begin{equation}\label{Eq:bjn}
		b_j^{(N)}\! =c\! \frac{\Gamma_{23}}{\Gamma_{32}}\Gamma_{32}^{j}\sum_{i=0}^{N-1}\binom{j\! +\! i\! -\! 1}{i}\Gamma_{23}^{i}+b\Gamma_{23}^{N}\sum_{i=1}^{j}\binom{j\! +\! N\! -\! 1\! -\! i}{N\! -\! 1}\Gamma_{32}^{j-i}.
	\end{equation}
	Comparing Eqs. \eqref{Eq:bjn} and the right-hand side of \eqref{Eq:first-limit-beta-cyclic-perm}, we see that in order to prove Eq.~\eqref{Eq:first-limit-beta-cyclic-perm}, we need to prove the following two limits: 
	\begin{align}\label{eq:b_part_1}
		\lim_{N \to \infty}\sum_{j=1}^N\eval{\frac{b^{(N)}_j}{N}}_{\substack{b = 0\\c=1}} &=\lim_{N\rightarrow\infty}\frac{\Gamma_{23}}{\Gamma_{32}}\frac{1}{N}\sum_{j=1}^N\Gamma_{32}^j\sum_{i=0}^{N-1}\binom{j+i-1}{i}\Gamma_{23}^{i} \nonumber \\ &=1 
	\end{align}
	and
	\begin{align}\label{eq:b_part_2}
		\lim_{N \to \infty}\sum_{j=1}^N\eval{\frac{b^{(N)}_j}{N}}_{\substack{b = 1\\c=0}} &=\lim_{N\rightarrow\infty} \Gamma_{23}^{N}\frac{1}{N}\sum_{j=1}^N\Gamma_{32}^j\sum_{i=1}^j\binom{j+N-1-i}{N-1}\Gamma_{32}^{-i} \nonumber \\ &=\qty(1-\frac{\Gamma_{32}}{\Gamma_{23}}).  
	\end{align}
	
	We begin by proving Eq.~\eqref{eq:b_part_1}. First, we rewrite this expression as:
	\begin{align}
		\sum_{j=1}^N \eval{\frac{b^{(N)}_j}{N}}_{\substack{b = 1\\c=0}} &=\frac{1}{N} \frac{\Gamma_{23}}{\Gamma_{32}}\sum_{i=0}^{N-1}\Gamma_{23}^{i}\sum_{j=1}^N\Gamma_{32}^j\binom{j+i-1}{i} \nonumber\\&=  \frac{1}{N} \Gamma_{23}\sum_{i=0}^{N-1}\Gamma_{23}^{i}\sum_{j=0}^{N-1}\Gamma_{32}^j\binom{j+i}{i}.\label{eq:rewrite_b}
	\end{align} 
	We can evaluate the second sum in Eq.~\eqref{eq:rewrite_b} as follows:
	\begin{align}
		\sum_{j=0}^{N-1}\Gamma_{32}^j\binom{j+i}{i} &=  (\Gamma_{23})^{-i-1}\Bigg(1-\frac{B_{\Gamma_{32}}(N,i+1)}{B(N,i+1)}\Bigg)\nonumber\label{eq:second_sum} \nonumber \\ &=(\Gamma_{23})^{-i-1}[1-I_{\Gamma_{23}}(N,i+1)].
	\end{align}
	Thus, substituting Eq.~\eqref{eq:second_sum} into Eq.~\eqref{eq:rewrite_b}, we obtain
	\begin{align}
		\sum_{j=1}^N \eval{\frac{b^{(N)}_j}{N}}_{\substack{b = 1\\c=0}} =1-\frac{1}{N}\sum_{i=0}^{N-1}I_{\Gamma_{32}}(N,i+1).\label{eq:expr_b}
	\end{align}
	Using Eq.~\eqref{Eq:beta-regularised-limit-sum}, we conclude that the second term in Eq.~\eqref{eq:expr_b} vanishes in the limit of $N\rightarrow \infty$, and therefore
	\begin{equation}
		\lim_{n \to \infty}\sum_{j=1}^N\eval{b^{(N)}_j}_{\substack{b = 0\\c=1}} = 1,\label{eq:beta-3-cycle-first-lim}
	\end{equation}
	so that we have proved Eq.~\eqref{eq:b_part_1}. 
	
	To prove Eq.~\eqref{eq:b_part_2}, we begin by manipulating it so that we can express it in a simpler form:\! 
	\begin{align}\label{eq:simplified_b_semistep} 
		\sum_{j=1}^N \eval{\frac{b^{(N)}_j}{N}}_{\substack{b = 1\\c=0}}\!\!\!\! &= \frac{\Gamma_{23}^{N}}{N}\sum_{j=1}^N\Gamma_{32}^j\sum_{i=1}^j\binom{N+j-1-i}{N-1}\Gamma_{32}^{-i}\nonumber \\ 
		&= \frac{\Gamma_{23}^{N}}{N}\sum_{j=0}^{N-1}(N-j)\binom{N+j-1}{j}\Gamma_{32}^{j}\nonumber \\
		&=
		\Gamma_{23}^{N}\sum_{j=0}^{N-1}\binom{N\! +\! j\! -\! 1}{j}\Gamma_{32}^{j}-\frac{\Gamma_{23}^{N}}{N}\sum_{j=0}^{N-1}j\,\binom{N\! +\! j\! -\! 1}{j}\Gamma_{32}^{j}.
	\end{align} 
	Applying Eq.~\eqref{Eq:incompleta-beta-function-difference} to transform the first term of Eq.~\eqref{eq:simplified_b_semistep} into a difference of regularised beta functions, and then using its asymptotic expansion, we obtain
	\begin{equation}
		\Gamma_{23}^{N}\sum_{j=0}^{N-1}\binom{N+j-1}{j}\Gamma_{32}^{j} = I_{\Gamma_{32}}(0,N)-I_{\Gamma_{32}}(N,N)\simeq 1.
	\end{equation}
	Next, we consider the second term in Eq.~\eqref{eq:simplified_b_semistep}, which can be directly evaluated as
	\begin{align}
		-\frac{\Gamma_{23}^{N}}{N}&\sum_{j=0}^{N-1}j\binom{N+j-1}{j}\Gamma_{32}^{j} \nonumber\\
		&=-\frac{\Gamma_{32}}{\Gamma_{23}}\Gamma_{23}^{N+1}\sum_{j=-1}^{N-2}\binom{(N+1)+j-1}{j}\Gamma_{32}^j\nonumber\\
		&=-\frac{\Gamma_{32}}{\Gamma_{23}}\qty[(I_{\Gamma_{32}}(0,N+1)-I_{\Gamma_{32}}(N-1,N+1)] \nonumber\\ &\simeq -\frac{\Gamma_{32}}{\Gamma_{23}},
	\end{align}
	where in the last line we used the asymptotic expansion of $I_x(a,b)$ to approximate the difference between regularised beta functions. Collecting all the terms, we conclude that the limit is given by
	\begin{equation}
		\lim_{N \to \infty}\sum_{j=1}^N\eval{\frac{b^{(N)}_j}{N}}_{\substack{b = 1\\c=0}}  = 1-\frac{\Gamma_{32}}{\Gamma_{23}}.
	\end{equation}
	Therefore, combining the above with Eq.~\eqref{eq:beta-3-cycle-first-lim}, we get the desired limit: 
	\begin{equation} \label{eq:D_part1}
		\lim_{N \to \infty}\frac{1}{N}\sum_{j=1}^N \frac{b^{(N)}_j}{N} = c+b \, \qty(1-\frac{\Gamma_{32}}{\Gamma_{23}}) .
	\end{equation}
	
	
	\subsection{Proof of limit (\ref{Eq:second-limit-beta-cyclyc-perm})}
	
	As before, in order to prove Eq.~\eqref{Eq:second-limit-beta-cyclyc-perm}, we will also need to prove two other limits. Recall that $c^{(N)}_j$ is given by
	\begin{align}
		\!\!\!\! c^{(N)}_j = a\Gamma^{j-1}_{13} \sum_{i=1}^N &\binom{N+j-1-i}{j-1}\Gamma_{31}^{N+1-i}\nonumber\\&\quad \quad\quad\quad+\Gamma^N_{31}\sum_{l=0}^{N-1}\Gamma_{13}^{l}\binom{N+l-1}{l}c_{j-l},
	\end{align}
	with
	\begin{equation}\label{ck2}
		c_j = c\Gamma^N_{32} \sum_{i=0}^{j-1} \binom{N+i-1}{i}\Gamma^{i}_{23} +b\Gamma_{32}\Gamma^{j}_{23}\sum_{i=0}^{N-1} \binom{i+j}{j}\Gamma_{32}^{i}.
	\end{equation}
	Since $c_{j-l}$ does not depend on $a$, the problem reduces to showing that
	\begin{align}
		\lim_{N\to \infty}\frac{1}{N}\sum_{j=1}^N\eval{c^{(N)}_j}_{\substack{a = 1\\b,c=0}} &=\lim_{N\rightarrow\infty}\frac{1}{N}\sum_{j=1}^N\Gamma_{13}^{j}\sum_{i=1}^{N}\binom{N+j-i}{j}\Gamma_{31}^{N+1-i} \nonumber \\ &=\frac{\Gamma_{31}}{\Gamma_{13}},\label{eq:c_part_1}
	\end{align}
	and 
	\begin{align}
		\lim_{N\to \infty}\frac{1}{N}\sum_{j=1}^N \eval{c^{(N)}_j}_{\substack{a = 0}} &= \lim_{N\rightarrow\infty}\frac{1}{N}\sum_{j=1}^N\Gamma_{31}^{N}\sum_{l=0}^{N-1}\Gamma_{13}^{l}\binom{N+l-1}{l}c_{j-l}\nonumber\\&=0.\label{eq:ckl}
	\end{align}
	
	Let us start by proving Eq.~\eqref{eq:c_part_1}. First, we manipulate $c^{(N)}_j$ and rewrite it in terms of the incomplete beta function as follows:
	\begin{align}
		\eval{c^{(N)}_j}_{\substack{a = 1\\b,c=0}}&=\Gamma_{13}^{j-1}\sum_{i=1}^{N}\binom{N+j-1-i}{j-1}\Gamma_{31}^{N+1-i}\nonumber\\
		&=\sum_{i=1}^N\qty[(I_{\Gamma_{13}}(0,N+1-i)-I_{\Gamma_{13}}(N,N+1-i)]\nonumber\\ &= \sum_{i=1}^N I_{\Gamma_{31}}(N+1-i,N).
	\end{align}
	Using Eq.~\eqref{Eq:regularised-beta-function-sum-2}, we obtain 
	\begin{eqnarray}
		\sum_{i=1}^N I_{\Gamma_{31}}(N+1-i,N) = N\frac{\Gamma_{31}}{\Gamma_{13}}.
	\end{eqnarray}
	Thus, collecting all the terms, we get the desired limit
	\begin{equation} \label{eq:D_part2}
		\lim_{N\to \infty}\frac{1}{N}\sum_{j=1}^N\eval{c^{(N)}_j}_{\substack{a = 1\\b,c=0}}=\frac{\Gamma_{31}}{\Gamma_{13}}.
	\end{equation}
	
	The final step is to show that the remaining limit from Eq.~\eqref{eq:ckl} is zero, namely
	\begin{equation}
		\lim_{N\rightarrow\infty}\frac{1}{N}\Gamma_{31}^N\sum_{k=1}^N\sum_{l=0}^{N-1}\Gamma_{13}^lc_{k-l}=0.
	\end{equation}
	Since $c_{k-l}$ has two contributions, one needs to show that both limits go to zero. Treating each separately, we first write the first term of Eq.~\eqref{ck2} in terms of the incomplete beta function,
	\begin{equation}
		\eval{c_k}_{\substack{c = 1, b=0}} = \Gamma^N_{32} \sum_{i=0}^{k-1} \binom{N+i-1}{i}\Gamma^{i}_{23} = I_{\Gamma_{32}}(N,k) \leq 1,
	\end{equation}
	where the upper bound comes from the incomplete beta function being a CDF. Thus,
	\begin{align}
		\sum_{k=1}^N \frac{\Gamma^N_{31}}{N}\sum_{l=0}^{N-1}\Gamma_{13}^{l}&\binom{N+l-1}{l}\eval{c_{k-l}}_{\substack{c = 1, b=0}} \nonumber \\ &\leq \sum_{k=1}^N \frac{\Gamma^N_{31}}{N}\sum_{l=0}^{N-1}\Gamma_{13}^{l}\binom{N+l-1}{l} \nonumber \\ &= I_{\Gamma_{31}}(N,N), \label{eq:D_part3_1}
	\end{align}
	and this term goes to zero for $\Gamma_{31} \leq 1/2$. This can be seen from the asymptotic expansion of $I_{\Gamma_{31}}(N,N)$. 
	
	Finally, we need to show that the second term of Eq.~\eqref{ck2} is zero. To do so, we first re-write the second term as
	\begin{align}
		\!\!\!\! \sum_{k=1}^{N}\eval{c^{(N)}_k}_{\substack{b = 1\\a,c=0}}\!\! & = \frac{1}{N}\Gamma_{31}^N\sum_{l=0}^{N-1}\Gamma_{13}^l \binom{N+l-1}{l}\sum_{k=1}^N\eval{c_{k-l}}_{\substack{c = 0\\b=1}}  \nonumber \\
		& =
		\frac{1}{N}\Gamma_{31}^N\sum_{l=0}^{N-1}\Gamma_{13}^l \binom{N+l-1}{l}\sum_{k=l+1}^N\eval{c_{k-l}}_{\substack{c = 0\\b=1}} \nonumber \\
		& =
		\frac{1}{N}\Gamma_{31}^N\sum_{l=0}^{N-1}\Gamma_{13}^l \binom{N+l-1}{l}\sum_{k=1}^{N-l}\eval{c_{k}}_{\substack{c = 0\\b=1}}. 
	\end{align}  
	Notice that the above expression can be bounded by
	\begin{align}\label{Eq:bound_ck}
		\sum_{k=1}^{N}\eval{c^{(N)}_k}_{\substack{b = 1\\a,c=0}} &\leq
		\frac{1}{N}\qty(\Gamma_{31}^N\sum_{l=0}^{N-1}\Gamma_{13}^l \binom{N+l-1}{l})\qty(\sum_{k=1}^{N}\eval{c_{k}}_{\substack{c = 0\\b=1}} ), \nonumber 
	\end{align} 
	and the right-hand side of equation Eq.~\eqref{Eq:bound_ck} can be expressed in terms of the regularised beta function as follows:
	\begin{align}
		\frac{1}{N}\Gamma_{31}^N&\sum_{l=0}^{N-1}\Gamma_{13}^l \binom{N+l-1}{l}\qty(\sum_{k=1}^{N}\eval{c_{k}}_{\substack{c = 0\\b=1}} ) \nonumber\\
		& =
		\Gamma_{32}\frac{1}{N}I_{\Gamma_{31}}(N,N)\sum_{k=1}^N\Gamma^{k}_{23}\sum_{i=0}^{N-1} \binom{i+k}{k}\Gamma_{32}^{i} \nonumber\\
		& =
		\Gamma_{32}\frac{1}{N}I_{\Gamma_{31}}(N,N)\sum_{k=1}^N\Gamma^{k}_{23}\Gamma_{23}^{-1-k}\qty[1 - I_{\Gamma_{32}}(N,k+1)]\nonumber \\
		& =
		\frac{\Gamma_{32}}{\Gamma_{23}}\frac{1}{N}I_{\Gamma_{31}}(N,N)\qty(\frac{1}{N}\sum_{k=1}^N I_{\Gamma_{23}}(k+1,N)).
	\end{align}
	Now, note that the first term goes to zero when $\Gamma_{31} < 1/2$, whereas the second term is also bounded by one as
	\begin{equation}
		\frac{1}{N}\sum_{k=1}^NI_{\Gamma_{23}}(k+1,N) \leq \frac{1}{N}\sum_{k=1}^N 1 = 1.
	\end{equation}
	Since 
	\begin{equation}\label{eq:D_part3_2}
		0 \leq \lim_{N\to \infty}\sum_{k=1}^N\frac{1}{N}\eval{c^{(N)}_k}_{\substack{b = 1\\a,c=0}} \leq 0,
	\end{equation}
	we conclude that the resulting limit is zero. Therefore, combining \eqref{eq:D_part3_1} and \eqref{eq:D_part3_2}, 
	\begin{equation}\label{eq:D_part3}
		\lim_{N\rightarrow\infty}\frac{1}{N}\Gamma_{31}^N\sum_{k=1}^N\sum_{l=0}^{N-1}\Gamma_{13}^lc_{k-l}=0.
	\end{equation}
	
	Collecting together Eqs.~\eqref{eq:D_part1}, \eqref{eq:D_part2} and \eqref{eq:D_part3} we conclude that Theorem~\ref{Thm:beta-3-cycle} is proved.\qed
	
	\section{Alternative protocols realising equivalent \texorpdfstring{$\beta$}{beta}-swap approximation}\label{App:protocols}
		
		Let us revisit the truncated protocol $\tilde{\mathcal{P}}^{(ij)}$ with $d$-dimensional memory as introduced in Section~\ref{sec:bridging}, where it was defined in Eq.~\eqref{eq:trunc_protocol}. This protocol is composed of $d^2$ two-level elementary thermalisations, denoted as $T_{kl}$. Each thermalisation $T_{kl}$ can be represented as a point $(k,l)$ on a plane, and an algorithm can be represented as an arrow pointing from the previous thermalisation to the next one. For instance, we present below the diagram of $\tilde{\mathcal{P}}$ for $d = 7$:
		
		\begin{figure}[H]
			\centering
			\includegraphics{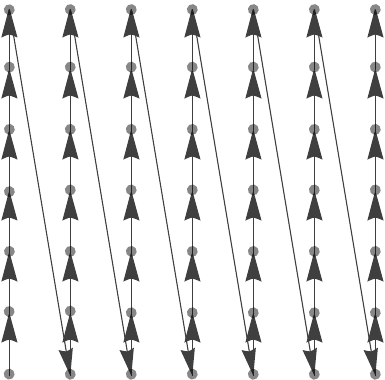}
		\end{figure}
		\noindent Visually, it is obvious that we iterate through an entire column before shifting to the next one. In other words, all the 'filled' levels are used sequentially to fill up the first 'empty' level, and the same process is repeated for all the subsequent 'empty' levels. After investigating the following two algorithms
		\begin{figure}[H]
			\centering
			\includegraphics{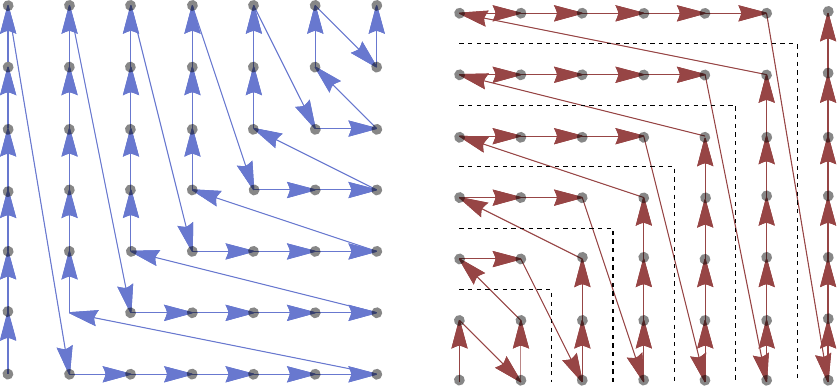}
		\end{figure}
		\noindent The generalisation of the Blue algorithm can be summarised in the following way, starting with $i = 1$:
		\begin{enumerate}
			\item Iterate through $i$-th column, starting from the first unvisited point.
			\item Iterate through $i$-th row, starting from the first unvisited point.
			\item Set $i\rightarrow i+1$ and go back to step 1 if any unvisited point remains.
		\end{enumerate}
		The Red algorithm can be most easily understood as the reverse of the Blue algorithm. Instead of decreasing the length of vertical and horizontal stretches, they are gradually increased in the Red algorithm. This allows the Red algorithm to be recursively implemented, taking into account gradually more and more levels of memory, as indicated by dashed lines. The protocol for $d$-dimensional memory is implemented by extending the $d-1$-dimensional version with an additional row and column.
		
		We furthermore investigated the Cyan family, which mimics the Blue algorithm and can be defined in the following manner:
		\begin{enumerate}
			\item Iterate through row or column, starting from the first unvisited point.
			\item Repeat step 1. until no unvisited points remain.
		\end{enumerate}
  \break
		Moreover, we considered Orange family related to the Cyan family with an analogous reversal as between Blue and Red.
		
		\begin{figure}[t]
			\centering
			\includegraphics[width=.2\textwidth]{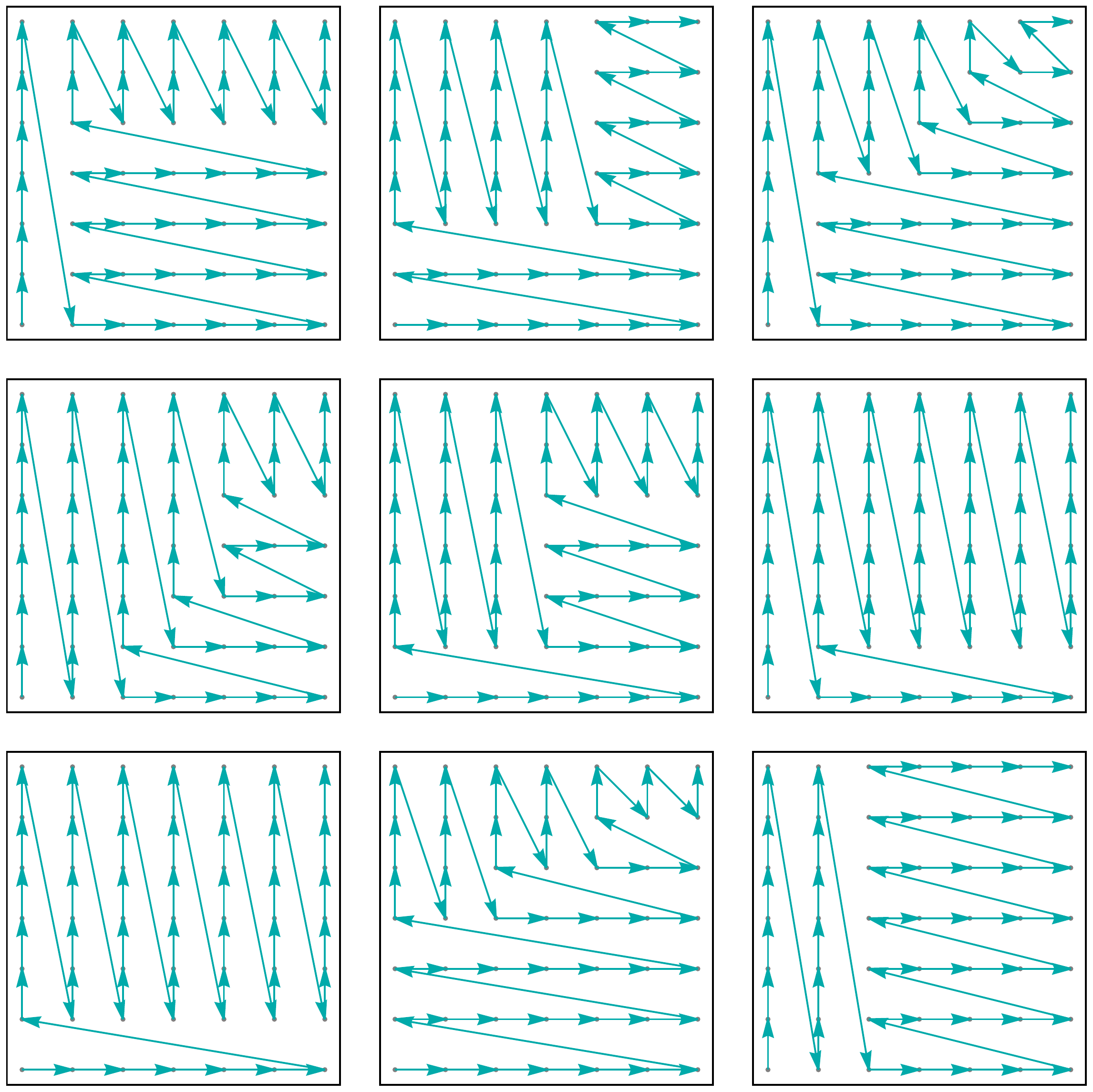}
			\hspace{.5cm}
			\includegraphics[width=.2\textwidth]{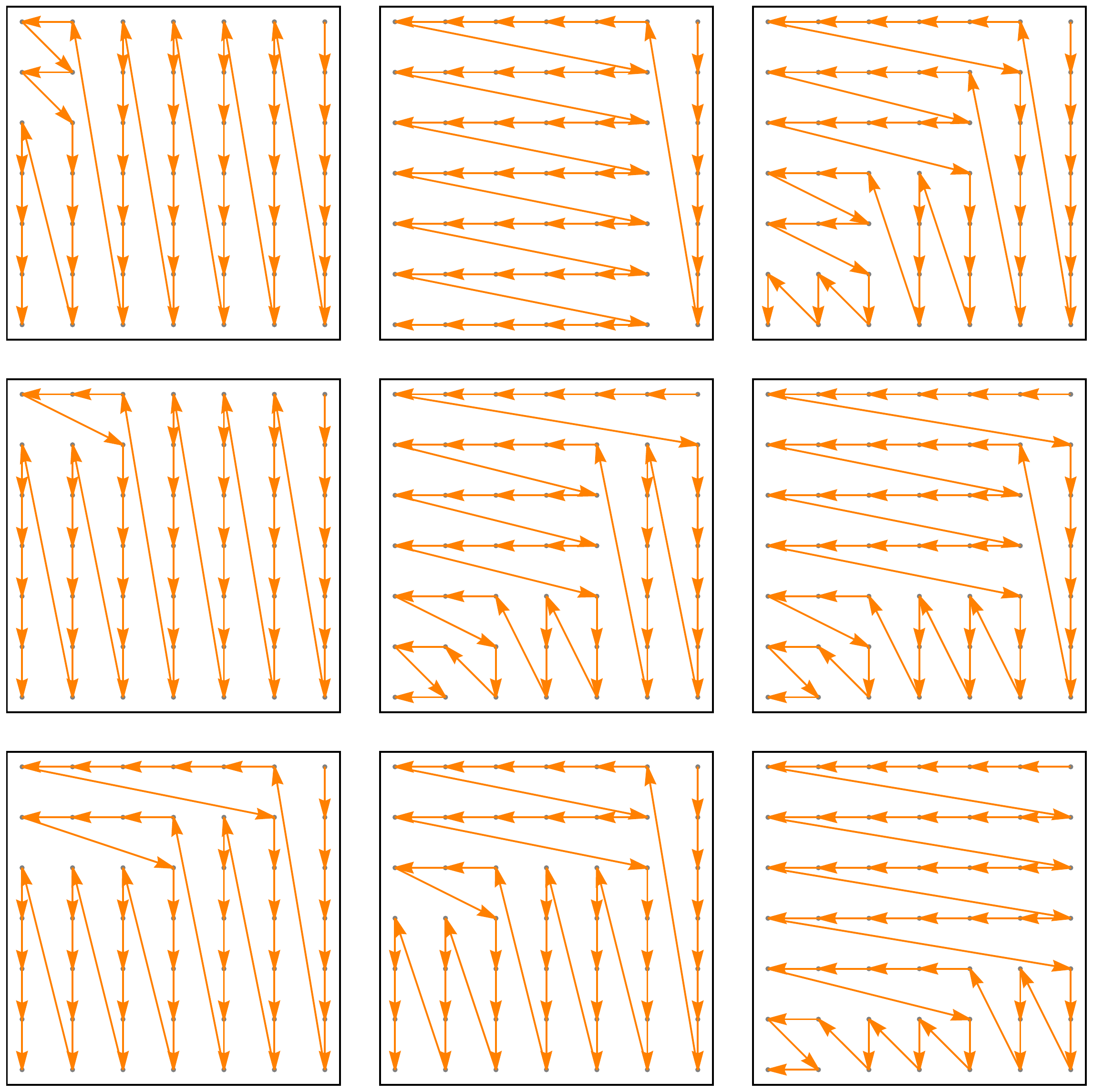}
		\end{figure}
  
		\begin{figure}[t]
			\centering
			\includegraphics{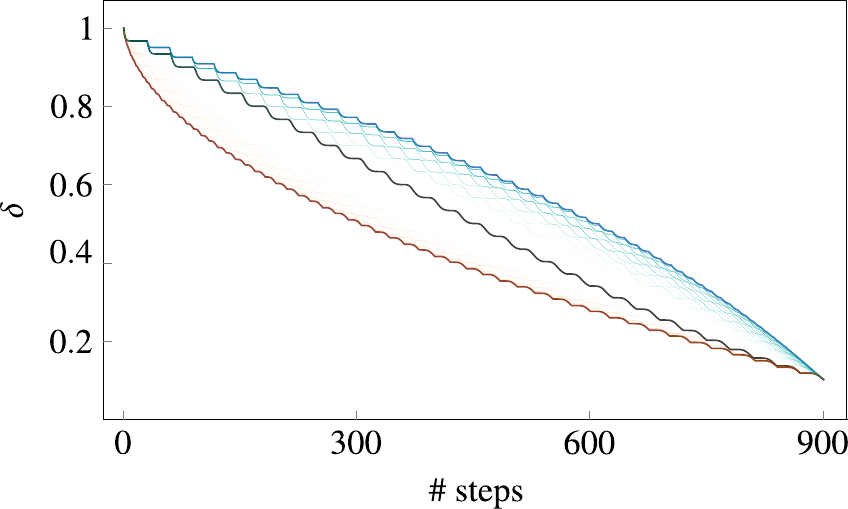}
			\caption{Convergence to the target state $\v{q} = (0,1)$ of different algorithms acting on the state $\v{p} = (1,0)$ extended by memory with dimension $d = 30$. Note that all algorithms. Note that all algorithms finish at the same value of $\norm{\v{p} - \v{q}}_1$.}
			\label{fig:enter-label}
		\end{figure}
		\noindent For these algorithms, we find the following properties we have observed from explicit implementation for a range of dimensions and inverse temperatures $\beta$, but we have not been able to prove them analytically:
		\begin{itemize}
			\item All of the aforementioned algorithms acting on an initial state $\v{p}\otimes\gamma_d$ result in the same state as $\tilde{\mathcal{P}}(\v{p}\otimes \v \gamma_d)$.
			\item Blue and Red algorithms are slowest and fastest algorithms, respective, according to the convergence to the $\beta$-swap with respect to the 1-norm.
			\item Each algorithm in the Cyan and Orange family provide slower and faster convergence than the original algorithm $\tilde{\mathcal{P}}$, respectively.
		\end{itemize}
		The statement reinforces the claims mentioned earlier and explains that the 1-norm between the intermediate states of the system and the target state (the $\beta$-swapped counterpart of $\v{p}$ with $\beta = 0$) is plotted for simplicity (see Fig.~\ref{fig:enter-label}). The memory dimension is specified as $d = 30$.
	\bibliographystyle{apsrev4-2}
	\bibliography{references}
	
\end{document}